\newcommand{\cmark}{\ding{51}}%
\newcommand{\xmark}{\ding{55}}%
\newcommand{\fm}{f-measure}
\newcommand{\fmspace}{f-measure }
\colorlet{mygray}{black!60}
\tikzset{thicker line small arrows m/.style args={#1in#2}{
		draw=#2,
		solid,
		line width=#1,
		shorten >=1mm,
		decoration={
			markings,
			mark=at position 1.0 with {\arrow[fill=#2,thin]{triangle 90}}
		},
		postaction={decorate}
}}
\tikzset{
	join/.code=\tikzset{after node path={%
			\ifx\tikzchainprevious\pgfutil@empty\else(\tikzchainprevious)%
			edge[every join]#1(\tikzchaincurrent)\fi}
	},
	>=stealth',
	every on chain/.append style={join},
	every join/.style={->},
	labeled/.style={
		execute at begin node=$\scriptstyle, execute at end node=$
	}
}
\tikzstyle{block} = [rectangle, draw, fill=blue!20, text width=5em, text 
\tikzstyle{line} = [draw, -latex']
\tikzstyle{map} = [draw, node distance=3cm, minimum height=2em]
\tikzstyle{out2} = [node distance=3cm, minimum height=2em]
\tikzstyle{reg} = [diamond, draw, fill=red!20, text width=2.6em, text badly 
\tikzset{every picture/.style={line width=.7pt}}
\tikzstyle{every node}=[draw=black,thick,anchor=west]
\tikzstyle{schema}=[draw=black,fill=gray!30, very thick]
\tikzstyle{table}=[dashed,fill=gray!15]
\tikzstyle{refkey}=[dashed]
\definecolor{Gray}{gray}{0.9}
\newcommand{\add}[1]{\textcolor{black}{{#1}}} % Addition, to include in the cover letter
\newcommand{\sysName}{{\sf PoWareMatch}}
\newcommand{\sysNameSpace}{{\sf PoWareMatch }}
\newcommand{\deltacorr}{\Delta}
\newtheorem{example}{Example}
\newtheorem{prob}{Problem}
\newtheorem{definition}{Definition}
\newtheorem{prop}{Proposition}
\newtheorem{corol}{Corollary}
\newtheorem{lemma}{Lemma}
\newtheorem{theorem}{Theorem}
\newtheorem{calc}{Computation}
\def\@xfootnote[#1]{%
	\protected@xdef\@thefnmark{#1}%
	\@footnotemark\@footnotetext}
\newcommand{\TechReport}{}
	\providecommand\BibTeX{{%
			\normalfont B\kern-0.5em{\scshape i\kern-0.25em b}\kern-0.8em\TeX}}}
\begin{document}

%\title{Matching as a Process: The Case of a Human Matcher}
\ifdefined\TechReport
%\title{{\sf PoWareMatch}: a Quality-aware Deep Learning Approach to Improve Human Schema Matching Guided by Matching-as-a-Process -- Technical Report}
\title{{\sf PoWareMatch}: a Quality-aware Deep Learning Approach to Improve Human Schema Matching -- Technical Report}
\else 
%\title{{\sf PoWareMatch}: a Quality-aware Deep Learning Approach\\ to Improve Human Schema Matching Guided by Matching-as-a-Process}
\title{{\sf PoWareMatch}: a Quality-aware Deep Learning Approach\\ to Improve Human Schema Matching}
\fi
\author{Roee Shraga}
\email{shraga89@campus.technion.ac.il}
\author{Avigdor Gal}
\email{avigal@technion.ac.il}
\affiliation{
	\institution{Technion -- Israel Institute of Technology}
	\city{Haifa}
	\country{Israel}
}

%\email{{shraga89@campus., avigal@}technion.ac.il}
\renewcommand{\shorttitle}{{\sf PoWareMatch}}
\begin{abstract}
Schema matching is a core task of any data integration process. Being investigated in the fields of databases, AI, Semantic Web and data mining for many years, the main challenge remains the ability to generate quality matches among data concepts (\emph{e.g.,} database attributes). In this work, we examine a novel angle on the behavior of humans as matchers, studying match creation as a process. We analyze the dynamics of common evaluation measures (precision, recall, and f-measure), with respect to this angle and highlight the need for unbiased matching to support this analysis. Unbiased matching, a newly defined concept that describes the common assumption that human decisions represent reliable assessments of schemata correspondences, is, however, not an inherent property of human matchers. In what follows, we design {\sf PoWareMatch} that makes use of a deep learning mechanism to calibrate and filter human matching decisions adhering the quality of a match, which are then combined with algorithmic matching to generate better match results. We provide an empirical evidence, established based on an experiment with more than 200 human matchers over common benchmarks, that {\sf PoWareMatch} predicts well the benefit of extending the match with an additional correspondence and generates high quality matches. In addition, {\sf PoWareMatch} outperforms state-of-the-art matching algorithms.
\end{abstract}

\maketitle

\section{Introduction}

\emph{Schema matching} is a core task of data integration for structured and semi-structured data. Matching revolves around providing correspondences between concepts describing the meaning of data in various heterogeneous, distributed data sources, such as SQL and XML schemata, entity-relationship diagrams, ontology descriptions, interface definitions, {\em etc.} The need for schema matching arises in a variety of domains including linking datasets and entities for data discovery~\cite{fernandez2018seeping,singh2017synthesizing,thirumuruganathan2020data}%~\cite{fernandez2018seeping,papadakis2020blocking,singh2017synthesizing,thirumuruganathan2020data,dragut2016orlf}
, finding related tables in data lakes~\cite{zhang2020finding}, data enrichment~\cite{wang2019progressive}%~\cite{wang2018deeper,wang2019progressive}
, aligning ontologies and relational databases for the Semantic Web~\cite{EUZENAT2007a}%~\cite{EUZENAT2007a,pinkel2018rodi,stoica2019directly}
, and document format merging ({\em e.g.}, orders and invoices in e-commerce)~\cite{RAHM2001}. As an example, a shopping comparison app that supports queries such as ``the cheapest computer among retailers'' or ``the best rate for a flight to Boston in September'' requires integrating and matching several data sources of product orders and airfare forms.
%several data sources of product purchase orders and airfare Web forms.

Schema matching research originated in the database community~\cite{RAHM2001} and has been a focus for other disciplines as well, from artificial intelligence~\cite{halevy2003corpus}, to semantic web~\cite{EUZENAT2007a}, to data mining~\cite{he2005making,roee2018icdm}. Schema matching research has been going on for more than 30 years now, focusing on designing high quality matchers, automatic tools for identifying correspondences among database attributes. Initial heuristic attempts ({\em e.g.}, COMA~\cite{DO2002a} and Similarity Flooding~\cite{MELNIK2002}) were followed by theoretical grounding ({\em e.g.}, see~\cite{BELLAHSENE2011,GAL2011,DONG2009}). 

%Schema matching research originated in the database community~\cite{RAHM2001} and has been a focus for other disciplines as well, from artificial intelligence~\cite{halevy2003corpus,de2018machine}, to semantic web~\cite{EUZENAT2007a}, to data mining~\cite{he2005making,roee2018icdm}. Schema matching research has been going on for more than 30 years now, focusing on designing high quality matchers, automatic tools for identifying correspondences among database attributes. Initial heuristic attempts ({\em e.g.}, COMA~\cite{DO2002a}) were followed by theoretical grounding ({\em e.g.}, see~\cite{BELLAHSENE2011,GAL2011,DONG2009}).

Human schema and ontology matching, the holy grail of matching, requires domain expertise~\cite{li2017human,dragisic2016user}. Zhang \emph{et al.} stated that users that match schemata are typically non experts, and may not even know what is a schema~\cite{Zhang2013}. Others, {\em e.g.},~\cite{Ross2010,HILDA18}, have observed the diversity among human inputs. Recently, human matching was challenged by the information explosion (a.k.a Big Data) that provided many novel sources for data and with it the need to efficiently and effectively integrate them. So far, challenges raised by human matching were answered by pulling further on human resources, using crowdsourcing ({\em e.g.},~\cite{NoyMMA13,Crowdmap,Zhang2013,zhang2018reducing,fan2014hybrid}) and pay-as-you-go frameworks ({\em e.g.}, \cite{McCann2008,Hung2014,pinkel2013incmap}).
% were proposed to flexibly use human input in the matching process, overcoming the need for rapid integration and the severe lack of expert matchers. 
However, recent research has challenged both traditional and new methods for human-in-the-loop matching, showing that humans have cognitive biases that decrease their ability to perform matching tasks effectively~\cite{ackerman2019cognitive}. For example, the study shows that over time, human matchers are willing to determine that an element pair matches despite their low confidence in the match, possibly leading to poor performance. 

Faced with the challenges raised by human matching, we offer a novel angle on the behavior of humans as matchers, analyzing {\em matching as a process}. We now motivate our proposed analysis using an example and then outline the paper's contribution.
\subsection{Motivating Example}
When it comes to humans performing a matching task, decisions regarding correspondences among data sources are made sequentially. To illustrate, consider Figure~\ref{fig:ex1}, presenting two simplified purchase order schemata adopted from~\cite{DO2002a}. {\sf PO$_1$} has four attributes (foreign keys are ignored for simplicity): purchase order's number ({\sf poCode}), timestamp ({\sf poDay} and {\sf poTime}) and shipment city ({\sf city}). {\sf PO$_2$} has three attributes: order issuing date ({\sf orderDate}), order number ({\sf orderNumber}), and shipment city ({\sf city}). A human matching sequence is given by the orderly annotated double-arrow edges. For example, a matching decision that {\sf poDay} in {\sf PO$_1$} corresponds to {\sf orderNumber} in {\sf PO$_2$} is the second decision made in the process.     

\begin{figure}[t]
	\centering
	\begin{subfigure}{.35\linewidth}
		\scalebox{0.7}{\begin{tikzpicture}[%
				grow via three points={one child at (0.25,-0.7) and
					two children at (0.25,-0.7) and (0.25,-1.4)},
				edge from parent path={(\tikzparentnode.south) |- (\tikzchildnode.west)}]
				\node [schema] (a) {\sf PO$_{1}$}
				child { node [table] {\sf City}
					child { node (pocode) {\sf poCode}}
					child { node (pocity) {\sf city}}}
				child [missing] {}				
				child [missing] {}	
				child { node[table] {\sf DateTime}
					child { node (poday) {\sf poDay}}
					child { node (potime) {\sf poTime}}
					child { node (pocode) {\sf poCode}}
				}
				;
				\node [schema, right of=a,node distance=4cm] {\sf PO$_{2}$}
				child { node [table] {\sf Order\_Details}
					child { node (ordernumber) {\sf orderNumber}}
					child { node (ordercity) {\sf city}}
					child { node (orderdate){\sf orderDate}}}
				;
				\draw[<->] (poday.east) -- ([xshift=1.5cm]poday) -| ([xshift=-1.5cm]orderdate.west) -- ([yshift=.1cm]orderdate.west) node[draw=none,pos=.5,font=\tiny,above] {$(1)$};
				\draw[<->] (potime.east) -- ([xshift=1cm]potime) -| ([xshift=-2cm, yshift=.1cm]ordernumber.west) -- ([yshift=.1cm]ordernumber.west) node[draw=none,pos=.65,font=\tiny,above] {$(2)$};
				\draw[<->] (potime.east) -- ([xshift=1.5cm,yshift=-.5cm]potime) -| ([xshift=-1.5cm, yshift=-.84cm]orderdate.west) -- ([yshift=-.1cm]orderdate.west) node[draw=none,pos=.5,font=\tiny,above] {$(3)$};
				\draw[<->] (pocity.east) -- ([yshift=-.1cm]ordercity.west) node[draw=none,pos=0.825,font=\tiny,above] {$(4)$};
				\draw[<->] (poday.east) -- ([xshift=.5cm,yshift=.5cm]poday) -| ([xshift=-1.75cm, yshift=-.3cm]ordernumber.west) -- ([yshift=-.1cm]ordernumber.west) node[draw=none,pos=.3,font=\tiny,above] {$(5)$}; 
		\end{tikzpicture}}
		\caption{Matching-as-a-Process example over two schemata. Decision ordering is annotated using the double-arrow edges.}
		\label{fig:ex1}
	\end{subfigure}
	\hfill
	\begin{subfigure}{.55\linewidth}
		\includegraphics[width=1.1\textwidth]{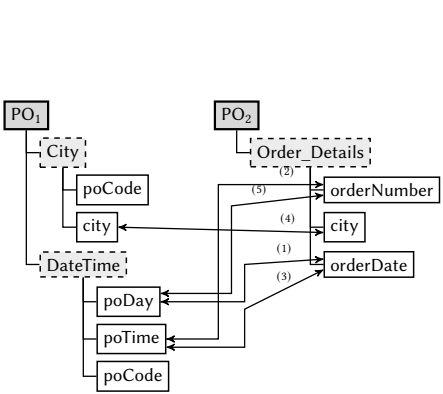}
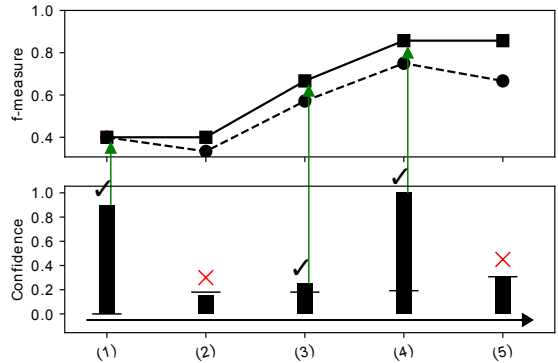
		\caption{Matching-as-a-Process performance. Ordered decision of Figure~\ref{fig:ex1} on the x-axis, confidence associated with each decision is given at the bottom, and a dynamic threshold to accept/reject a human matching decision with respect to \fmspace (Section~\ref{sec:unbiasedprocessing}) is represented as horizontal lines. The top circle markers represent the performance of the traditional approach, unconditionally accepting human decisions, and the square markers represent a process-aware inference over the decisions based on the thresholds at the bottom.}
		\label{fig:exF}
	\end{subfigure}
	\caption{Matching-as-a-Process Motivating Example.}
	\label{fig:ex}
	\vskip -.1in
\end{figure}

The traditional view on human matching accepts human decisions as a ground truth (possibly subject to validation by additional human matchers) and thus the outcome match is composed of all the correspondences selected (or validated) by the human matcher. Figure~\ref{fig:exF} illustrates this view using the decision making process of Figure~\ref{fig:ex1}. The x-axis represents the ordering according to which correspondences were selected. The dashed line at the top illustrates the changes to the \fmspace (see Section~\ref{sec:smodel} for \fm 's definition) as more decisions are added, starting with an \fmspace of $0.4$, dropping slightly and then rising to a maximum of $0.75$ before dropping again as a result of the final human matcher decision.

In this work we offer an alternative approach that analyzes the sequential nature of human matching decision making and confidence values humans share to monitor performance and modify decisions accordingly. In particular, we set a dynamic threshold (marked as a horizontal line for each decision at the bottom of Figure~\ref{fig:exF}), that takes into account previous decisions with respect to the quality of the match (\fmspace in this case). Comparing the threshold to decision confidence (marked as bars in the figure), the algorithm determines whether a human decision is included in the match (marked using a \cmark sign and an arrow to indicate inclusion) or not (marked as a red $X$). A process-aware approach, for example, accepts the third decision that {\sf poTime} in {\sf PO$_1$} corresponds with {\sf orderDate} in {\sf PO$_2$} made with a confidence of $0.25$ while rejecting the fifth decision that {\sf poDay} in {\sf PO$_1$} corresponds with {\sf orderNumber} in {\sf PO$_2$}, despite the higher confidence of $0.3$. The \fmspace of this approach, as illustrated by the solid line at the top of the figure, demonstrates a monotonic non-decreasing behavior with a final high value of $0.86$.

More on the method, assumptions and how the acceptance threshold is set can be found in Section~\ref{sec:unbiasedprocessing}. Section~\ref{sec:poware} introduces a deep learning methodology to calibrate matching decisions, aiming to optimize the quality of a match with respect to a target evaluation measure.

\subsection{Contribution}
%Our overarching goal is to improve the quality of (real-world) human matchers. Specifically, we analyze a setting where human matchers interact directly with a pair of schemata, aiming to find accurate correspondences between them. 
\add{In this work we focus on (real-world) human matching with the overarching goal of improving matching quality using deep learning. Specifically, we analyze a setting where human matchers interact directly with a pair of schemata, aiming to find accurate correspondences between them.} In such a setting, the matching decisions emerge as a process ({\em matching-as-a-process}, see Figure~\ref{fig:ex}) and can be monitored accordingly to advocate the quality of a current outcome match. In what follows, we characterize the dynamics of matching-as-a-process with respect to common quality evaluation measures, namely, precision, recall, and \fm, by defining a \emph{monotonic evaluation measure} and its probabilistic derivative. We show conditions under which precision, recall and \fmspace are monotonic and identify correspondences that their addition to a match improves on its quality. These conditions provide solid, theoretically grounded decision making, leading to the design of a step-wise matching algorithm that uses human confidence to construct a quality-aware match, taking into account the process of matching.

The theoretical setting described above requires human matchers to offer {\em unbiased matching} (which we formally define in this paper), assuming human matchers are experts in matching. This is, unfortunately, not always the case and human matchers were shown to have cognitive biases when matching~\cite{ackerman2019cognitive}, which may lead to poor decision making. Rather than aggregately judge human matcher proficiency and discard those that may seem to provide inferior decisions, we propose to directly overcome human biases and accept only high-quality decisions. %by fusing human and machine decision making to yield high-quality decisions. 
To this end, we introduce \sysNameSpace ({\bf P}r{\bf o}cess a{\bf Ware} {\bf Match}er), a quality-aware deep learning approach to calibrate and filter human matching decisions and combine them with algorithmic matching to provide better match results. We performed an empirical evaluation with over $200$ human matchers to show the effectiveness of our approach in generating high quality matches. In particular, since we adopt a supervised learning approach, we also demonstrate the applicability of \sysNameSpace over a set of (unknown) human matchers over an (unfamiliar) matching problem. 

The paper offers the following four specific contributions: 
%\begin{sloppypar}
\begin{compactenum}
	\item A formal framework for evaluating the quality of (human) matching-as-a-process using the well known evaluation measures of precision, recall, and \fmspace (Section~\ref{sec:matching_eval}).
	\item A matching algorithm that uses confidence to generate a process-aware match with respect to an evaluation measure that dictates its quality (Section~\ref{sec:unbiasedprocessing}).
	%		\item A matching algorithm (Section~\ref{sec:poware}) that uses deep learning model for detecting and correcting human biases (Section~\ref{sec:unbias}) and algorithmic matchers to complement human matching (Section~\ref{sec:output}).
	\item \sysNameSpace (Section~\ref{sec:poware}), a matching algorithm that uses a deep learning model to calibrate human matching decisions (Section~\ref{sec:unbias}) and algorithmic matchers to complement human matching (Section~\ref{sec:output}).
	\item An empirical evaluation showing the superiority of our proposed solution over state-of-the-art in schema matching using known benchmarks (Section~\ref{sec:collbModeleval}).
\end{compactenum}
%\end{sloppypar}

Section~\ref{sec:Background} provides a matching model and discusses algorithmic and human matching. The paper is concluded with related work (Section~\ref{sec:related}), concluding remarks and future work (Section~\ref{sec:discussion}).

\section{Model}
\label{sec:Background}
%\ag{emphasize match as a process} 
We now present the foundations of our work. Section~\ref{sec:smodel} introduces a matching model and the matching problem, followed by algorithmic (Section~\ref{ex:matchers}) and human matching (Section~\ref{sec:human_back}).

\subsection{Schema Matching Model}
\label{sec:smodel}

\begin{sloppypar}
	Let $S, S^{\prime}$ be two schemata with attributes $\{a_1, a_2, \dots , a_n\}$ and $\{b_1, b_2, \dots , b_m\}$, respectively.
	A matching model matches $S$ and $S^{\prime}$ by aligning their attributes using {\em matchers} that utilize matching cues such as attribute names, instances, schema structure, {\em etc.} (see surveys, {\em e.g.},~\cite{Bernstein2011} and books, {\em e.g.},~\cite{GAL2011}). 
\end{sloppypar}
A matcher's output is conceptualized as a matching matrix $M(S,S^{\prime})$ (or simply $M$), as follows.
\begin{definition}\label{def:match}
	$M(S,S^{\prime})$ is a {\em matching matrix}, having entry $M_{ij}$ (typically a real number in $[0,1]$) represent a measure of fit (possibly a similarity or a confidence measure) between $a_i\in{S}$ and $b_j\in{S^{\prime}}$. 
	
	%\noindent $M$ is a \emph{binary} matching matrix if for all $1\leq i\leq n$ and $1\leq j\leq m$, $M_{ij}\in\left\{{0,1}\right\}$. 
	\noindent $M$ is \emph{binary} if for all $1\leq i\leq n$ and $1\leq j\leq m$, $M_{ij}\in\left\{{0,1}\right\}$.
	
	\noindent A {\em match}, denoted $\sigma$, between $S$ and $S^{\prime}$ is a subset of $M$'s entries, each referred to as a {\em correspondence}. %Each entry in a match is referred to as a {\em correspondence}. 
	
	\noindent $\Sigma = \mathcal{P}(S\,\times\,S^{\prime})$ is the set of all possible matches, where $\mathcal{P}(\cdot)$ is a power-set notation.
	%\noindent $\Sigma = 2^{S\,\times\,S^{\prime}}$ is the set of all possible matches.
\end{definition}
Let $M^*$ be a {\em reference matrix}. $M^*$ is a binary matrix, such that $M^*_{ij}=1$ whenever $a_i\in{S}$ and $b_j\in{S^{\prime}}$ correspond and $M^*_{ij}=0$ otherwise. A {\em reference match}, denoted $\sigma^{*}$, is given by $\sigma^{*} = \{M^*_{ij}| M^*_{ij} = 1\}$. \add{Reference matches are typically compiled by domain experts over the years in which a dataset has been used for testing.} $G_{\sigma^{*}}:\Sigma \rightarrow [0,1]$ is an evaluation measure, assigning scores to matches according to their ability to identify correspondences in the reference match. Whenever the reference match is clear from the context, we shall refer to $G_{\sigma^{*}}$ simply as $G$. We define the precision ($P$) and recall ($R$) evaluation measures~\cite{BELLAHSENE2011a}, as follows:

\begin{equation}
	P(\sigma)=\frac{\mid\sigma\cap \sigma^{*}\mid}{\mid\sigma\mid}, 
	R(\sigma)=\frac{\mid\sigma\cap \sigma^{*}\mid}{\mid \sigma^{*}\mid}
	\label{eq:PandR}
\end{equation}
%\noindent recalling that $\sigma$ is a subset of $M$'s entries. %\footnote{we set $P(\sigma)=0$ ($R(\sigma)=0$) if $\sigma = \emptyset$ ($\sigma^{*} = \emptyset$)} 
The \fmspace ($F_1$ score), $F(\sigma)$, is calculated as the harmonic mean of $P(\sigma)$ and $R(\sigma)$.

%\begin{example}
%	\label{Example1-runningex}
%	\begin{sloppypar}
%	We illustrate the model using the task of schema matching. Figure~\ref{ex1} presents two simplified purchase order schemata~\cite{DO2002a}. {\sf $PO_1$} has four attributes (foreign keys are ignored for simplicity): purchase order's number ({\sf poCode}), timestamp ({\sf poDay} and {\sf poTime}) and shipment city ({\sf city}). {\sf $PO_2$} has three attributes: order issuing date ({\sf orderDate}), order number ({\sf orderNumber}), and shipment city ({\sf city}). A matching process aligns the schemata attributes. A reference match is given by double-arrow edges, {\em e.g.}, {\sf orderNumber} in {\sf $PO_2$} corresponds to {\sf poCode} in {\sf $PO_1$}.
%	\end{sloppypar}
%\end{example}

%\begin{figure}[h]
%	\begin{center}
%		\includegraphics[width=0.9\columnwidth]{}
%	\end{center}
%	\caption{Schema Matching example.}
%	\label{ex1}
%\end{figure}

The schema matching problem is expressed as follows.

\begin{prob}[Matching]
	Let $S, S^{\prime}$ be two schemata and $G_{\sigma^{*}}$ be an evaluation measure wrt a reference match $\sigma^{*}$. We seek a match $\sigma\in\Sigma$, aligning attributes of $S$ and $S^{\prime}$, which maximizes $G_{\sigma^{*}}$.
	\label{prob:matching}
\end{prob}

\subsection{Algorithmic Schema Matching}\label{ex:matchers}

%Matching is often a stepped process applying algorithms, rules, and constraints. Matchers can be classified into those that are applied directly to the problem (\emph{first-line matchers -- 1LMs}) and those that are applied to the outcome of other matchers (\emph{second-line matchers -- 2LMs}).
%\begin{sloppypar}
Matching is often a stepped procedure applying algorithms, rules, and constraints. Algorithmic matchers can be classified into those that are applied directly to the problem (first-line matchers -- 1LMs) and those that are applied to the outcome of other matchers (second-line matchers -- 2LMs). 1LMs receive (typically two) schemata and return a matching matrix, in which each entry $M_{ij}$ captures the similarity between attributes $a_i$ and $b_j$. 2LMs receive (one or more) matching matrices and return a matching matrix using some function $f(M)$~\cite{GAL2011}. Among the 2LMs, we term {\em decision makers} those that return a binary matrix as an output, from which a match $\sigma$ is derived, by maximizing $f(M)$, as a solution to Problem~\ref{prob:matching}.  
%\end{sloppypar}
%In the context of algorithmic matching, the matching matrix includes similarity values as assigned by matchers. To illustrate the variety of algorithmic matchers in the literature, consider three 1LMs, namely {\sf Term}, {\sf WordNet}, and {\sf Token Path} and three 2LMs, namely {\sf Dominants, Threshold($\nu$)}, and {\sf Max-Delta($\delta$)}, as follows. {\sf Term}~\cite{GAL2011} compares attribute names to identify syntactically similar attributes ({\em e.g.}, using edit distance and soundex). {\sf WordNet} uses abbreviation expansion and tokenization methods to generate a set of related words for matching attribute names~\cite{wordnet3}. {\sf Token Path}~\cite{PEUKERT2011} integrates node-wise similarity with structural information by comparing the syntactic similarity of full paths from root to a node. {\sf Dominants}~\cite{GAL2011} selects correspondences that dominate all matching matrix entries in their row and column.  {\sf Threshold($\nu$)} and {\sf Max-Delta($\delta$)} are selection rules, prevalent in many matching systems~\cite{DO2002a}. {\sf Threshold($\nu$)} selects those entries $(i,j)$ having $M_{ij}\geq{\nu}$. {\sf Max-Delta($\delta$)} selects those entries that satisfy: $M_{ij}+\delta\geq{\max_{i}}$, where $\max_{i}$ denotes the maximum match value in the $i$'th row.

%In the context of algorithmic matching, the matching matrix includes similarity values as assigned by matchers. 

To illustrate the algorithmic matchers in the literature, consider three 1LMs, namely {\sf Term}, {\sf WordNet}, and {\sf Token Path} and three 2LMs, namely {\sf Dominants, Threshold($\nu$)}, and {\sf Max-Delta($\delta$)}. {\sf Term}~\cite{GAL2011} compares attribute names to identify syntactically similar attributes ({\em e.g.}, using edit distance and soundex). {\sf WordNet} uses abbreviation expansion and tokenization methods to generate a set of related words for matching attribute names~\cite{wordnet1}. {\sf Token Path}~\cite{PEUKERT2011} integrates node-wise similarity with structural information by comparing the syntactic similarity of full paths from root to a node. {\sf Dominants}~\cite{GAL2011} selects correspondences that dominate all other correspondences in their row and column.  {\sf Threshold($\nu$)} and {\sf Max-Delta($\delta$)} are selection rules, prevalent in many matching systems~\cite{DO2002a}. {\sf Threshold($\nu$)} selects those entries $(i,j)$ having $M_{ij}\geq{\nu}$. {\sf Max-Delta($\delta$)} selects those entries that satisfy: $M_{ij}+\delta\geq{\max_{i}}$, where $\max_{i}$ denotes the maximum match value in the $i$'th row.

\vspace{.2cm}
\hspace{-.3cm}\begin{minipage}{.47\textwidth}\begin{example}\label{ex:main}
		Figure~\ref{ex2} provides an example of algorithmic matching over the two purchase order schemata from Figure~\ref{fig:ex1}. The top right (and bottom left) matching matrix is the outcome of {\sf Term} and the bottom right is the outcome of {\sf Threshold($0.1$)}. The projected match is $\sigma_{alg} = \{M_{11}, M_{12}, M_{13}, M_{14}, M_{31}, M_{32}, M_{34}\}$.\footnotemark The reference match for this example is given by $\{M_{11}, M_{12}, M_{23}, M_{34}\}$ and accordingly $P(\sigma_{alg}) = 0.43$, $R(\sigma_{alg}) = 0.75$, and $F(\sigma_{alg}) = 0.55$.
\end{example}\end{minipage}\hspace{.3cm}
\begin{minipage}{.5\textwidth}
	%	\centering
	\includegraphics[width=\columnwidth]{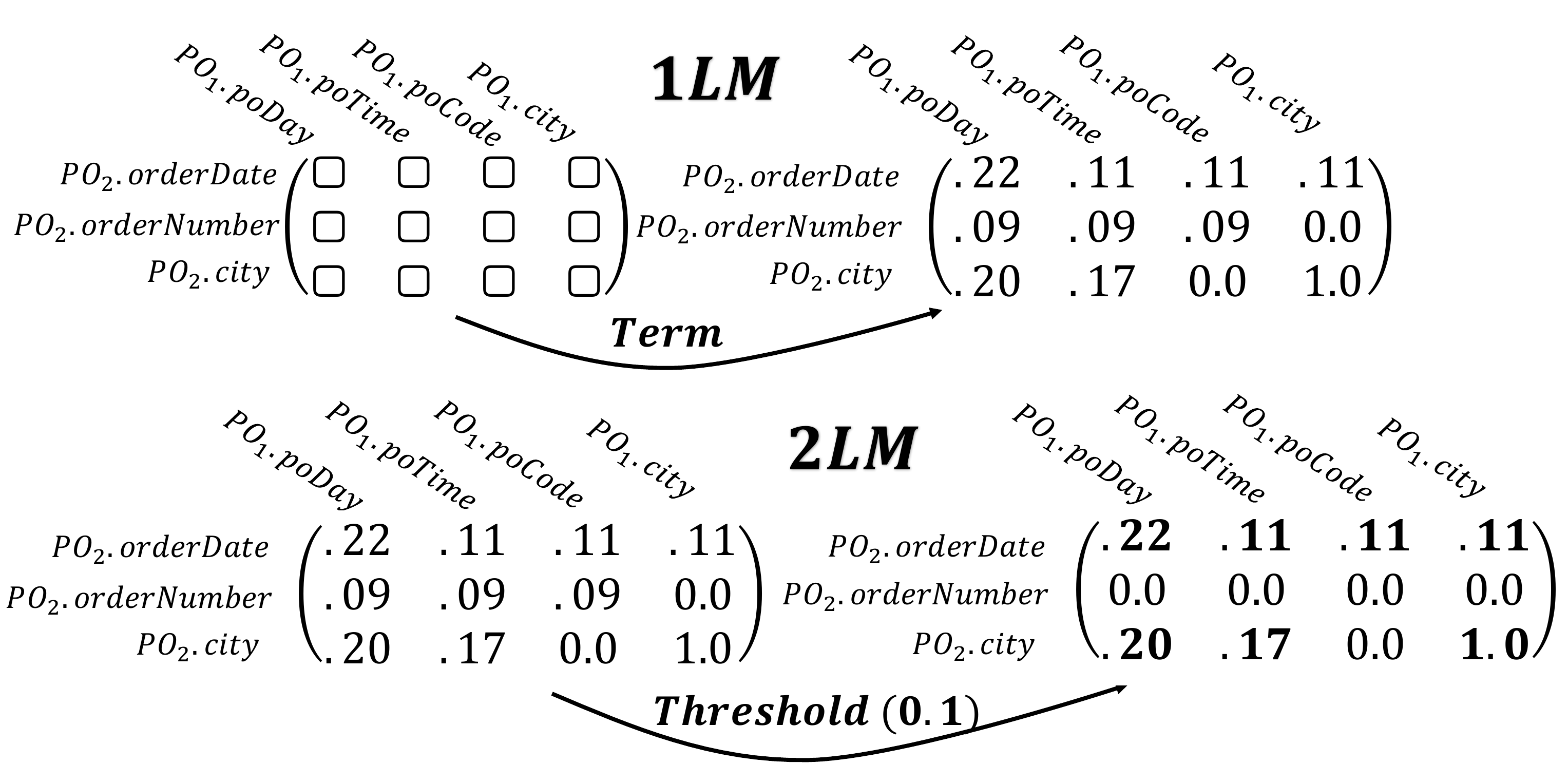}
	\captionof{figure}{Algorithmic matching example.}
	\label{ex2}
\end{minipage}
\footnotetext{Recall the $M_{ij}$ represents a correspondence between the $i$'th element in $S$ and the $j$'th element in $S^{\prime}$, \emph{e.g.,} $M_{11}$ means that {\sf PO$_1$.poDay} and {\sf PO$_2$.orderDate} correspond.}

\subsection{Human Schema Matching}\label{sec:human_back}
%\rs{this may be a good place to remind the reader the difference between our setting and the one of binary crowdsourcing}

In this work, we examine human schema matching as a decision making process. Different from a binary crowdsourcing setting (\emph{e.g.,}~\cite{zhang2018reducing}), in which the matching task is typically reduced into a set of (binary) judgments regarding specific correspondences (see related work, Section~\ref{sec:related}), we focus on human matchers that perform a matching task in its entirety, as illustrated in Figure~\ref{fig:ex}.

Human schema matching is a complex sequential decision making process of interrelated decisions. Attribute of multiple schemata are examined to decide whether and which attributes correspond. Humans either validate an algorithmic result or locate a candidate attribute unassisted, possibly relying upon superficial information such as string similarity of attribute names %({\em e.g.}, {\sf qty} is similar to {\sf quantity}) 
or exploring information such as data-types, instances, and position within the schema tree. The decision whether to explore additional information relies upon self-monitoring of confidence.  

%Human schema matching has been recently analyzed using metacognitive psychology~\cite{humanMatching}, a discipline that investigates factors impacting humans when performing knowledge intensive tasks~\cite{Barsalou2014}. The metacognitive approach~\cite{Bjork2013}, traditionally applied for learning and answering knowledge questions, highlights the role of subjective confidence in regulating efforts while performing tasks. Metacognitive research shows that subjective judgments underlie effort regulation decisions~\cite{Metcalfe2008}. \rs{do we really need it here:} When extended to reasoning and decision making tasks, Ackerman and Thompson~\cite{Ackerman2017} showed that online monitoring of subjective confidence regulates the cognitive effort invested in each task ({\em e.g.}, identifying a correspondence).  

%Using observations from the metacognitive literature, human matching can be modeled as a sequence of decisions regarding element pairs, each assigned with a confidence level. We can directly query human matchers' {\em subjective confidence} level regarding a correspondence, reflecting ongoing monitoring and final subjective assessment of chances of success~\cite{Bjork2013,Ackerman2017}. 

%We use the matching matrix (Section~\ref{sec:smodel}) to model the subjective confidence in human matcher's decisions making. 

Human schema matching has been recently analyzed using metacognitive psychology~\cite{ackerman2019cognitive}, a discipline that investigates factors impacting humans when performing knowledge intensive tasks~\cite{Barsalou2014}. The metacognitive approach~\cite{Bjork2013}, traditionally applied for learning and answering knowledge questions, highlights the role of {\em subjective confidence} in regulating efforts while performing tasks. Metacognitive research shows that subjective judgments ({\em e.g.}, confidence) regulate the cognitive effort invested in each decision ({\em e.g.}, identifying a correspondence)~\cite{Metcalfe2008,Ackerman2017}. In what follows, we model human matching as a sequence of decisions regarding element pairs, each assigned with a confidence level. \add{In this work we assume that human matchers interact directly with the matching problem, selecting correspondences given a pair of schemata. Within this process, we directly query their (subjective) confidence level regarding a selected correspondence, which essentially reflects the ongoing monitoring and final subjective assessment of chances of success~\cite{Bjork2013,Ackerman2017}.} The dynamics of human matching decision making process is modeled using a \emph{decision history} $H$, as follows. 

\begin{definition}\label{def:history}
	\begin{sloppypar}
		Given two schemata $S, S^{\prime}$ with attributes $\{a_1, a_2, \dots , a_n\}$ and $\{b_1, b_2, \dots, b_m\}$, respectively, a history $H=\langle h_t\rangle_{t=1}^{T}$ is a sequence of triplets of the form $\langle e, c, t \rangle$, where $e = \left(a_i, b_j\right)$ such that $a_i\in{S}$, $b_j\in{S^{\prime}}$, $c\in [0,1]$ is a confidence value assigned to the correspondence of $a_i$ and $b_j$, and $t\in{\rm I\!R}$ is a timestamp, recording the time the decision was taken.
	\end{sloppypar}  
	% 	\begin{sloppypar}
	% 	Given two schemata $S, S^{\prime}$ with attributes $\{a_1, a_2, \dots , a_n\}$ and $\{b_1, b_2, \dots, b_m\}$, respectively, a history $H=\langle h_t\rangle_{t=1}^{T}$ is a sequence of triplets of the form $\langle \left(a_i, b_j\right), c, t \rangle$, where $a_i\in{S}$, $b_j\in{S^{\prime}}$, $c\in [0,1]$ is a confidence value assigned to the correspondence of $a_i$ and $b_j$, and $t\in{\rm I\!R}$ is a timestamp, recording the time the decision was taken.
	% 	\end{sloppypar}  
\end{definition}
% Each element in $H$ records a matching decision confidence $c$ concerning an element pair $e = \left(a_i, b_j\right)$ at time $t$. Timestamps induce a total order over $H$'s elements. We refer to the elements of a decision $h_t$ as $h_t.e$ (element pair), $h_t.c$ (confidence), and $h_t.t$ (timestamp).

Each decision $h_t\in H$ records a matching decision confidence ($h_t.c$) concerning an element pair ($h_t.e = \left(a_i, b_j\right)$) at time $t$ ($h_t.t$). Timestamps induce a total order over $H$'s elements. 

A matching matrix, which may serve as a solution of the human matcher to Problem~\ref{prob:matching}, can be created from a matching history by assigning the latest confidence to the respective matrix entry. Given an element pair $e = \left(a_i, b_j\right)$, we denote by $h_{max}^e$ the latest decision making in $H$ that refers to $e$ and compute a matrix entry as follows: 
%\begin{small}
%	\begin{equation}\label{eq:hutomat}
%	M_{ij} =
%	\begin{cases}
%	h_{t| t=\max\limits_{h_t\in H | h_{t.e} = \left(a_i, b_j\right)}t}.c & $\text{if }$ \exists h_t\in H |  h_t.e = \left(a_i, b_j\right) \\
%	\varnothing & \text{otherwise}\\
%	\end{cases} 
%	\end{equation}
%\end{small}
\begin{equation}\label{eq:hutomat}
	M_{ij} =
	\begin{cases}
		h_{max}^e.c  & $\text{if }$ \exists h_t\in H |  h_t.e = \left(a_i, b_j\right) \\
		\varnothing & \text{otherwise}\\
	\end{cases} 
\end{equation}
%\begin{small}
%	\begin{equation}\label{eq:hutomat}
%		M_{ij} =
%		\begin{cases}
%			h_t.c | t = \max\limits_{h_t\in H | h_{t}.e = \left(a_i, b_j\right)}(h_t.t)  
%			& $\text{if }$ \exists h_t\in H |  h_t.e = \left(a_i, b_j\right) \\
%			\varnothing & \text{otherwise}\\
%		\end{cases} 
%	\end{equation}
%\end{small}
where %$\varnothing$ denotes an empty entry, \emph{i.e.}, 
$M_{ij} = \varnothing$ means that $M_{ij}$ was not assigned a confidence value. Whenever clear from the context, we refer to a confidence value ($h_t.c$) assigned to an element pair $\left(a_i, b_j\right)$, simply as $M_{ij}$. 

\vspace{.2cm}
\addtocounter{example}{-1}
\hspace{-.3cm}\begin{minipage}{.5\textwidth}\begin{example}[cont.]
		Figure~\ref{ex3} (left) provides the decision history that corresponds to the matching process of Figure~\ref{fig:ex1} and the respective matching matrix (applying Eq.~\ref{eq:hutomat}) is given on the right. The projected match is $\sigma_{hum} = \{M_{11}, M_{22}, M_{12}, M_{34}, M_{21}\}$. Recalling the reference match, $\{M_{11}, M_{12}, M_{23}, M_{34}\}$, the projected match obtains $P(\sigma_{hum}) = 0.6$, $R(\sigma_{hum}) = 0.75$, and $F(\sigma_{hum}) = 0.67$.
\end{example}\end{minipage}\hspace{.3cm}
\begin{minipage}{.49\textwidth}
	%	\centering
	\includegraphics[width=\columnwidth]{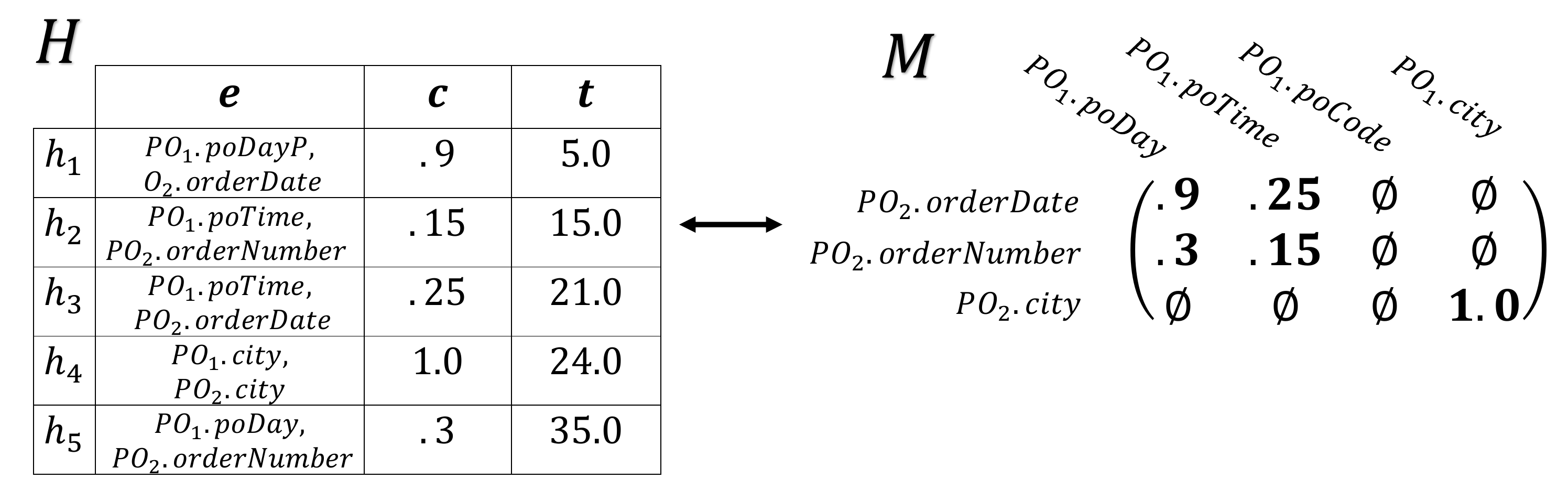}
	\captionof{figure}{Human matching example.}
	\label{ex3}
\end{minipage}
\vspace{.2cm}

In this work we seek a solution to Problem~\ref{prob:matching} that takes into account the quality of a match wrt an evaluation measure $G$ and a decision history $H$. Therefore, we next analyze the well-known measures, namely precision, recall, and \fmspace in the context of matching-as-a-process.
\section{Matching-as-a-Process and the Human Matcher}\label{sec:matching_eval}

%Equipped with a matching matrix (Definition~\ref{def:match}) and a decision history (Definition~\ref{def:history}), we next analyze the properties of matching evaluation measures (precision, recall, and \fm, see Eq.~\ref{eq:PandR}) in the setting of matching-as-a-process. We start by analyzing evaluation measure monotonicity (Section~\ref{sec:monotonicEvaluation}), showing that recall is always monotonic, while precision and \fmspace are monotonic only under strict conditions. Then, we show a probabilistic analysis, identifying which correspondence should be added to an existing partial match (Section~\ref{sec:matchAnnealing}). Finally, Section~\ref{sec:matchingBiases} ties the analysis with human matching and discusses the idea of unbiased matching.

%Equipped with a matching matrix (Definition~\ref{def:match}) and a decision history (Definition~\ref{def:history}), 
We examine matching quality in the matching-as-a-process setting by analyzing the properties of matching evaluation measures (precision, recall, and \fm, see Eq.~\ref{eq:PandR}). The analysis uses regions of the match space $\Sigma\times\Sigma$ over which monotonicity can be guaranteed. To illustrate the method we first analyze evaluation measure monotonicity in a deterministic setting, repositioning well-known properties and showing that recall is always monotonic, while precision and \fmspace are monotonic only under strict conditions (Section~\ref{sec:monotonicEvaluation}). Then, we move to a more involved analysis of the probabilistic case, identifying which correspondence should be added to an existing partial match (Section~\ref{sec:matchAnnealing}). Finally, recalling that matching-as-a-process is a human matching characteristic, we tie our analysis to human matching and discuss the idea of unbiased matching (Section~\ref{sec:matchingBiases}).

\subsection{Monotonic Evaluation}
\label{sec:monotonicEvaluation}

Given two matches (Definition~\ref{def:match}) $\sigma$ and $\sigma^{\prime}$ that are a result of some sequential decision making such that $\sigma\subseteq\sigma^{\prime}$, we define their interrelationship using their monotonic behavior with respect to an evaluation measure $G$ ($P$, $R$, or $F$, see Eq.~\ref{eq:PandR}). For the remainder of the section, we denote by $\Sigma^{\subseteq}$ the set of all match pairs in $\Sigma\times\Sigma$ such that the first match is a subset of the second: $\Sigma^{\subseteq}=\{(\sigma,\sigma^{\prime})\in\Sigma\times\Sigma:\sigma\subseteq\sigma^{\prime}\}$ and use $\deltacorr_{\sigma,\sigma^{\prime}} = \sigma^{\prime}\setminus\sigma$ ($\deltacorr$, when clear from the context) to denote the set of correspondences that were added to $\sigma$ to generate $\sigma^{\prime}$. %When clear from the context we simply write $\deltacorr$.

\begin{definition}[Monotonic Evaluation Measure]\label{def:monotonicity}
	Let $G$ be an evaluation measure and $\Sigma^2\subseteq\Sigma^{\subseteq}$ a set of match pairs in $\Sigma^{\subseteq}$. %, and .
	%\begin{sloppypar}	
	\noindent $G$ is a \emph{monotonically increasing evaluation measure (MIEM) over $\Sigma^2$} if for all match pairs 
	$(\sigma, \sigma^{\prime})\in\Sigma^2$, $G(\sigma) \leq G(\sigma^{\prime})$.
\end{definition}

%According to Definition~\ref{def:monotonicity}, an evaluation measure $G$ is monotonically increasing (MIEM) if by adding correspondences to a match (while keeping the match pair in $\Sigma^2$), we do not reduce its value. We next show that recall is a MIEM over all match pairs in $\Sigma^{\subseteq}$.\footnote{Due to space considerations, we refrain from presenting some of the proofs. The interested reader is referred to a technical report~\cite{tech}.}

In Definition~\ref{def:monotonicity}, we use $\Sigma^2$ as a representative subspace of $\Sigma^{\subseteq}$, which was defined above. According to Definition~\ref{def:monotonicity}, an evaluation measure $G$ is monotonically increasing (MIEM) if by adding correspondences to a match, we do not reduce its value. It is fairly easy to infer that recall is an MIEM over all match pairs in $\Sigma^{\subseteq}$ and precision and \fmspace are not, unless some strict conditions hold. To guarantee such conditions, we define two subsets, $\Sigma^P=\{(\sigma, \sigma^{\prime})\in\Sigma^{\subseteq}:P(\sigma)\leq P(\deltacorr)\}$ and $\Sigma^F=\{(\sigma, \sigma^{\prime})\in\Sigma^{\subseteq}:0.5\cdot F(\sigma)\leq P(\deltacorr)\}$. The former represents a subset of all match pairs for which the precision of the added correspondences ($P(\deltacorr)$) is at least as high as the precision of the first match ($P(\sigma)$). The latter compares between the added correspondences ($P(\deltacorr)$) and the \fmspace of the first match ($F(\sigma)$). We use these two subspaces in the following theorem to summarize the main dynamic properties of the evaluation measures of precision, recall, and \fm.\footnote{The proof of Theorem~\ref{thm:MIEM} is given in Appendix~\ref{sec:proofs}.} 

\begin{theorem}\label{thm:MIEM}
	Recall ($R$) is a MIEM over $\Sigma^{\subseteq}$, Precision ($P$) is a MIEM over $\Sigma^P$, and \fmspace ($F$) is a MIEM over $\Sigma^F$.
\end{theorem}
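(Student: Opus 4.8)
The plan is to reduce all three claims to a single elementary fact about mediants: for nonnegative integers with positive denominators, $\frac{a}{b}\le\frac{c}{d}$ implies $\frac{a}{b}\le\frac{a+c}{b+d}\le\frac{c}{d}$. The starting point in every case is that, since $\Delta=\sigma'\setminus\sigma$ and $\sigma\subseteq\sigma'$, we have $\sigma'=\sigma\cup\Delta$ with $\sigma\cap\Delta=\varnothing$, so the three counts that drive the measures split additively: $|\sigma'\cap\sigma^*|=|\sigma\cap\sigma^*|+|\Delta\cap\sigma^*|$ and $|\sigma'|=|\sigma|+|\Delta|$, while $|\sigma^*|$ is fixed. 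I would first dispose of the degenerate cases $\Delta=\varnothing$ (where $\sigma=\sigma'$ and every inequality is an equality) and empty denominators (adopting the convention that an empty match scores $0$), so that all quotients below are well defined.

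For recall, monotonicity over $\Sigma^{\subseteq}$ is immediate: $|\sigma\cap\sigma^*|\le|\sigma'\cap\sigma^*|$ by the additive split, and dividing by the constant $|\sigma^*|$ gives $R(\sigma)\le R(\sigma')$; no condition on $\Delta$ is needed, which explains why recall is the easy, unconstrained case. For precision over $\Sigma^P$, I would write $P(\sigma')=\frac{|\sigma\cap\sigma^*|+|\Delta\cap\sigma^*|}{|\sigma|+|\Delta|}$, recognize this as the mediant of $P(\sigma)=\frac{|\sigma\cap\sigma^*|}{|\sigma|}$ and $P(\Delta)=\frac{|\Delta\cap\sigma^*|}{|\Delta|}$, and invoke the mediant fact: the defining hypothesis of $\Sigma^P$, namely $P(\sigma)\le P(\Delta)$, is exactly what is required to conclude $P(\sigma)\le P(\sigma')$.

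The \fmspace case is the one that needs the right reformulation, and I expect it to be the main (if modest) obstacle. The trick is to replace the harmonic-mean definition by the equivalent Dice form $F(\sigma)=\frac{2|\sigma\cap\sigma^*|}{|\sigma|+|\sigma^*|}$, which I would verify once by substituting the definitions of $P$ and $R$ from Eq.~\ref{eq:PandR}. With this form, $F(\sigma')=\frac{2\left(|\sigma\cap\sigma^*|+|\Delta\cap\sigma^*|\right)}{\left(|\sigma|+|\sigma^*|\right)+|\Delta|}$, so $\tfrac12 F(\sigma')$ is the mediant of $\tfrac12 F(\sigma)=\frac{|\sigma\cap\sigma^*|}{|\sigma|+|\sigma^*|}$ and $P(\Delta)=\frac{|\Delta\cap\sigma^*|}{|\Delta|}$. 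The membership condition defining $\Sigma^F$, namely $0.5\cdot F(\sigma)\le P(\Delta)$, is then precisely the mediant hypothesis with base denominator $|\sigma|+|\sigma^*|$, and the same one-line inequality yields $\tfrac12 F(\sigma)\le\tfrac12 F(\sigma')$, hence $F(\sigma)\le F(\sigma')$. The only care required is checking that the added denominator is $|\Delta|$ and not $|\Delta|+|\sigma^*|$, since the recall-normalizing term $|\sigma^*|$ is fixed and absorbed into the base denominator; this is exactly why the base ratio equals $\tfrac12 F(\sigma)$ and the factor $0.5$ appears in the condition rather than a comparison against $F(\sigma)$ directly. Getting this bookkeeping right is the crux, after which all three statements collapse into the single mediant argument.
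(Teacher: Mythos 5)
Your proof is correct. It rests on the same additive decomposition as the paper's ($|\sigma'|=|\sigma|+|\Delta|$ and $|\sigma'\cap\sigma^*|=|\sigma\cap\sigma^*|+|\Delta\cap\sigma^*|$, the paper's Eq.~\ref{eq:m} and Eq.~\ref{eq:tp}) and on the same Dice reformulation $F(\sigma)=\frac{2|\sigma\cap\sigma^*|}{|\sigma|+|\sigma^*|}$, but where the paper (Lemma~\ref{lemma:p_f_ineq}) carries out two separate cross-multiplication arguments in both directions for $P$ and for $F$, you collapse both into a single invocation of the mediant inequality. The identification of $\tfrac12 F(\sigma')$ as the mediant of $\tfrac12 F(\sigma)=\frac{|\sigma\cap\sigma^*|}{|\sigma|+|\sigma^*|}$ and $P(\Delta)=\frac{|\Delta\cap\sigma^*|}{|\Delta|}$, with the fixed $|\sigma^*|$ absorbed into the base denominator, is exactly the right observation and also explains \emph{why} the factor $0.5$ appears in the definition of $\Sigma^F$ -- something the paper's derivation obtains but does not illuminate. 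Two minor remarks: the mediant fact is an equivalence ($\frac{a}{b}\le\frac{a+c}{b+d}\iff\frac{a}{b}\le\frac{c}{d}$ for positive denominators), so your argument recovers the paper's full ``iff'' in Lemma~\ref{lemma:p_f_ineq} for free, not just the direction the theorem needs; and your convention $P(\varnothing)=0$ differs from the paper's convention $P(\varnothing)=P(\Delta)$, but both make the degenerate cases vacuous for the monotonicity claim, so nothing breaks. What your packaging buys is a two-line proof of the two nontrivial cases from one elementary lemma; what the paper's explicit algebra buys is a template that it then reuses almost verbatim for the probabilistic counterpart (Lemma~\ref{lemma:p_f_ineqProb}), where the increment is a probability rather than an integer count and the mediant framing would need a slight generalization to non-integer numerators (which it admits, but you would have to say so).
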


%The interest reader is referred to Appendix~\ref{sec:proofs} in which we provide , showing that recall is monotonic and the conditions under which precision and \fmspace are monotonic.

%\ifdefined\TechReport
%\input{proofs/all_MEM_Proof}
%\else 
%\fi

\subsection{Local Match Annealing}
\label{sec:matchAnnealing}
The analysis of Section~\ref{sec:monotonicEvaluation} lays the groundwork for a principled matching process that continuously improves on the evaluation measure of choice. The conditions set forward use knowledge of, first, the evaluation outcome of the match performed thus far ($G(\sigma)$), and second, the evaluation score of the additional correspondences ($G(\deltacorr)$). While such knowledge can be extremely useful, it is rarely available during the matching process. Therefore, we next provide a relaxed setting, where $G(\sigma)$ and $G(\deltacorr)$ are probabilistically known (Section~\ref{sec:matchingBiases} provides an approximation using human confidence). For simplicity, we restrict our analysis to matching processes where a single correspondence is added at a time. We denote by $\Sigma^{\subseteq_1}=\{(\sigma,\sigma^{\prime})\in\Sigma^{\subseteq}:|\sigma^{\prime}|-|\sigma| = 1\}$ the set of all match pairs $(\sigma,\sigma^{\prime})$ in $\Sigma^{\subseteq}$ where $\sigma^{\prime}$ is generated by adding a single correspondence to $\sigma$. The discussion below can be extended (beyond the scope of this work) to adding multiple correspondences at a time.
%The discussion below can be extended, with minimal effort (beyond the scope of this work) to adding multiple correspondences at a time. %(\ag{provided in a technical report?} beyond the scope of this work) to adding multiple correspondences at a time.

%We start with a characterization of correspondences whose addition to a match improves on the match evaluation. Recall that $\deltacorr$ represents a marginal set of correspondences that separate a match from a subset. Recall also that there are multiple match pairs $(\sigma,\sigma^{\prime}),\sigma\subseteq\sigma^{\prime}$ that yield the same $\deltacorr=\sigma^{\prime}\setminus\sigma$.

We start with a characterization of correspondences whose addition to a match improves on the match evaluation. Recall that $\deltacorr$ represents the marginal set of correspondences that were added to the match. Following the specification of $\Sigma^{\subseteq_1}$, we let $\deltacorr$ represent a single correspondence ($|\deltacorr|=1$), which is the result of multiple match pairs $(\sigma,\sigma^{\prime})\in \Sigma^{\subseteq_1}$ such that $\deltacorr=\sigma^{\prime}\setminus\sigma$.
%$(\sigma,\sigma^{\prime}),\sigma\subseteq\sigma^{\prime}$ for which $\deltacorr=\sigma^{\prime}\setminus\sigma$.
%added to several different matches.

\begin{definition}[Local Match Annealing]\label{def:annealer}
	Let $G$ be an evaluation measure and $\deltacorr$ be a singleton correspondence set ($|\deltacorr|=1$). $\deltacorr$ is a \emph{local annealer with respect to $G$ over $\Sigma^2\subseteq\Sigma^{\subseteq_1}$} if for every $(\sigma,\sigma^{\prime})\in\Sigma^2$ s.t. $\deltacorr=\deltacorr_{(\sigma,\sigma^{\prime})}$:
	$G(\sigma) \leq G(\sigma^{\prime})$. 
\end{definition}

A local annealer is a single correspondence that is guaranteed to improve the performance of any match within a specific match pair subset. We now connect the MIEM property of an evaluation measure $G$ (Definition~\ref{def:monotonicity}) with the annealing property of a match delta $\deltacorr$ (Definition~\ref{def:annealer}) with respect to a specific match $\sigma^{\prime}$.\footnote{All proofs are provided in Appendix~\ref{sec:proofs}}

\begin{prop}\label{prop:annealer}
	Let $G$ be an evaluation measure. If $G$ is a MIEM over $\Sigma^2\subseteq\Sigma^{\subseteq_1}$, then $\forall(\sigma,\sigma^{\prime})\in\Sigma^2$~: $\deltacorr=\sigma^{\prime}\setminus\sigma$ is a local annealer with respect to $G$ over $\Sigma^2\subseteq\Sigma^{\subseteq_1}$.
\end{prop}

%\ifdefined\TechReport
%\input{proofs/annealer_miem_Proof}
%\else 
%\fi

Proposition~\ref{prop:annealer} demonstrates the importance of defining the appropriate subset of matches that satisfy monotonicity. Together with Theorem~\ref{thm:MIEM}, the following immediate corollary can be deduced. 
\begin{corol}\label{corol:annealer}
	Any singleton correspondence set $\deltacorr$ ($|\deltacorr|=1$) is a local annealer with respect to 1) $R$ over $\Sigma^{\subseteq_1}$, 2) $P$ over $\Sigma^P\cap\Sigma^{\subseteq_1}$, and 3) $F$ over $\Sigma^F\cap\Sigma^{\subseteq_1}$.
\end{corol}
\noindent where $\Sigma^P$ and $\Sigma^F$ are the subspaces for which precision and \fmspace are monotonic as defined in Section~\ref{sec:monotonicEvaluation}.

Corollary~\ref{corol:annealer} indicates which correspondence should be added to a match to improve its quality with respect to an evaluation measure of choice, according to the respective condition. 

%Assume now that the value of an evaluation measure $G$, when applied to a match $\sigma$, is not deterministically known. Rather, $G(\sigma)$ is a random variable with an expected value of $E(G(\sigma))$. We extend Definition~\ref{def:annealer} as follows.

Assume now that the value of $G$, applied to a match $\sigma$, is not deterministically known. Rather, $G(\sigma)$ is a random variable with an expected value of $E(G(\sigma))$. We extend Definition~\ref{def:annealer} as follows.

\begin{definition}[Probabilistic Local Match Annealing]\label{def:probAnnealer}
	Let $G$ be a random variable, whose values are taken from the domain of an evaluation measure, and $\deltacorr$ be a singleton correspondence set ($|\deltacorr|=1$). $\deltacorr$ is a \emph{probabilistic local annealer with respect to $G$ over $\Sigma^2\subseteq\Sigma^{\subseteq_1}$} if for every $(\sigma,\sigma^{\prime})\in\Sigma^2$ s.t. $\deltacorr=\deltacorr_{(\sigma,\sigma^{\prime})}$:
	$E(G(\sigma)) \leq E(G(\sigma^{\prime}))$. 
\end{definition} 

%Similarly to the analysis in Section~\ref{sec:monotonicEvaluation}, we 
We now define conditions (match subspaces) under which a correspondence is a probabilistic local annealer for recall ($R$), precision ($P$), and \fmspace ($F$). Let ${\rm I\!I}_{\{\deltacorr\in\sigma^*\}}$ be an indicator function, returning a value of $1$ whenever $\deltacorr$ is a part of the reference match and $0$ otherwise.
%\vspace{-.05cm}
$${\small{\rm I\!I}_{\{\deltacorr\in\sigma^*\}} = 
	\begin{cases}
		1 & \text{ if }\deltacorr\in\sigma^*\\
		0 & \text{otherwise}
\end{cases}}$$ 
%\vspace{-.05cm}
Using ${\rm I\!I}_{\{\deltacorr\in\sigma^*\}}$, we define the probability that $\deltacorr$ is correct: 
$Pr\{\deltacorr\in\sigma^*\} \equiv Pr\{{\rm I\!I}_{\{\deltacorr\in\sigma^*\}} = 1\}$.

Lemma~\ref{lemma:p_f_ineqProb} is the probabilistic counterpart of Lemma~\ref{lemma:p_f_ineq} (see Appendix~\ref{sec:proofs}).
\begin{lemma}\label{lemma:p_f_ineqProb}
	For $(\sigma, \sigma^{\prime})\in\Sigma^{\subseteq_1}$:
	\begin{itemize}
		\item $E(P(\sigma)) \leq E(P(\sigma^{\prime}))$ iff $E(P(\sigma))\leq Pr\{\deltacorr\in\sigma^*\}$
		\item $E(F(\sigma)) \leq E(F(\sigma^{\prime}))$ iff $0.5\cdot E(F(\sigma))\leq Pr\{\deltacorr\in\sigma^*\}$
	\end{itemize}
\end{lemma}

Using the following subsets: $\Sigma^{E(P)}=\{(\sigma, \sigma^{\prime})\in\Sigma^{\subseteq_1}:E(P(\sigma))\leq Pr\{\deltacorr\in\sigma^*\}\}$ and $\Sigma^{E(F)}=\{(\sigma, \sigma^{\prime})\in\Sigma^{\subseteq_1}:0.5\cdot E(F(\sigma))\leq Pr\{\deltacorr\in\sigma^*\}\}$, we extend Theorem~\ref{thm:MIEM} to the probabilistic setting.\footnote{The proof of Theorem~\ref{thm:probLocalAnneal} is given in Appendix~\ref{sec:proofs}.}

\begin{theorem}\label{thm:probLocalAnneal}
	Let $R/P/F$ be a random variable, whose values are taken from the domain of $[0,1]$, and $\deltacorr$ be a singleton correspondence set ($|\deltacorr|=1$). $\deltacorr$ is a \emph{probabilistic local annealer with respect to $R/P/F$ over $\Sigma^{\subseteq_1}/\Sigma^{E(P)}/\Sigma^{E(F)}$}.
\end{theorem}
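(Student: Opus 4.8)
The plan is to prove the three assertions separately, pairing each evaluation measure with its designated subspace, and to reduce the precision and \fmspace cases to the already-stated Lemma~\ref{lemma:p_f_ineqProb}. In every case the goal, by Definition~\ref{def:probAnnealer}, is to show that for the relevant pairs $(\sigma,\sigma^{\prime})$ we have $E(G(\sigma))\leq E(G(\sigma^{\prime}))$ with $G\in\{R,P,F\}$.

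For recall over $\Sigma^{\subseteq_1}$ I would argue that the deterministic monotonicity already carries over to expectations. Fix any pair $(\sigma,\sigma^{\prime})\in\Sigma^{\subseteq_1}\subseteq\Sigma^{\subseteq}$. For every realization of the (random) reference match, $\sigma\subseteq\sigma^{\prime}$ forces $\sigma\cap\sigma^{*}\subseteq\sigma^{\prime}\cap\sigma^{*}$, and since the denominator $|\sigma^{*}|$ is shared, $R(\sigma)\leq R(\sigma^{\prime})$ holds pointwise; this is exactly the MIEM property of $R$ over $\Sigma^{\subseteq}$ from Theorem~\ref{thm:MIEM}, which is insensitive to the particular $\sigma^{*}$. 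Because a pointwise inequality between random variables is preserved under expectation, $E(R(\sigma))\leq E(R(\sigma^{\prime}))$, so $\deltacorr$ is a probabilistic local annealer with respect to $R$ over $\Sigma^{\subseteq_1}$.

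For precision and \fmspace I would invoke Lemma~\ref{lemma:p_f_ineqProb} directly, exploiting the fact that the subspaces were engineered to coincide with its hypotheses. For any $(\sigma,\sigma^{\prime})\in\Sigma^{E(P)}$ the defining membership condition is $E(P(\sigma))\leq Pr\{\deltacorr\in\sigma^*\}$, which is precisely the right-hand side of the equivalence in the first bullet of the lemma, hence $E(P(\sigma))\leq E(P(\sigma^{\prime}))$. Symmetrically, for $(\sigma,\sigma^{\prime})\in\Sigma^{E(F)}$ the condition $0.5\cdot E(F(\sigma))\leq Pr\{\deltacorr\in\sigma^*\}$ is exactly the right-hand side of the second bullet, giving $E(F(\sigma))\leq E(F(\sigma^{\prime}))$. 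Applying Definition~\ref{def:probAnnealer} in each case yields that $\deltacorr$ is a probabilistic local annealer with respect to $P$ over $\Sigma^{E(P)}$ and with respect to $F$ over $\Sigma^{E(F)}$.

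The bookkeeping for the theorem itself is light precisely because $\Sigma^{E(P)}$ and $\Sigma^{E(F)}$ are built to match the lemma's conditions. I expect the only genuine subtlety to be the recall argument, where I must be careful that the randomness resides in the reference match and that $R(\sigma)\leq R(\sigma^{\prime})$ holds for each fixed outcome before averaging, so that monotonicity of expectation legitimately applies. Had Lemma~\ref{lemma:p_f_ineqProb} not been available, the real work would shift there: one would write $P(\sigma^{\prime})=(|\sigma|\,P(\sigma)+{\rm I\!I}_{\{\deltacorr\in\sigma^*\}})/(|\sigma|+1)$, take expectations using $E({\rm I\!I}_{\{\deltacorr\in\sigma^*\}})=Pr\{\deltacorr\in\sigma^*\}$ together with the fact that $|\sigma|$ is deterministic, and rearrange to recover the stated equivalence; the analogous harmonic-mean computation for $F$ is where the factor $0.5$ emerges and would be the most error-prone step.
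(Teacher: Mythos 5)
Your proposal is correct and follows essentially the same route as the paper: recall is handled by noting the deterministic inequality $R(\sigma)\leq R(\sigma^{\prime})$ holds pointwise and passes to expectations, while the precision and \fmspace cases follow immediately because the defining conditions of $\Sigma^{E(P)}$ and $\Sigma^{E(F)}$ are exactly the right-hand sides of the equivalences in Lemma~\ref{lemma:p_f_ineqProb}. If anything, your statement of the $\Sigma^{E(F)}$ membership condition ($0.5\cdot E(F(\sigma))\leq Pr\{\deltacorr\in\sigma^*\}$) is more faithful to the definition than the paper's own proof, which misplaces the factor $0.5$ in that step.
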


Intuitively, using Theorem~\ref{thm:probLocalAnneal} one can infer those correspondences, whose addition to the current match improves its quality. For example, if a reference match contains $12$ correspondences and a current match contains $9$ correct correspondences (\emph{i.e.,} correspondences that are included in the reference match) and $1$ incorrect correspondences. The quality of the current match is $0.75$ with respect to $R$, $0.9$ with respect to $P$, and $0.82$ with respect to $F$. Any potential addition to the match does not decrease $R$. One should add correspondences associated with a probability higher than $0.9$ to probabilistically guarantee $P$ improvement and correspondences associated with a probability higher than $0.5\cdot0.82 = 0.41$ to probabilistically guarantee $F$ improvement. 

\add{Our analysis focuses on adding correspondences, the most typical action in creating a match. We can fit revisit actions in out analysis as follows. A change in confidence level of a match can be considered as a newly suggested correspondence associated with a new confidence level, removing the former decision from the current match. Similarly, if a user un-selects a correspondence, it can be treated as a new assignment associated with a confidence level of 0. The latter fits well with our model. If a human matcher decides to delete a decision, we do not ignore this decision but rather use our methodology to decide whether we should accept it.}

\subsection{Evaluation Approximation: the Case of a Human Matcher}
\label{sec:matchingBiases}
Theorem~\ref{thm:probLocalAnneal} offers a probabilistic interpretation %to the demands of Theorem~\ref{thm:MIEM} 
by defining $\Sigma^{E(P)}$ and $\Sigma^{E(F)}$ over which matches are probabilistic local annealers (Definition~\ref{def:probAnnealer}). A key component to generating these subsets is the computation of $Pr\{\deltacorr\in\sigma^*\}$ that, in most real-world scenarios, is likely unavailable during the matching process or even after it concludes. To overcome this hurdle, we discuss next the possibility to judicially make use of human matching to assign a probability to the inclusion of a correspondence in the reference match.

The traditional view of human matchers in schema matching is that they offer a reliable assessment on the inclusion of a correspondence in a match. Given a matching decision by a human matcher, we formulate this view, as follows.

\begin{definition}[Unbiased Matching]\label{def:unbiasedMatching}
	Let $M_{ij}$ be a confidence value assigned to an element pair $\left(a_i, b_j\right)$ and $\sigma^{*}$ a reference match. $M_{ij}$ is \emph{unbiased} (with respect to $\sigma^{*}$) if $M_{ij} = Pr\{M_{ij}\in\sigma^*\}$
\end{definition}

Unbiased matching allows the use of a matching confidence to assess the probability of a correspondence to be part of a reference match. Using Definition~\ref{def:unbiasedMatching}, we define an unbiased matching matrix $M$ such that $\forall M_{ij}\in M: M_{ij} = Pr\{M_{ij}\in\sigma^*\}$ and an unbiased matching history $H$ such that $\forall h_t\in H: h_t.c = Pr\{M_{ij}\in\sigma^*\}$. 

Given an unbiased matching matrix $M$, a reference match $\sigma^{*}$, a match $\sigma \subseteq M$ and a candidate correspondence $M_{ij}\in M$, we can, using Definition~\ref{def:unbiasedMatching} and the definition of expectation, compute
\begin{equation}\label{eq:unbiasedMatchingEstim}
	Pr\{M_{ij} \in\sigma^*\} = M_{ij}, \text{ }E(P(\sigma)) = \frac{\sum_{M_{ij}\in\sigma}M_{ij}}{|\sigma|}, \text{ }E(F(\sigma)) =\frac{2\cdot\sum_{M_{ij}\in\sigma}M_{ij}}{|\sigma| + |\sigma^{*}|}
\end{equation}
and check whether $(\sigma,\sigma\cup\{M_{ij}\})\in\Sigma^{E(P)}$ and $(\sigma,\sigma\cup\{M_{ij}\})\in\Sigma^{E(F)}$. In case the size of the reference match $|\sigma^{*}|$ is unknown, it needs to be estimated, \emph{e.g.,} using $1$:$1$ matching, $|\sigma^{*}| = min(S, S^{\prime})$.\footnote{Details of Eq.~\ref{eq:unbiasedMatchingEstim} computation are given in Appendix~\ref{sec:proofs}.} 

\subsubsection{Biased Human Matching}
\label{sec:RWHMatching}

In Section~\ref{sec:human_back} we presented a human matching decision process as a history, from which a matching matrix may be derived. The decisions in the history represent corresponding element pairs chosen by the human matcher and their assigned confidence level (see Definition~\ref{def:history}). Assuming unbiased human matching (Definition~\ref{def:unbiasedMatching}), the assigned confidence can be used to determine which of the selected correspondences should be added to the current match, given an evaluation measure of choice (recall, precision, or \fm), using Eq.~\ref{eq:unbiasedMatchingEstim}.

An immediate question that comes to mind is whether human matching is indeed unbiased. Figure~\ref{fig:confacc} illustrates of the relationship between human confidence in matching and two derivations of an empirical probability distribution estimation of $Pr\{M_{ij}\in\sigma^*\}$. The results are based on our {% tex broke the page here!!!!
	\parfillskip=0pt
	\parskip=0pt
	\par}

\begin{wrapfigure}{R}{0.5\textwidth}
	\centering
	\includegraphics[width=.9\linewidth]{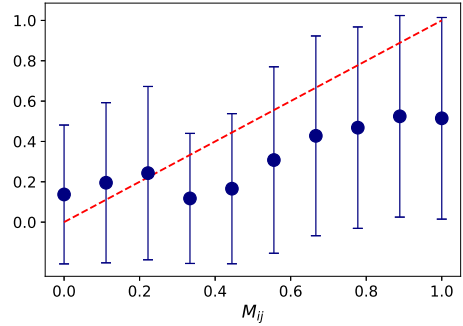}
	%	\vspace{-.2cm}
	\caption{Is human matching biased? Confidence by correctness partitioned to 0.1 buckets.}
	%	\vspace{-.2cm}
	\label{fig:confacc}
\end{wrapfigure}
\noindent experiments (see Section~\ref{sec:collbModeleval} for details). We partitioned the confidence levels into 10 buckets (x-axis) to allow an estimation of an expected value (blue dots) and standard deviation (vertical lines) of the accuracy results of decisions within a bucket of confidence. Each bucket includes at least 500 examples and the estimation within each bucket was calculated as the proportion of correct correspondences out of all correspondences in the bucket. The red dotted line represents theoretical unbiased matching. It is clearly illustrated that human matching {\bf is} biased. %, showing mostly over-confidence. 
Therefore, human subjective confidence is unlikely to serve as a (single) good predictor to matching correctness. Ackerman \emph{et al.} reported several possible biases in human matching which affect their confidence and their ability to provide accurate matching~\cite{ackerman2019cognitive}. The interested reader to~Appendix~\ref{app:biases} for additional details. These observations and the analysis of matching-as-a-process are used in our proposed algorithmic solution to create quality matches.

\section{Process-Aware Matching Collaboration with \sysName}\label{sec:model}
%\rs{remember to make sure that the notation of $sigma_{t-1}$ is clear}
Section~\ref{sec:unbiasedprocessing} introduces a decision history processing strategy under the (utopian) unbiased matching assumption (Definition~\ref{def:unbiasedMatching})  following the observations of Section~\ref{sec:matching_eval}. Then, we describe \sysName, our proposed deep learning solution to improve the quality of typical (biased) human matching (Section~\ref{sec:poware}), and detail its components (sections~\ref{sec:unbias}-\ref{sec:output}).

\subsection{History Processing with Unbiased Matching}\label{sec:unbiasedprocessing} 
Ideally, human matching is unbiased. That is, each matching decision $h_t\in H$ in the decision history (Definition~\ref{def:history}) is accompanied by an unbiased (accurate) confidence value $h_t.c = Pr\{M_{ij}\in\sigma^*\}$ (Definition~\ref{def:unbiasedMatching}). Accordingly, we can use the observations from Section~\ref{sec:matching_eval} to produce a better match as a solution to Problem~\ref{prob:matching} (with respect to some evaluation measure) out of a decision history.

Let $h_t\in H$ be a matching decision made at time $t$, such that $h_{t}.e = \left(a_i, b_j\right)$. Aiming to generate a match, we have two options, either adding the correspondence $M_{ij}$ to the match or not. Targeting recall, an MIEM over $\Sigma^{\subseteq}$ (Theorem~\ref{thm:MIEM}), the choice is clear. $\{M_{ij}\}$ is a local annealer with respect to recall over $\Sigma^{\subseteq_1}$ (Corollary~\ref{corol:annealer}) and we always benefit by adding it. Focusing on precision or \fm, the decision depends on the current match (termed $\sigma_{t-1}$). With unbiased matching, we can estimate the values of $Pr\{M_{ij}\in\sigma^*\}$, $E(P(\sigma_{t-1}))$, and $E(F(\sigma_{t-1}))$ (Eq.~\ref{eq:unbiasedMatchingEstim}) and use Theorem~\ref{thm:probLocalAnneal} as follows. 
\begin{description}
	\item[Targeting precision,] if $E(P(\sigma_{t-1}))\leq M_{ij}$, then $\{M_{ij}\}$ is a probabilistic local annealer with respect to $P$, and we can increase precision by adding $M_{ij}$ to the match ($\sigma_{t} = \sigma_{t-1}\cup\{M_{ij}\}$).
	\item[Targeting \fm,] if $0.5\cdot E(F(\sigma_{t-1}))\leq M_{ij}$, then $\{M_{ij}\}$ is a probabilistic local annealer with respect to $F$, and adding $M_{ij}$ to the match does not decrease it's quality.
\end{description}

\begin{example}\label{ex:process}
	
	Figure~\ref{fig:HPex} illustrates history processing over the history from Figure~\ref{ex3}. The left column presents the targeted evaluation measure and at the bottom, a not-to-scale timeline lays out the decision history. Along each row, \cmark and \xmark represent whether $M_{ij}$ is added to the match or not, respectively. Targeting recall (top row), all decisions are accepted and a high final recall value of $0.75$ is obtained. 
	
	\tikzstyle{every node}=[draw=none,anchor=west]

	\begin{figure}[h]
		\begin{tikzpicture}
			\matrix (m)[
			matrix of math nodes,
			nodes in empty cells, % adds nodes in cells without content
			row sep=0.01em, column sep=3.2em,
			nodes={font={\vphantom{S,}\small}}, % to account for missing commas, no need for \phantoms
			row 6/.style={text height=0pt,text depth=0pt} % to reduce height of nodes in third row
			]
			% note \text for first column
			% added third row as suggested by cfr
			{ 
				%   	\hspace{-2em}R: &[-2em]\hspace{-2em} \emptyset & \{M_{11}\}  
				%   													& \{M_{11}, M_{22}\} 
				%   													& \{M_{11}, M_{22}, M_{12}\}  
				%   													& \{M_{11}, M_{22}, M_{12}, M_{34}\} 
				%   													& \{M_{11}, M_{22}, M_{12}, M_{34},  M_{21}\} \\
				\text{\small{$R$:}} & \text{\small{\cmark}} & \text{\small{\cmark}} & \text{\small{\cmark}} & \text{\small{\cmark}} & \text{\small{\cmark}} & \operatornamewithlimits{\{M_{11}, M_{22}, M_{12}, M_{34},  M_{21}\}}\limits_{(P=0.6, R=0.75, F=0.67)}\\
				\text{\small{$P$:}} & \text{\small{\cmark}} & \text{\small{\xmark}} & \text{\xmark} & \text{\small{\cmark}} & \text{\small{\xmark}} & \operatornamewithlimits{\{M_{11}, M_{34}\}}\limits_{(P=1.0, R=0.5, F=0.67)}\\
				\text{\small{$F$:}} & \text{\small{\cmark}} & \text{\small{\xmark}} & \text{\small{\cmark}} & \text{\small{\cmark}} & \text{\small{\xmark}} & \operatornamewithlimits{\{M_{11}, M_{12}, M_{34}\}}\limits_{(P=1.0, R=0.75, F=0.86)}\\
				%		P|\sigma_{t-1}: & \text{\cmark} & \text{\xmark} & \text{\xmark} & \text{\cmark} & \text{\xmark} & \{M_{11}, M_{34}\}\text{ }\rightarrow(P=1.0, R=0.5, F=0.67)\\
				%		F|\sigma_{t-1}: & \text{\cmark} & \text{\xmark} & \text{\cmark} & \text{\cmark} & \text{\xmark} & \{M_{11}, M_{12}, M_{34}\}\text{ }\rightarrow(P=1.0, R=0.75, F=0.86)\\
				&&&&&& \\
			};
			{ [start chain] \chainin (m-1-1);
				\chainin (m-1-2);
				\chainin (m-1-3);
				\chainin (m-1-4);
				\chainin (m-1-5); 
				\chainin (m-1-6); 
				\chainin (m-1-7);
			}
			
			{ [start chain] \chainin (m-2-1);
				\chainin (m-2-2)[join={node[above,labeled] {0.0}}];
				\chainin (m-2-3)[join={node[above,labeled] {0.9}}];
				\chainin (m-2-4)[join={node[above,labeled] {0.9}}];
				\chainin (m-2-5)[join={node[above,labeled] {0.9}}];
				\chainin (m-2-6)[join={node[above,labeled] {0.95}}]; 
				\chainin (m-2-7);
			}
			
			{ [start chain] \chainin (m-3-1);
				\chainin (m-3-2)[join={node[above,labeled] {0.0}}];
				\chainin (m-3-3)[join={node[above,labeled] {0.18}}];
				\chainin (m-3-4)[join={node[above,labeled] {0.18}}];
				\chainin (m-3-5)[join={node[above,labeled] {0.19}}];
				\chainin (m-3-6)[join={node[above,labeled] {0.31}}]; 
				\chainin (m-3-7);
			}

			% draw arrow for timeline
			\draw [->] (m-4-1 -| m-4-1.east) -- (m-4-7);
			
			% draw ticks and add labels
			\foreach [count=\i from 2] \txt in {
				\footnotesize{$M_{11}$=$0.9$}\\ $t=5$,
				\footnotesize{$M_{22}$=$0.15$}\\ $t=15$,
				\footnotesize{$M_{12}$=$0.25$}\\ $t=21$,
				\footnotesize{$M_{34}$=$1.0$}\\ $t=24$,
				\footnotesize{$M_{21}$=$0.3$}\\ $t=35$} {
				\draw ([yshift=2pt]m-4-\i.center) -- ++(0,-4pt)
				node [below,align=center,font=\footnotesize] {\txt};
			}
		\end{tikzpicture}
		%\vspace{-.25cm}
		\caption{History Processing Example.}
		\label{fig:HPex}
	\end{figure}
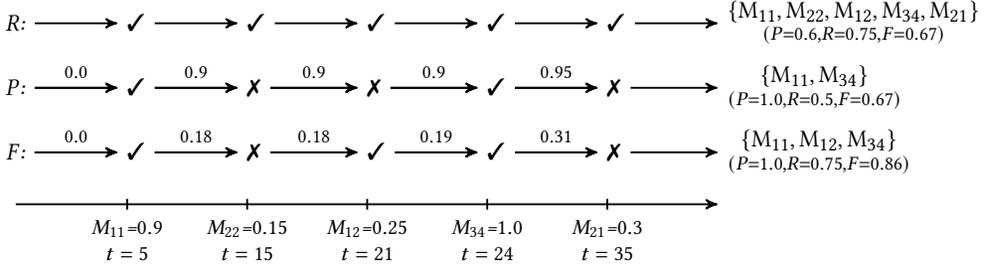
	
	%\fi
	
	For precision and \fm, each arrow is annotated with a decision threshold, which is set by $E(P(\sigma_{t-1}))$ and $0.5\cdot E(F(\sigma_{t-1}))$, whose computation is given in Eq.~\ref{eq:unbiasedMatchingEstim}. At the beginning of the process, $\sigma_0=\emptyset$ and $E(P(\sigma_{0}))$ is set to 0 by definition. $E(F(\sigma_{0})) = 0$ since there are no correspondences in $\sigma_0$. To illustrate the process consider,
	%$|\sigma^*|=4$ (see Example~\ref{ex:main})} 
	for example, the second threshold, computed based on the match $\{M_{11}\}$, whose value is $\frac{0.9}{1.0}=0.9$ in the second row and $0.5\cdot\frac{2\cdot0.9}{1+4}=0.18$ in the third row.  
\end{example}

\subsubsection{Setting a Static (Global) Threshold}\label{sec:glob}
The decision making process above assumes the availability of (an unlabeled) $\sigma_{t-1}$. Whenever we do not know $\sigma_{t-1}$, {\em e.g.}, when partitioning the matching task over multiple matchers~\cite{Crowdmap}, we can set a static (global) threshold. Adding $M_{ij}$ is always guaranteed to improve recall and, thus, the strategy targeting recall remains the same, \emph{i.e.,} a static threshold is set to $0$. Note that both precision and \fmspace are bounded from above by $1$. Thus, by setting $E(P(\sigma_{t-1}))$ and $E(F(\sigma_{t-1}))$ to their upper bound ($1$), we obtain a global condition to add $M_{ij}$ if $1\leq M_{ij}$ and $0.5\leq M_{ij}$, respectively. To summarize, targeting $R/F/P$ with a static threshold is done by adding a correspondence to a match if its confidence exceeds $0/0.5/1$, respectively. 
%For precision and \fm, by setting $E(P(\sigma_{t-1}))$ and $E(F(\sigma_{t-1}))$ to their upper bound, $1$, we obtain a global condition to add $M_{ij}$ if $1\leq M_{ij}$ and $0.5\leq M_{ij}$, respectively. To sum, targeting $R/F/P$ with a static threshold is done by adding a correspondence to a match if its confidence exceeds $0/0.5/1$, respectively. 

%To sum, a global optimization of $R/F/P$ is done by adding a correspondence to a match if its confidence exceeds $0/0.5/1$, respectively. 

Ongoing decisions are not taken into account when setting a static threshold. Recalling Example~\ref{ex:process}, using a static threshold for \fmspace will reject the third decision ($0.5> M_{12} = 0.25$) (unlike the case of using the estimated value of $E(P(\sigma_{t-1}))$), resulting in a lower final \fmspace of $0.67$.

\subsection{\sysNameSpace Architecture Overview}\label{sec:poware}

In Section~\ref{sec:RWHMatching} we demonstrated that human matching may be biased, in which case Eq.~\ref{eq:unbiasedMatchingEstim} cannot be used as is. Therefore, we next present \sysName, 
%(\textbf{P}r\textbf{o}cess A\textbf{ware} \textbf{Match}ing),
aiming to calibrate biased matching decisions and to predict the values of $P$ and $F$. \sysNameSpace is a matching algorithm that enriches the representation of each matching decision with cognitive aspects, algorithmic assessment and neural encoding of prior decisions using LSTM. Compensating for (possible) lack of evaluation regarding former decisions, \sysNameSpace repeatedly predicts missing precision and \fmspace values (learned during a training phase) that are used to monitor the decision making process (see Section~\ref{sec:matching_eval}). Finally, to reach a complete match and boost recall, \sysNameSpace uses algorithmic matching to complete missing values that were not inserted by human matchers.

\begin{figure}[h]
	%	\vspace{-.5cm}
	\begin{center}
		\includegraphics[width=.75\columnwidth]{./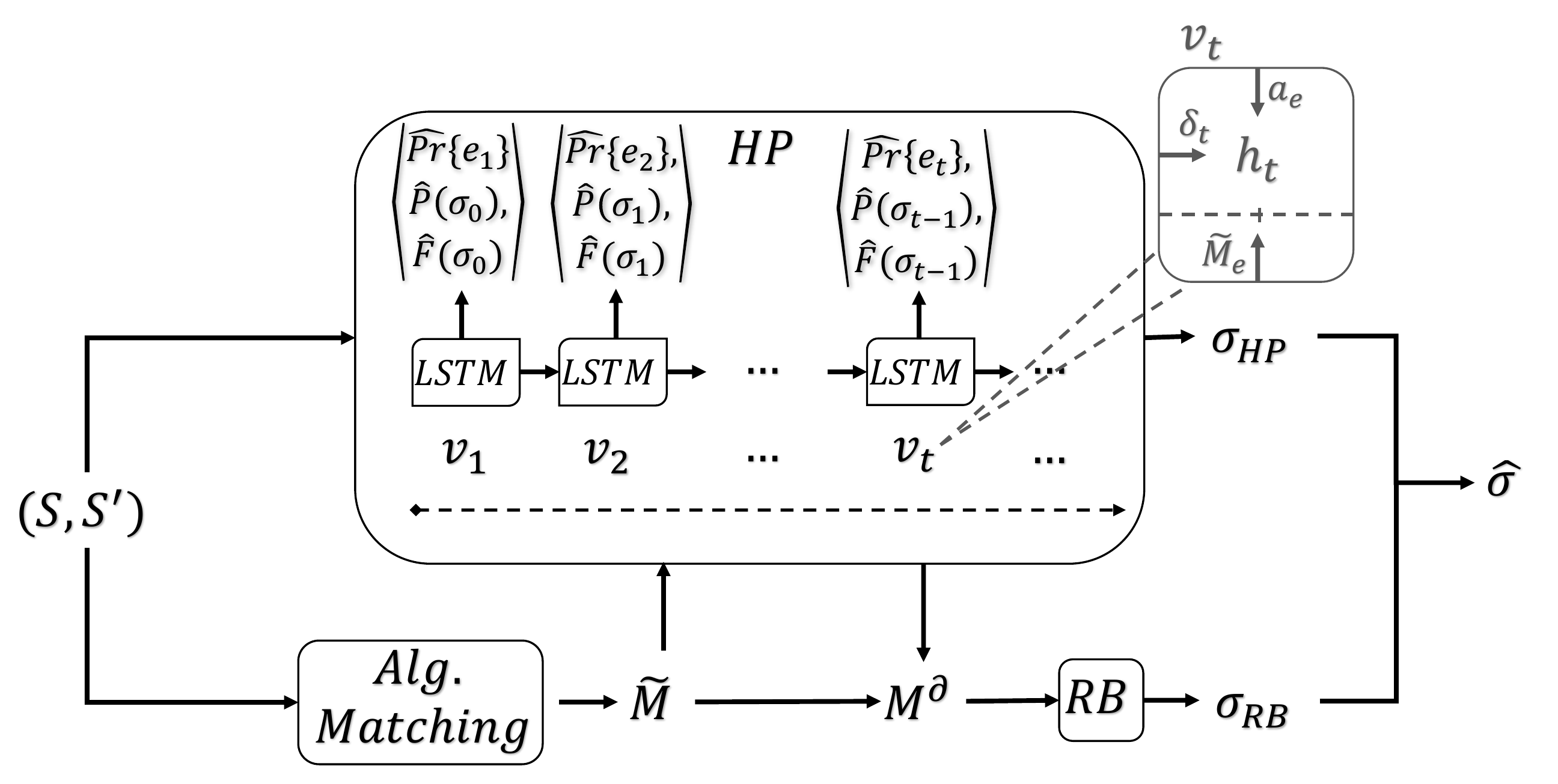}
	\end{center}
	%	\vspace{-.1cm}
	\caption{\sysNameSpace Framework.}
	%	\vspace{-.1cm}
	\label{fig:framework}
\end{figure}

The flow of \sysNameSpace is illustrated in Figure~\ref{fig:framework}. Its input is a schema pair $(S, S^{\prime})$ and a decision history $H$ (Definition~\ref{def:history}) and its output is a match $\hat{\sigma}$. \sysNameSpace is composed of two components, history processing ($HP$), aiming to calibrate matching decisions (Section~\ref{sec:unbias}) and recall boosting ($RB$), focusing on improving the low recall intrinsically obtained by human matchers (Section~\ref{sec:output}).

\subsection{Calibrating Matching Decisions using History Processing ($HP$)}\label{sec:unbias}

The main component of \sysNameSpace calibrates matching decisions by history processing ($HP$). We recall again that the matching history (Definition~\ref{def:history}) records matching decisions of human matchers interacting with a pair of schemata, according to the order in which they were taken. In what follows, $HP$ uses the history to process the matching decisions in the \emph{order} they were assigned by the human matcher. The goal of $HP$ is to improve the estimation of $Pr\{M_{ij}\in\sigma^*\}$ of common (biased) matching beyond using the confidence value, which is an accurate assessment only for unbiased matching. We use $Pr\{e_{t}\}$ here as a shorthand writing for the probability of an element pair assigned at time $t$ to be correct ($Pr\{h_t.e\in\sigma^*\}$).

We first propose a feature representation of matching history decisions (Section~\ref{sec:features}) to be processed using a recurrent neural network that is trained in a supervised manner to capture latent temporal properties of the matching process. Once trained, the network predicts a set of labels $\langle \hat{Pr}\{e_{t}\}, \hat{P}(\sigma_{t-1}), \hat{F}(\sigma_{t-1})\rangle$ regarding each decision $h_{t}$, which is used to generate a match $\sigma_{HP}$ (Section~\ref{sec:learn}).
%\end{sloppypar}

\subsubsection{Turning Biases into Features}\label{sec:features}

A feature encoding of a human matcher decision $h_{t}$ uses a $4$-dimensional feature vector, composed of the reported confidence, allocated time, consensual agreement, and an algorithmic similarity score. Allocated time and consensual agreement (along with control, the extent of which the matcher was assisted by an algorithmic solution, which we do not use here since it was shown to be less predictive in our experiments) are human biases studied by Ackerman {\em et al.}~\cite{ackerman2019cognitive} (see~Appendix~\ref{app:biases} for more details). Allocated time is measured directly using history timestamps. \add{An $n\times m$ consensual agreement matrix $A$ is constructed using the decisions of all available human matchers (in our experiments, those in the training set), such that $a_{ij}\in A$ is calculated as the number of matchers that determine $a_i\in{S}$ and $b_j\in{S^{\prime}}$ to correspond, \emph{i.e.,} $\exists h_t\in H: h_t.e = \left(a_i, b_j\right)$}. An algorithmic matching result is given as an $n\times m$ similarity matrix $\tilde{M}$.

Let $h_{t}\in H$ be a matching decision at time $t$ (Definition~\ref{def:history}) regarding entry $h_{t}.e=\left(a_i, b_j\right)$. We create a feature encoding $v_{t}\in {\rm I\!R}^4$ given by $v_{t} = \langle h_t.c, \delta_{t}, a_{e}, \tilde{M}_{e} \rangle$, where 
\begin{itemize}
	\item $h_t.c$ is the confidence value associated with $h_{t}$,
	\item $\delta_{t} = h_{t}.t - h_{t-1}.t$ is the time spent until determining $h_{t}$,
	\item $a_{e} = A_{ij}$ is the consensus regarding the entry assigned in the decision $h_{t}$,
	\item $\tilde{M}_{e} = \tilde{M}_{ij}$, an algorithmic similarity score regarding $\left(a_i, b_j\right)$.
\end{itemize}  

$v_{t}$ enriches the reported confidence ($h_t.c$) with additional properties of a matching decision including possible biases and an 
algorithmic opinion. A simple solution, which we analyze as a baseline in our experiments ($ML$, Section~\ref{sec:collbModeleval}) applies an out-of-the-box machine learning method to predict $\langle \hat{Pr}\{e_{t}\}, \hat{P}(\sigma_{t-1}), \hat{F}(\sigma_{t-1})\rangle$. \add{To encode the sequential nature of a (human) matching process, we explore next the use of recurrent neural networks (LSTM), assuming a relation between prior decisions and the current decision.}

\subsubsection{History Processing with Biased Matching}\label{sec:learn}

Using the initial decision encoding $v_{t}$, we now turn our effort to improve biased matching decisions. 
%Recall that decision $h_{t}$ is made at timestamp $t$. 
To this end, we train a neural network to estimate $Pr\{M_{ij}\in\sigma^*\}$, $E(P(\sigma_{t-1}))$ and $E(F(\sigma_{t-1}))$ instead of using Eq.~\ref{eq:unbiasedMatchingEstim} (Section~\ref{sec:unbiasedprocessing}). Therefore, the $HP$ component consists of three (supervised) neural models, a classifier predicting $\hat{Pr}\{e_{t}\}$ and two regressors predicting $\hat{P}(\sigma_{t-1})$ and $\hat{F}(\sigma_{t-1})$ (see illustration of $HP$ component in Figure~\ref{fig:framework}).   

\add{With deep learning~\cite{lecun2015deep}, it is common to expect a large training dataset. While human matching often offers scarce training data, the sequential decision making process (formalized as a decision history), which is unique to human (as opposed to algorithmic) matchers, offers an opportunity to collect a substantial size training set for training recurrent neural networks, and specifically long short-term memory (LSTMs). LSTM is a natural processing tool of sequential data~\cite{gers2000learning}, using a gating scheme of hidden states to control the amount of information to preserve at each timestamp. Using LSTM, we train a model that, when assessing a matching decision $h_{t}$, takes advantage of matcher's previous decisions ($\{h_{t^{\prime}}\in H| t^{\prime} < t\}$).}

Applying LSTM for a decision vector $v_{t}$ yields a recurrent representation of the decision, after which, we apply $tanh$ activation to obtain a hidden representation $h^{LSTM}_t$. Intuitively, $h^{LSTM}_t$ provides a process-aware representation over $v_{t}$, encoding the decision $h_t$. 

After the LSTM utilization, the network is divided into the three tasks by applying a linear fully connected layer over $h^{LSTM}_t$, reducing the dimension to the label dimension ($2$ for the classier and $1$ for the regressors) and producing $h^{Pr}_{t}$, $h^{P}_{t}$ and $h^{F}_{t}$ corresponding to the task of estimating the probability of the $t$'th decision to be correct (included in the reference match) and the task of predicting the current match precision and \fmspace values, respectively. Note that since $h^{LSTM}_t$ latently encodes the timestamps $\{t^{\prime}| t^{\prime} < t\}$, it also assists in predicting $\sigma_{t-1}$. Then, the classifier applies a $softmax(\cdot)$ function and the regressors apply a $sigmoid(\cdot)$. The softmax function returns a two-dimensional probabilities vector, \emph{i.e.,} the $i$'th entry represents a probability that the classification is $i$, for which we take the entry representing $1$ as $\hat{Pr}\{e_{t}\}$. At each time $t$ we obtain $\hat{Pr}\{e_{t}\}$, $\hat{P}(\sigma_{t-1}) = sigmoid(h^{P}_{t})$ and $\hat{F}(\sigma_{t-1}) = sigmoid(h^{F}_{t})$ whose labels during training are ${\rm I\!I}_{\{\{M_{ij}: e_{t} = \left(a_i, b_j\right)\}\in\sigma^*\}}$ (an indicator stating whether the decision is a part of the reference match, see Section~\ref{sec:matchAnnealing}), $P(\{M_{ij}|e_{t^{\prime}} = \left(a_i, b_j\right) \wedge h_{t^{\prime}}\in H \wedge t^{\prime} < t\})$ and $F(\{M_{ij}|e_{t^{\prime}} = \left(a_i, b_j\right) \wedge h_{t^{\prime}}\in H \wedge t^{\prime} < t\})$ (computed according to Eq.~\ref{eq:PandR} over a match composed of prior decisions), respectively.

\begin{wrapfigure}[8]{R}{0.5\textwidth}
	\vspace*{-.4cm}
	\begin{minipage}{0.5\textwidth}
		\begin{table}[H]
			\caption{History processing by target and threshold.}
			%	\vspace{-.25cm}
			\label{tab:HPthresholds}
			\centering 
			\begin{tabular}{l c c c}
				%		\hline
				%		\backslashbox[0.5cm]{$\downarrow$\textbf{threshold}}{\textbf{target}$\rightarrow$} & $R$ & $P$ & $F$\\
				\multicolumn{1}{r}{\textbf{target}$\rightarrow$} & \multirow{2}{*}{$R$}  & \multirow{2}{*}{$P$} & \multirow{2}{*}{$F$}\\
				$\downarrow$\textbf{threshold} & & & \\
				\toprule
				dynamic & $0.0$ & $\hat{P}(\sigma_{t-1})$ & $0.5\cdot \hat{F}(\sigma_{t-1})$\\
				static & $0.0$ & $1.0$ & $0.5$ \\
				%		\hline
			\end{tabular}
			%				\vspace{-.25cm}
		\end{table}
		%				\vspace{-.25cm}
	\end{minipage}
\end{wrapfigure}

Once trained, $H$ processing is similar to the one described in Section~\ref{sec:unbiasedprocessing}. Let $h_t\in H$ be a matching decision made at time $t$ and $\sigma_{t-1}$ the match {\bf generated until} time $t$. When targeting recall we accept every decision, to produce $\sigma_{HP}$.
%\noindent 
Targeting precision (\fm), we predict $\hat{Pr}\{e_{t}\}$ and $\hat{P}(\sigma_{t-1})$ ($\hat{F}(\sigma_{t-1})$) and add $\{M_{ij}| e_{t} = \left(a_i, b_j\right)\}$ to the match ($\sigma_{t} = \sigma_{t-1}\cup\{M_{ij}\}$) if $\hat{P}(\sigma_{t-1})\leq \hat{Pr}\{e_{t}\}$ ($0.5\cdot\hat{F}(\sigma_{t-1})\leq \hat{Pr}\{e_{t}\}$). A static threshold may also be applied here (see Section~\ref{sec:glob}). Table~\ref{tab:HPthresholds} summarizes the discussion above, specifying the conditions needed to accept a correspondence to a match. Finally, $HP$ returns the match $\sigma_{HP} = \sigma_{T}$.

\subsection{Recall Boosting}\label{sec:output}
The $HP$ component improves on the precision of the human matcher and is limited to correspondences that were chosen by a human matcher. In what follows, the resulted match may have a low recall, simply because the decision history as provided by the human matcher did not contain enough accurate correspondences to choose from. To better cater to recall, we present next a complementary component, $RB$, to boost recall using adaptation of existing 2LMs (see Section~\ref{ex:matchers}).

\add{$RB$ uses algorithmic match results for correspondences that were not assigned by a human matcher, formally stated as $M^{\partial} = \{\tilde{M}_{ij}|\forall h\in H, h.e \not= (a_i,b_j)\}$. In our experiments, we use {\sf ADnEV}~\cite{shraga2020} to produce $M^{\partial}$ (see Section~\ref{sec:meth})}. When a human matcher does not offer an opinion on a correspondence, it may be intentional or caused inadvertently and the two are indistinguishable. $RB$ complements human matching by analyzing all unassigned correspondences.

$RB$ is a general threshold-based 2LM, from which multiple 2LMs can be derived. Unlike traditional 2LMs, which are applied over a whole similarity matrix, $RB$ is applied to $M^{\partial}$ only, letting $HP$ handle human decisions as discussed in Section~\ref{sec:unbias}. Given a partial matrix $M^{\partial}$ and a set of thresholds $\nu = \{\nu_{ij}\}$, $RB$ is defined as follows:
\begin{equation}\label{eq:2LMg1}
	RB(M^{\partial}, \nu) = \{M^{\partial}_{ij}| M^{\partial}_{ij}\geq\nu_{ij}\}
\end{equation}

$RB$ forms a natural implementation of {\sf Threshold}~\cite{DO2002a} by defining $\nu^{th}$ to assign a single threshold value to all entries such that $\nu_{i,j}=\nu^{th}$ for all matrix entries $(i,j)$. 
To adapt {\sf Max-Delta} to work with $M^{\partial}$, we separate the decision-making process by cases. Let $M_{ij}$ be the entry under consideration. In the case that a human matcher did not choose any value in row $i$, $RB$ adds $M_{ij}$ to $\sigma_{RB}$ if $M_{ij}>0$. There are two possible scenarios when a human matcher chose an entry in row $i$. If none of the entries a human matcher chose in row $i$ are included in $\sigma_{HP}$, then we are back to the first case, adding $M_{ij}$ to $\sigma_{RB}$ if $M_{ij}>0$. Otherwise, some entry in row $i$ has been included in $\sigma_{HP}$ and by using a hyper-parameter $\theta$,\footnote{we use $\theta$ here to avoid confusion with a $\delta$ function notation.} $M_{ij}$ is added to $\sigma_{RB}$ if it satisfies the {\sf Max-Delta} revised condition: $M_{ij}>1-\theta$.  

In what follows, each threshold in the set $\nu^{md}(\theta)$ is formally defined as follows:\footnote{$\varepsilon\sim 0$ is a very small number assuring a strict inequality.\label{fn:epsilon}}
\begin{equation}\label{eq:2LMg2}
	\begin{split}
		\nu_{ij}^{md}(\theta) & = \mathtt{I}_{\{\exists j^{\prime}\in \{1,2,\dots,m\}: M_{ij^{\prime}}\in\sigma_{HP}\}}\cdot\varepsilon\\
		& - \theta\cdot(1 - \mathtt{I}_{\{\nexists j^{\prime}\in \{1,2,\dots,m\}:  M_{ij^{\prime}}\in\sigma_{HP}\}})
	\end{split}
\end{equation}

A straightforward variation of {\sf Max-Delta} can be derived by looking at the columns instead of the rows of each entry, as initially suggested by Do \emph{et al.}~\cite{DO2002a}. 

The implementation of such variant of Eq.~\ref{eq:2LMg2} is given by 
\begin{equation}\label{eq:2LMg22}
	\begin{split}
		\nu_{ij}^{md(col)}(\theta) & = \mathtt{I}_{\{\exists i^{\prime}\in \{1,2,\dots,n\}: M_{ij^{\prime}}\in\sigma_{HP}\}}\cdot\varepsilon\\
		& - \theta\cdot(1 - \mathtt{I}_{\{\nexists i^{\prime}\in \{1,2,\dots,n\}:  M_{ij^{\prime}}\in\sigma_{HP}\}})
	\end{split}
\end{equation}

Finally, we offer an extension to the original implementation of {\sf Dominants} by introducing a tuning window (distance from maximal value) similar to the one defined for {\sf Max-Delta}. The inference is similar to the one applied for $\nu^{md}(\theta)$ and each threshold in the set $\nu^{dom}(\theta)$ can be defined as follows:
\begin{equation}\label{eq:2LMg3}
	\begin{split}
		\nu_{ij}^{dom}(\theta) & = \mathtt{I}_{\{\exists i^{\prime}\in \{1,\dots,n\}: M_{ij^{\prime}}\in\sigma_{HP} \lor \exists j^{\prime}\in \{1,\dots,m\}: M_{ij^{\prime}}\in\sigma_{HP}\}}\cdot\varepsilon\\
		& - \theta\cdot(1 - \mathtt{I}_{\{\nexists i^{\prime}\in \{1,\dots,n\}:  M_{ij^{\prime}}\in\sigma_{HP}\wedge \exists j^{\prime}\in \{1,\dots,m\}: M_{ij^{\prime}}\in\sigma_{HP}\}})
	\end{split}
\end{equation}

The $RB$ methods introduced above are all accompanied by a hyper-parameter (a uniform $\nu^{th}$ for {\sf Threshold} and $\theta$ for the {\sf Max-Delta}'s variations and {\sf Dominants}) controlling the threshold. In this work we focus on a uniform threshold, which also yielded the best results in our empirical evaluation (see Section~\ref{sec:rbres}).

The two generated matches $\sigma_{HP}$ and $\sigma_{RB}$, using $HP$ and $RB$, are combined by \sysNameSpace to create the final match $\hat{\sigma} = \sigma_{HP}\cup \sigma_{RB}$.

\section{Empirical Evaluation}
\label{sec:collbModeleval}

In this work we focus on human matching as a decision making process. Accordingly, different from a typical crowdsourcing setting, the human matchers in our experiments are free to choose their own order of matching elements and to decide on which correspondences to report (as illustrated in Figure~\ref{fig:ex1}). Section~\ref{sec:dataset} describes human matching datasets using two matching tasks and Section~\ref{sec:setup} details the experimental setup. Our analysis shows that
\begin{enumerate}
	\item The \emph{matching quality of human matchers is improved} using (a fairly simple) process-aware inference that takes into account self-reported confidence (Section~\ref{sec:hpres}).
	\item \emph{\sysNameSpace effectively calibrates human matching} (Section~\ref{sec:unbiasingres}) to provide decisions of higher quality, which \emph{produce high precision} values (Section~\ref{sec:precision}), \emph{even if confidence is not reported by human matchers} (Section~\ref{sec:ablres}).
	\item \emph{\sysNameSpace efficiently boosts human matching recall} performance to produce \emph{improved overall matching quality} (Section~\ref{sec:rbres}).
	\item \emph{\sysNameSpace generalizes (without training a new model) beyond the domain of schema matching} to the domain of ontology alignment (Section~\ref{sec:mainresOA}).  
\end{enumerate}

\subsection{Human Matching Datasets}
\label{sec:dataset} 

We created two human matching datasets for our experiments, gathered via a controlled experiment. To simulate a real-world setting, the participants (human matchers) in our study were asked to match two schemata (which they had never seen before) by locating correspondences between them using an interface. Participants were briefed in matching prior to the task and were given a time to prepare on a pair of small schemata (taken from the \emph{Thalia} dataset~\cite{HammerST05}) prior to performing the main matching tasks. Participants in our experiments were Science/Engineering undergraduates who studied database management course. The study was approved by the institutional review board and four pilot participants completed the task prior to the study to ensure its coherence and instruction legibility. A subset of the dataset is available at~\cite{data}.\footnote{We intend to make the full datasets public upon acceptance.}

The main matching tasks were chosen from two domains, one of a schema matching task and the other of an ontology alignment task (which is used to demonstrate generalizability, see Section~\ref{sec:mainresOA}). \add{Reference matches for these tasks were manually constructed by domain experts over the years and considered as ground truth for our empirical evaluation.} The schema matching task was taken from the \emph{Purchase Order} (PO) dataset~\cite{DO2002a} with schemata of medium size, having 142 and 46 attributes (6,532 potential correspondences), and with high information content (labels, data types, and instance examples). A total of 7,618 matching decisions from 175 human matchers were gathered for the PO dataset. The ontology alignment~\cite{EUZENAT2007a} task was taken from the OAEI 2011 and 2016 competitions~\cite{oaeiURL}, containing ontologies with 121 and 109 elements (13,189 potential correspondences) with high information content as well. A total of 1,562 matching decisions from 34 human matchers were gathered for the OAEI dataset. Schema matching and ontology alignment offer different challenges, where ontology elements differ in their characteristics from schemata attributes. Element pairs vary in their difficulty level, introducing a mix of both easy and complex matches. 

\begin{figure}[htpb]
	\centering
	\includegraphics[width=\columnwidth]{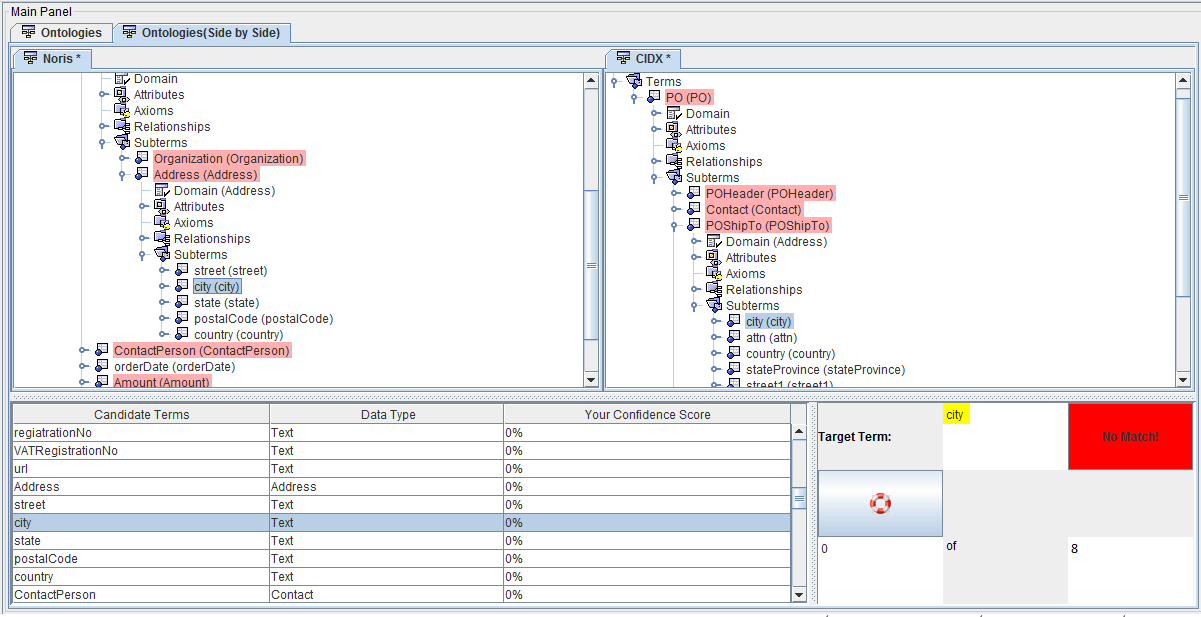}
	\caption{User Interface Example.}
	\label{fig:mipanel1}
\end{figure}

The interface that was used in the experiments is an upgraded version of the Ontobuilder research environment~\cite{MODICA2001}, an open source prototype~\cite{oreURL}. An illustration of the user interface is given in Figure~\ref{fig:mipanel1}. Schemata are presented as foldable trees of terms (attributes). After selecting an attribute from the target schema, the user can choose an attribute from the candidate schema tree. In addition, a list of candidate attributes (from the candidate schema tree), is presented for the user. Selecting an element reveals additional information about it in a properties box. After selecting a pair of elements, match confidence is inserted by participants as a value in $[0,1]$ and timestamped to construct a history. 

\add{The human matchers that participated in the experiments were asked to self-report personal information prior to the experiment. The gathered information includes gender, age, psychometric exam\footnote{\url{https://en.wikipedia.org/wiki/Psychometric_Entrance_Test}} score, English level (scale of 1-5), knowledge in the domain (scale of 1-5) and basic database management education (binary). The human matchers that participated in the experiments reported on average psychometric exam scores that are higher than the general population. While the general population's mean score is 533, participants average is 678. In addition, 88\% of human matchers consider their English level to be at least 4 out of 5 and their knowledge of the domain is low (only 14\% claim their knowledge to be above 1). To sum, the participating human matchers represent academically oriented audience with a proper English level, yet with lack of any significant knowledge in the domain of the task.}

\add{As a final note, we observe a correlation between reported English level and Recall and between reported psychometric exam score and Precision. These results can be justified as better English speakers read faster and can cover more element pairs (Recall) and people that are predicted to have a higher likelihood of academic success at institutions of higher education (higher psychometric score) can be expected to be accurate (Precision). It is noteworthy that these are the only significant correlations found with personal information. This, in turn, emphasizes the importance of understanding the behavior of humans when seeking quality matching, even when personal information is readily available.}

\subsection{Experimental Setup}
\label{sec:setup}

Evaluation was performed on a server with $2$ Nvidia GTX 2080 Ti and a CentOS 6.4 operating system. Networks were implemented using PyTorch~\cite{torchURL} and the code repository is available online~\cite{gitURL}. The $HP$ component of \sysNameSpace was implemented according to Section~\ref{sec:learn}, using an LSTM hidden layer of $64$ nodes and a $128$ nodes fully connected layer. We used Adam optimizer with default configuration ($\beta_1 = 0.9$ and $\beta_2 = 0.999$) with cross entropy and mean squared error loss functions for classifiers ($\hat{Pr}\{e_{t}\}$) and regressors ($\hat{P}(\sigma_{t-1})$ and $\hat{F}(\sigma_{t-1})$), respectively, during training. 

\subsubsection{Evaluation Measures}\label{sec:measures}

We evaluate the matching quality using precision ($P$), recall ($R$), and \fmspace ($F$) (see Section~\ref{sec:smodel}). In addition, we introduce four measures to account for biased matching (Section~\ref{sec:matchingBiases}), which are used in Section~\ref{sec:unbiasingres}. Specifically, we use two correlation measures and two error measures to compute the bias between the estimated values and the true values.
%Two of the proposed measures compute the correlation between the estimated values (\emph{e.g.,} $M_{ij}$) and the true values (\emph{e.g.,} $Pr\{M_{ij}\in\sigma^*\}$) and two compute the accumulated error. 
For the former we use Pearson correlation coefficient ($r$) measuring linear correlation and Kendall rank correlation coefficient ($\tau$) measuring the ordinal association. When measuring the bias of $M_{ij}$, the $r$ and $\tau$ values for a match $\sigma$ are given by:
\begin{equation}
	\label{eq:corr_r}
	r = \frac{\sum_{M_{ij}\in\sigma} (M_{ij} - \bar{\sigma})\cdot ({\rm I\!I}_{\{M_{ij}\in\sigma^*\}} - P(\sigma))}{\sqrt{\sum_{M_{ij}\in\sigma} (M_{ij} - \bar{\sigma})^{2}}\cdot \sqrt{\sum_{M_{ij}\in\sigma}({\rm I\!I}_{\{M_{ij}\in\sigma^*\}} - P(\sigma))^{2}}}
\end{equation}\noindent where $\bar{\sigma} = \frac{\sum_{M_{ij}\in\sigma} M_{ij}}{|\sigma|}$ represents the average of a match $\sigma$ and $P(\sigma)$ is the precision of $\sigma$ (see Eq.~\ref{eq:PandR}). 

\begin{equation}
	\label{eq:corr_tau}
	\tau = \frac{C-D}{C+D}
\end{equation}\noindent where $C$ and $D$ represent the number of concordant and discordant pairs.

When measuring the bias of $P$ and $F$ (for convenience we use $G$ in the formulas), $r$ is given by:
\begin{equation}
	\label{eq:corr_r2}
	r = \frac{\sum_{k=1}^{K} (\hat{G}(\sigma_{k}) - \bar{\hat{G}}(\sigma))\cdot (G(\sigma_{k}) - \bar{G}(\sigma_{k}))}{\sqrt{\sum_{k=1}^{K} (\hat{G}(\sigma_{k}) - \bar{\hat{G}}(\sigma))^{2}}\cdot \sqrt{\sum_{k=1}^{K}(G(\sigma_{k}) - \bar{G}(\sigma_{k}))^{2}}}
\end{equation}\noindent where $\hat{G}$ is the estimated value of $G$ and $\bar{G}(\sigma) = \frac{\sum_{k=1}^{K} G(\sigma_{k})}{K}$ is the average of $G$.

For the latter, we use root mean squared error (RMSE) and mean absolute error (MAE), computed as follows (for the match $\sigma$):

\begin{equation}
	\label{eq:error}
	RMSE = \sqrt{\frac{1}{|\sigma|}\sum_{M_{ij}\in\sigma} ({\rm I\!I}_{\{M_{ij}\in\sigma^*\}} - M_{ij})^{2}}, MAE = \frac{1}{|\sigma|}\sum_{M_{ij}\in\sigma} |{\rm I\!I}_{\{M_{ij}\in\sigma^*\}} - M_{ij}|
\end{equation} 
and for an evaluation measure $G$, as follows:
\begin{equation}
	\label{eq:error_2}
	RMSE = \sqrt{\frac{1}{K}\sum_{k=1}^{K} (G(\sigma_{k}) - \hat{G}(\sigma_{k}))^{2}}, MAE =\frac{1}{K}\sum_{k=1}^{K} |G(\sigma_{k}) - \hat{G}(\sigma_{k})|
\end{equation} 

\subsubsection{Methodology}\label{sec:meth}

\begin{sloppypar}

	Sections~\ref{sec:hpres}-\ref{sec:mainresOA} provide an analysis of \sysName's performance. We analyze the ability of \sysNameSpace to improve on decisions taken by human matchers (Section~\ref{sec:hpres}) and the overall final matching (Section~\ref{sec:rbres}). We further analyze \sysName's generalizability to the domain of ontology alignment (Section~\ref{sec:mainresOA}). The experiments were conducted as follows: 
	
	\vspace{0.05in}
	%\begin{sloppypar}
	\noindent%$\mathbf{HP}$ 
	{\bf Human Matching Improvement (Section~\ref{sec:hpres}):}  Using 5-fold cross validation over the human schema matching dataset (PO task, see Section~\ref{sec:dataset}), we randomly split the data into $5$ folds and repeat an experiment $5$ times with $4$ folds for training ($140$ matchers) and the remaining fold ($35$ matchers) for testing. We report on average performance over all human matchers from the $5$ experiments. Matches for each human matcher are created according to Section~\ref{sec:unbias}. We report on \sysName's ability to calibrate biased matching (Section~\ref{sec:unbiasingres}). An ablation study is reported in Section~\ref{sec:ablres}, for which we trained and tested 6 additional $HP$ implementations using a $5$-fold cross validation as before. In this analysis we explore the feature representation of a matching decision (see Section~\ref{sec:features}) by either solely using or discarding 1) confidence ($h_t.c$), 2) cognitive aspects ($\delta_{t}$ and $a_{e}$), or 3) algorithmic input ($\tilde{M}_{e}$).
\end{sloppypar}

\vspace{0.05in}

\noindent%$\mathbf{RB}$ 
\textbf{\sysName's Overall Performance (Section~\ref{sec:rbres}):} 
We assess the overall performance of \sysName by adding the $RB$ component. In Section~\ref{sec:skills}, we also report on two subgroups of human matchers representing top 10\% (Top-10) and bottom 10\% (Bottom-10) performing human matchers. \add{We selected the RB threshold (in $[0.0, 0.05, \dots, 1.0]$) that yielded the best results during \textbf{training} and analyze this selection in Section~\ref{sec:threshold}.}

\vspace{0.05in}

\noindent{\bf Generalizing to Ontology Alignment (Section~\ref{sec:mainresOA}):} In our final analysis we aim to demonstrate the generalizability of \sysNameSpace in practice. To do so, we assume that we have at our disposal the set of human schema matchers that performed the PO task and we aim to improve on the performance of (different) human matchers on a new (similar) task of ontology alignment over the OAEI task. Specifically, we use the $175$ schema matchers as our training set and generate a \sysNameSpace model. The generated model is than tested on the $34$ ontology matchers.

\vspace{0.05in}

\begin{figure*}[t]
	\begin{subfigure}{.32\linewidth}
		\centering
		\includegraphics[width=\linewidth]{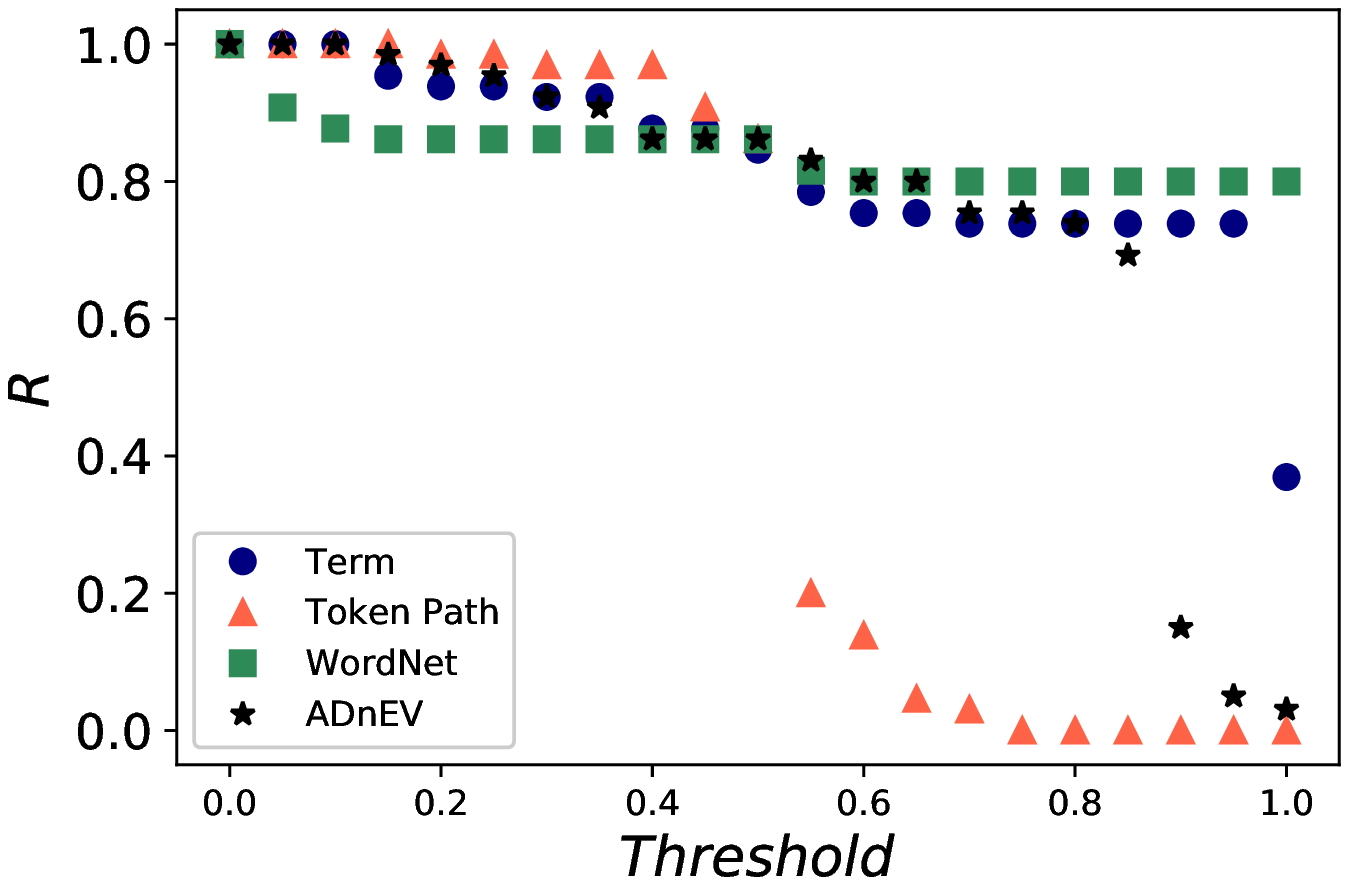}
	\end{subfigure}
	\begin{subfigure}{.32\linewidth}
		\centering
		\includegraphics[width=\linewidth]{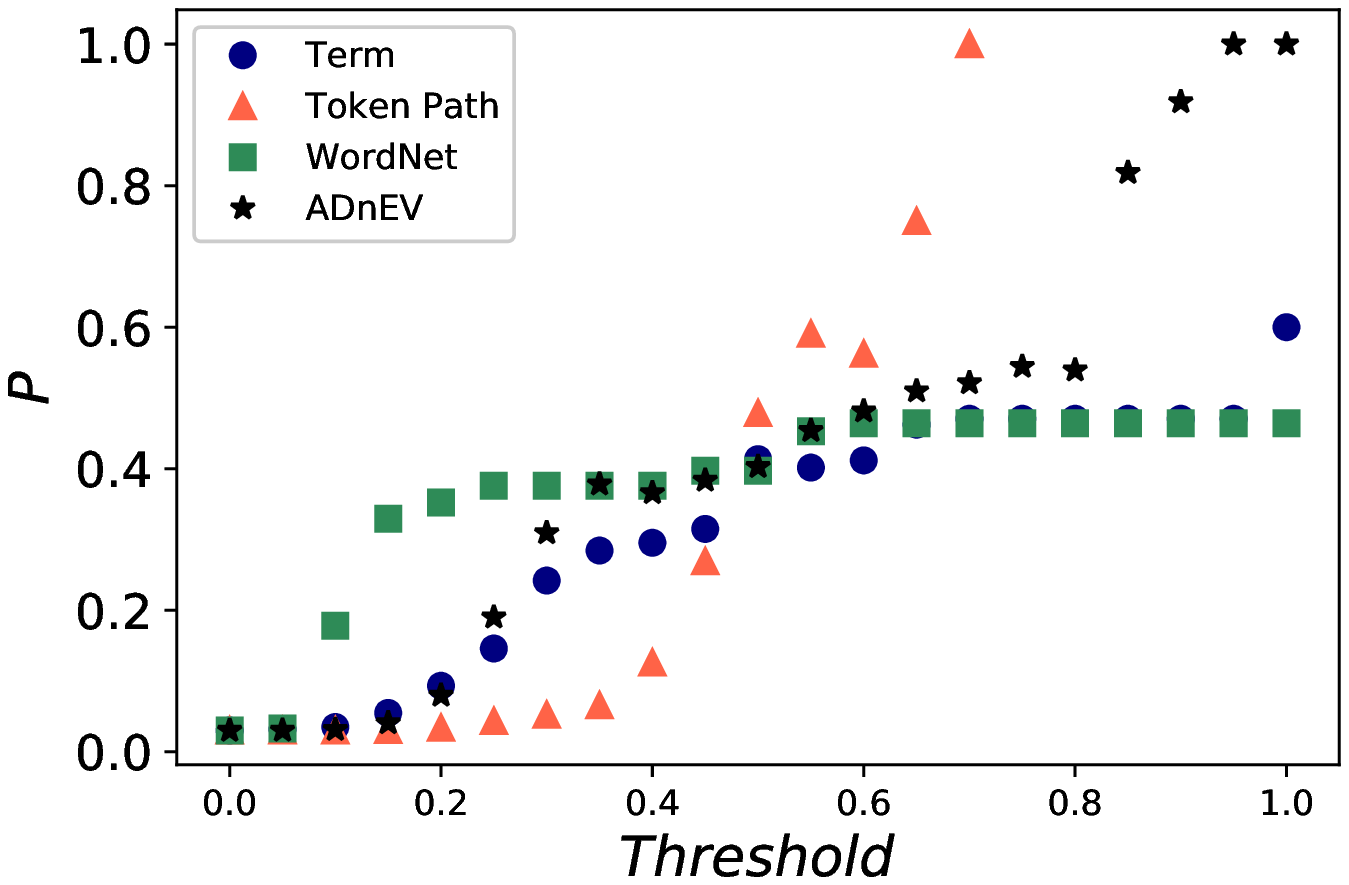}
	\end{subfigure}
	\begin{subfigure}{.32\linewidth}
		\centering
		\includegraphics[width=\linewidth]{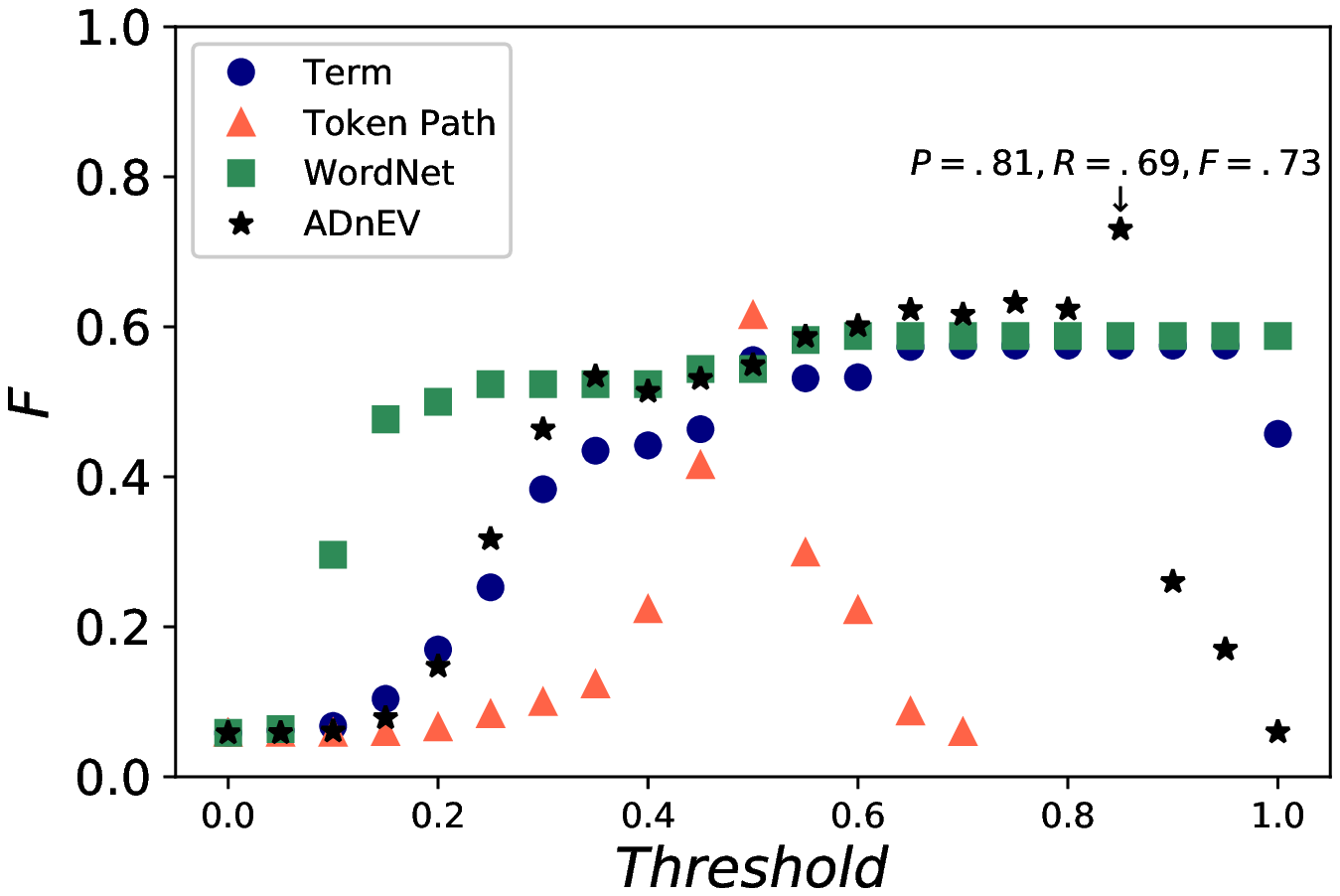}
	\end{subfigure}
	\centering
	\caption{Precision ($P$), recall ($R$), and the \fmspace ($F$) of the algorithmic matchers used in the experiments.}
	\label{fig:alg}
\end{figure*}

We produce $\tilde{M}$ (an algorithmic match, see Section~\ref{sec:model}) using the matchers presented in Section~\ref{ex:matchers} and {\sf ADnEV}, a state-of-the-art aggregated matcher~\cite{shraga2020}. A comparison of the algorithmic matchers performance over several thresholds in terms of precision, recall and F1 measure is given in Figure~\ref{fig:alg}. The comparison shows the superiority of {\sf ADnEV} in high threshold levels. Thus, we present the results of \sysNameSpace using {\sf ADnEV} for recall boosting.

Statistically significant differences in performance are tested using a paired two-tailed t-test with Bonferroni correction for 95\% confidence level, and marked with an asterisk.

\subsubsection{Baselines}\label{sec:base} 

Two types of baselines were used in the experiments we conducted, as follows:
\vspace{0.05in}

\noindent\textbf{Human analysis and improvement (Section~\ref{sec:hpres}):} First, when \sysNameSpace targets recall, it accepts all human judgments (see Theorems~\ref{thm:MIEM} and~\ref{thm:probLocalAnneal}). This also represents traditional methods using human input as ground truth (in this work referred to as ``unbiased matching assumption''). We use two additional types of baselines to evaluate \sysNameSpace ability to improve human matching: 

\begin{enumerate}
	\item  $\mathbf{raw}$: human confidence with threshold filtering that follows Section~\ref{sec:unbiasedprocessing}. For example, targeting $F$ with static threshold ($0.5$, see Table~\ref{tab:HPthresholds}) represents a likelihood-based baseline that accepts decisions assigned with a confidence level greater than $0.5$.
	\item $\mathbf{ML}$: non process-aware machine learning. We experimented with several common classifiers (\emph{e.g.,} SVM) and regressors (\emph{e.g.,} Lasso), and selected the top performing one during training.\footnote{see the full list of $ML$ models (including their configuration) in~\cite{gitConfig}.} The selected machine learning models were then used to replace the neural process-aware classifiers and regressors in the $HP$ component of \sysName.
\end{enumerate}

\vspace{0.05in}

\noindent\textbf{Matching improvement (Sections~\ref{sec:rbres}-\ref{sec:mainresOA}):} We use four algorithmic matching baselines, namely, {\sf Term}, {\sf WordNet}, {\sf Token Path}, and {\sf ADnEV}, the state-of-the-art deep learning algorithmic matching~\cite{shraga2020}. For each, we applied several thresholds (see Figure~\ref{fig:alg}) during training and report on the top performing threshold. Since {\sf ADnEV} shows the best overall performance ($F=0.73$), we combine it with human matching ($raw$) to create a human-algorithm baseline ($raw$-{\sf ADnEV}).

\subsection{Improving Human Matching}
\label{sec:hpres}

We analyze the ability of \sysName's $HP$ component to improve the precision of human matching decisions (Section~\ref{sec:precision}) and to calibrate human confidence, potentially achieving unbiased matching (Section~\ref{sec:unbiasingres}). We then provide feature analysis via an ablation study in Section~\ref{sec:ablres}. 

\subsubsection{Precision}
\label{sec:precision}

Table~\ref{tab:mainhp} provides a comparison of results in terms of precision ($P$) and its computational components ($|\sigma\cap \sigma^{*}|$ and $|\sigma|$, see Eq.~\ref{eq:PandR}) when applying \sysNameSpace to two baselines, namely $raw$ (assuming unbiased matching) and $ML$ (applying non-process aware learning), see details in Section~\ref{sec:base}. We split the comparison by target measure (see Section~\ref{sec:model}), namely targeting 1) recall ($R$),\footnote{Since recall is always monotonic (Theorem~\ref{thm:MIEM}), a dominating strategy adds all correspondences (first row in Table~\ref{tab:mainhp})\label{fn:recall}.} 2) precision ($P$) with static ($1.0$) and dynamic ($\hat{P}(\sigma_{t-1})$) thresholds, and 3) \fmspace ($F$) with static ($0.5$) and dynamic ($0.5\cdot\hat{F}(\sigma_{t-1})$) thresholds (as illustrated in Table~\ref{tab:HPthresholds}). The best results within each target measure (+threshold) are marked in bold.

\begin{table}[t]
	\caption{True positive ($|\sigma\cap \sigma^{*}|$), match size ($|\sigma|$), and precision ($P$) by target measure, applying history processing ($HP$)}
	\label{tab:mainhp}
	\scalebox{1}{\begin{tabular}{|l|l|l|c|c|c|}
			\hline
			Target         & (threshold) & $HP$        & $\mid\sigma\cap \sigma^{*}\mid$              & $\mid\sigma\mid$              & $P$               \\ \hline
			$R$ & $0.0$ & -        & 19.02 & 43.53 & 0.549           \\ \toprule[0.15em]
			\multirow{6}{*}{$P$}       & \multirow{3}{*}{$1.0$} & $raw$  & 10.79          & 19.91          & 0.656           \\ %\cline{2-5}
			&& $ML$     & 14.01*         & 14.02          & 0.999*          \\ %\cline{2-5} 
			&& \cellcolor{Gray}\sysName    & \cellcolor{Gray}\textbf{15.80*}         & \cellcolor{Gray}15.81         & \cellcolor{Gray}\textbf{0.999*} \\\cline{2-5}  %\toprule[0.15em]
			&\multirow{3}{*}{$\hat{P}(\sigma_{t-1})$} & $raw$  & 11.34          & 21.69          & 0.612           \\ %\cline{2-5} 
			&& $ML$     & 16.21*         & 18.63          & 0.858*          \\ %\cline{2-5} 
			&& \cellcolor{Gray}\sysName    & \cellcolor{Gray}\textbf{16.50*}         & \cellcolor{Gray}16.62          & \cellcolor{Gray}\textbf{0.987*}          \\ \toprule[0.15em]
			\multirow{6}{*}{$F$}       & \multirow{3}{*}{$0.5$} & $raw$  & \textbf{18.19}          & 36.90          & 0.574           \\ %\cline{2-5} 
			&& $ML$      & 16.16          & 17.97          & 0.890*          \\ %\cline{2-5} 
			&& \cellcolor{Gray}\sysName     & \cellcolor{Gray}16.65          & \cellcolor{Gray}16.73          & \cellcolor{Gray}\textbf{0.993*}          \\\cline{2-5}
			&\multirow{3}{*}{$0.5\cdot\hat{F}(\sigma_{t-1})$} & $raw$ & \textbf{18.05}          & 39.65          & 0.552           \\ %\cline{2-5} 
			&& $ML$      & 16.95          & 22.16          & 0.759           \\ %\cline{2-5} 
			&& \cellcolor{Gray}\sysName     & \cellcolor{Gray}16.97          & \cellcolor{Gray}17.32          & \cellcolor{Gray}\textbf{0.968*}          \\ \hline
	\end{tabular}}
\end{table}

The first row of Table~\ref{tab:mainhp} represents the decision making when targeting recall, \emph{i.e.,} accepting {\bf all} human decisions.$^{\ref{fn:recall}}$ A clear benefit of treating matching \emph{as a process} is observed when comparing this na\"ive baseline of accepting human decisions as ground truth (first row of Table~\ref{tab:mainhp}) to a fairly simple process-aware inference (as in Section~\ref{sec:unbiasedprocessing}) using the reported confidence ($raw$) aiming to improve a quality measure of choice (target measure). Empirically, even when assuming unbiased matching ($raw$), as in Section~\ref{sec:unbiasedprocessing}, we achieve average precision improvement of 9\%. 

\sysNameSpace achieves a statistically significant precision improvement over $raw$, with an average improvement of 65\%, targeting both $P$ and $F$ with both static and dynamic thresholds. \sysNameSpace also outperforms the learning-based baseline ($ML$), achieving 13.6\% higher precision on average. Even when the precision is similar, \sysNameSpace generates larger matches ($|\sigma|$), which can improve recall and \fm, as discussed in Section~\ref{sec:rbres}. Since \sysNameSpace performs process-aware learning (LSTM), its improvement over $ML$ supports the use of matching as a process. %The improvement over $ML$ also supports the use of matching as a process.

\sysNameSpace achieves highest precision when targeting $P$ with static threshold, forcing the algorithm to accept only decisions for which \sysNameSpace is fully confident ($\hat{Pr}\{e_{t}\} = 1$). Alas, it comes at the cost match size (less than $16$ correspondences on average). This indicates that being conservative is beneficial for precision, yet it has a disadvantage when it comes to recall (and accordingly \fm). Targeting $F$, especially with dynamic thresholds, balances match size and precision. This observation becomes essential when analyzing the full performance of \sysNameSpace (Section~\ref{sec:rbres}).

\begin{figure*}[t]
	
	\begin{subfigure}{.49\linewidth}
		\centering
		\caption{Precision ($P$) vs. True positive ($|\sigma\cap \sigma^{*}|$)}
		\includegraphics[width=\linewidth]{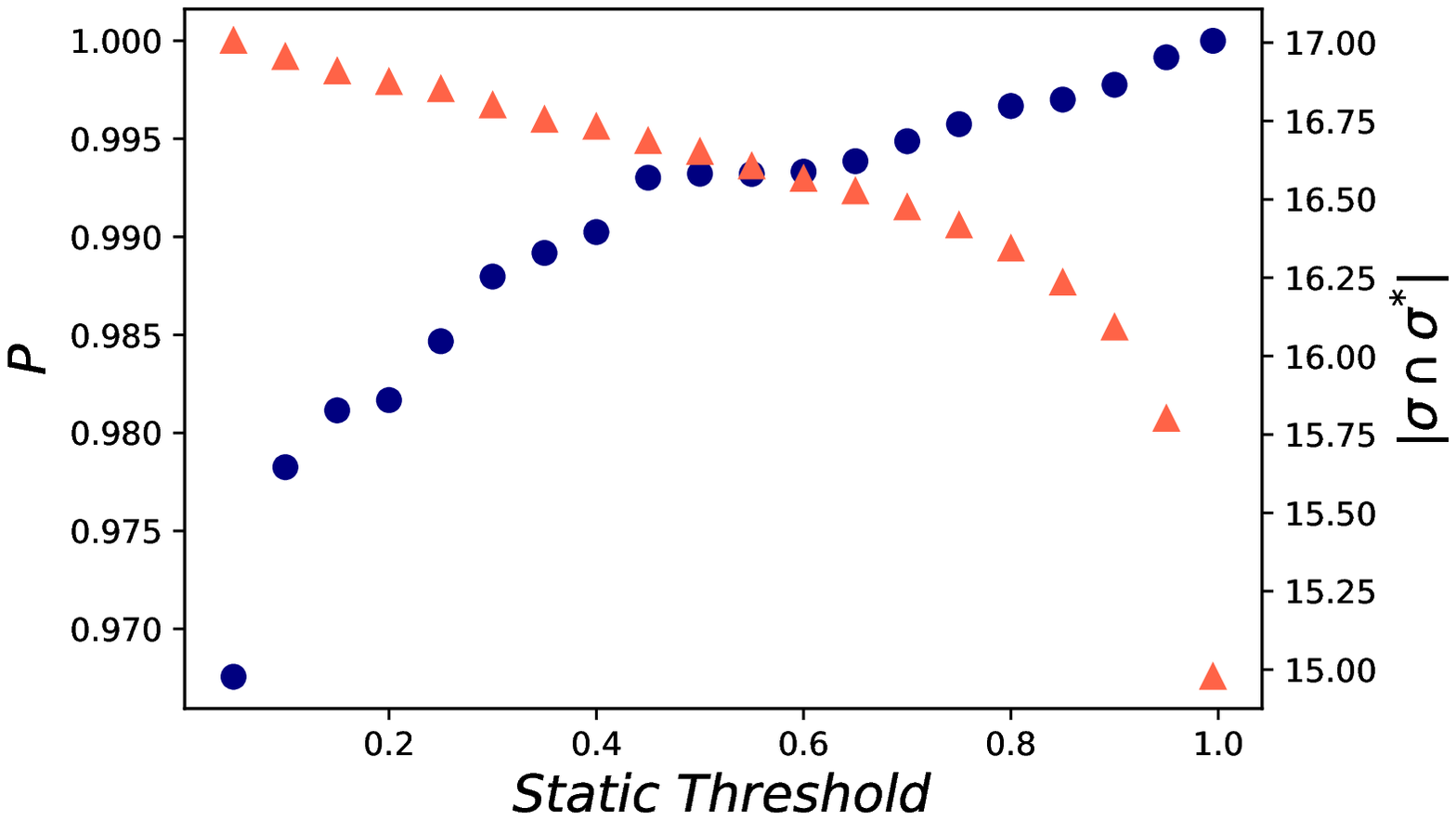}
		\label{fig:th1}
	\end{subfigure}
	\begin{subfigure}{.49\linewidth}
		\centering
		\caption{Match size ($|\sigma|$) vs. True positive ($|\sigma\cap \sigma^{*}|$)}
		\includegraphics[width=\linewidth]{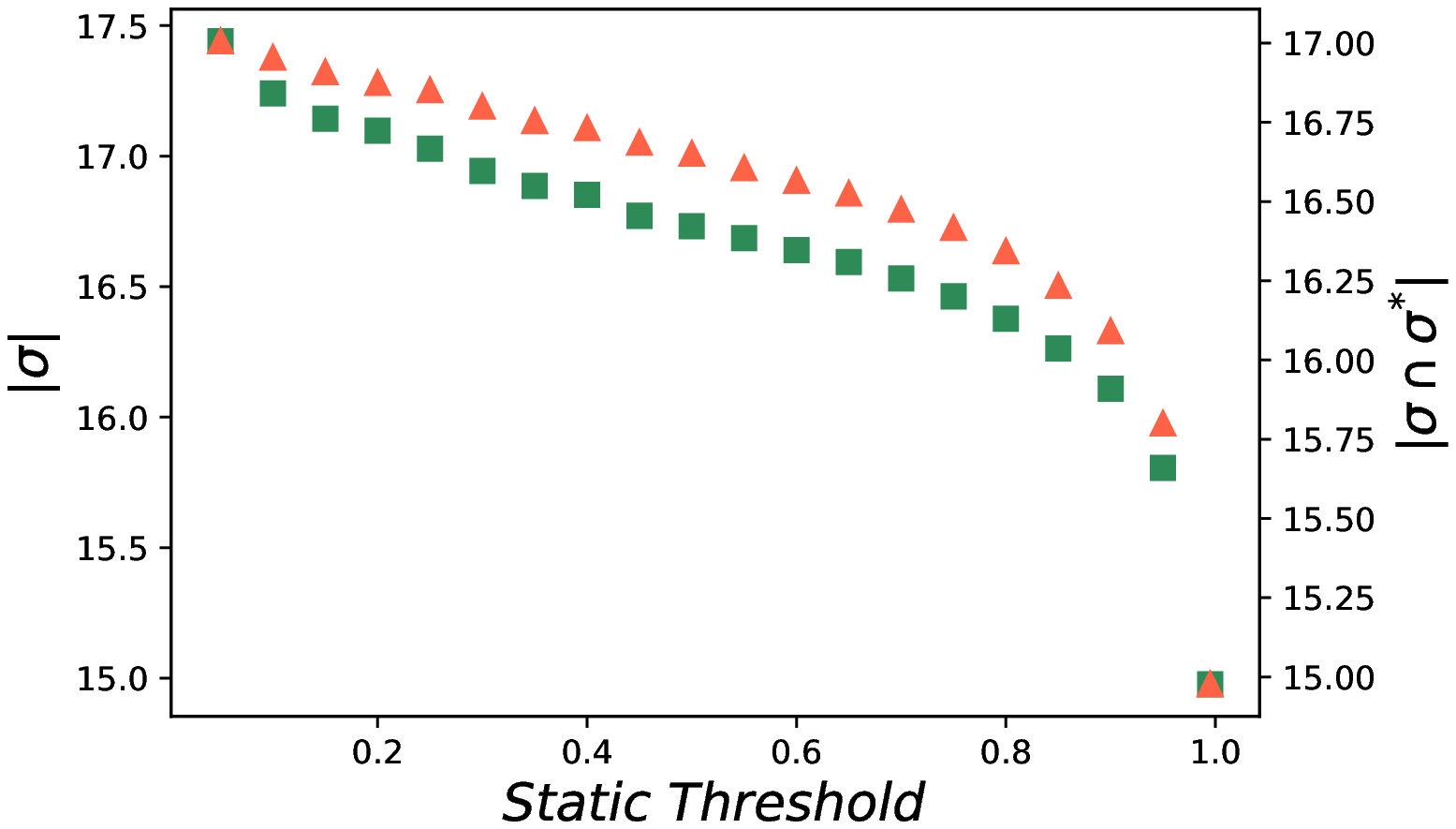}
		\label{fig:th2}
	\end{subfigure}
	\caption{Static threshold analysis}
	\label{fig:hp_th}
\end{figure*}

\add{Figure~\ref{fig:hp_th} analyzes \sysNameSpace precision ($P$) and its computational components ($|\sigma\cap \sigma^{*}|$ and $|\sigma|$) over varying static thresholds. We discard the threshold level of $0.0$ (associated with targeting recall) for readability. As illustrated, as the the static threshold increases, the precision increases while the size of the match and number of true positives decreases. In other words, when using a static threshold the selecting also depends on the target measure. Targeting high precision is associated with high, conservative, thresholds, while targeting large matches (and accordingly recall and \fm) is associated with selecting lower thresholds.}

\subsubsection{Calibration}
\label{sec:unbiasingres}

In the final analysis of the $HP$ component, we quantify its ability to calibrate human confidence to potentially achieve unbiased matching. Table~\ref{tab:calibhp} compares the results of \sysNameSpace to three baselines, namely  $raw$ (assuming unbiased matching), {\sf ADnEV} (representing state-of-the-art algorithmic matching) and $ML$ (applying non-process aware learning), in terms of correlation ($r$ and $\tau$) and error (RMSE and MAE), see Section~\ref{sec:measures}, with respect to the three estimated quantities, $\hat{Pr}\{e_{t}\}$, $\hat{P}(\sigma_{t-1})$ and $\hat{F}(\sigma_{t-1})$. For $raw$ and {\sf ADnEV} the quantities are calculated by Eq.~\ref{eq:unbiasedMatchingEstim} and for $ML$ and \sysNameSpace they are predicted using the $HP$ component. The best results within each quantity are marked in bold.

\begin{table}[h!]
	\caption{Correlation ($r$ and $\tau$) and error (RMSE and MAE) in estimating $\hat{Pr}\{M_{ij}\in\sigma^{*}\}$, $\hat{P}(\sigma_{t-1})$ and $\hat{F}(\sigma_{t-1})$}
	\label{tab:calibhp}
	\begin{tabular}{|l|l|cc|cc|}
		\hline
		Measure               & Method & $r$      & $\tau$       & RMSE          & MAE           \\ \hline
		\multirow{4}{*}{$\hat{Pr}\{M_{ij}\in\sigma^{*}\}$} & $raw$ & 0.29          & 0.26          & 0.59          & 0.45          \\ %\cline{2-6} 
		& {\sf ADnEV}      & 0.27          & 0.22          & 0.24         & 0.21         \\ %\cline{2-6} 
		& $ML$      & 0.76          & 0.62          & 0.30          & 0.13          \\ %\cline{2-6} 
		& \cellcolor{Gray}\sysName     & \cellcolor{Gray}\textbf{0.78} & \cellcolor{Gray}\textbf{0.64} & \cellcolor{Gray}\textbf{0.23} & \cellcolor{Gray}\textbf{0.06} \\ \toprule[0.15em]
		\multirow{4}{*}{$\hat{P}(\sigma_{t-1})$}    & $raw$ & 0.23          & 0.17          & 0.99          & 0.50          \\ %\cline{2-6} 
		& {\sf ADnEV}      & -          & -          & 0.19         & 0.19         \\ %\cline{2-6} 
		& $ML$      & 0.00          & -0.01         & 0.34          & 0.29          \\ %\cline{2-6} 
		& \cellcolor{Gray}\sysName     & \cellcolor{Gray}\textbf{0.90} & \cellcolor{Gray}\textbf{0.72} & \cellcolor{Gray}\textbf{0.13} & \cellcolor{Gray}\textbf{0.10} \\ \toprule[0.15em]
		\multirow{4}{*}{$\hat{F}(\sigma_{t-1})$}    & $raw$ & 0.21          & 0.15          & 0.77          & 0.39          \\ %\cline{2-6} 
		& {\sf ADnEV}      & -          & -         & 0.37         & 0.37         \\ %\cline{2-6} 
		& $ML$      & -0.06         & -0.04         & 0.27          & 0.23          \\ %\cline{2-6} 
		& \cellcolor{Gray}\sysName     & \cellcolor{Gray}\textbf{0.80} & \cellcolor{Gray}\textbf{0.60} & \cellcolor{Gray}\textbf{0.12} & \cellcolor{Gray}\textbf{0.09} \\ \hline
	\end{tabular}
\end{table}

\add{As depicted in Table~\ref{tab:calibhp}, both human and algorithmic matching is biased and \sysNameSpace performs well in calibrating it.} Specifically, \sysNameSpace improves $raw$ human decision confidence correlation with decision correctness by 169\% and 146\% in terms of $r$ and $\tau$, respectively, and lowers the respective error by 0.36 and 0.39 in terms of $RMSE$ and $MAE$, respectively. Algorithmic matching ({\sf ADnEV}) exhibits similar low correlation while reducing the error as well. A possible explanation involves the significant number of non-corresponding element pairs (non-matches). While human matchers avoid assigning confidence values to non-matches (and therefore such element pairs are not included in the error computation), algorithmic matchers assign (very) low similarity scores to non-corresponding element pairs. For example, none of the human matchers selected (and assigned confidence score to) the (incorrect) correspondence between {\sf Contact.e-mail} and {\sf POBillTo.city}, while all algorithmic matchers assigned a similarity score of less than 0.05 to this correspondence. This observation may also serve as an explanation to the proximity between the RMSE and MAE values of {\sf ADnEV}. Finally, when compared to non-process aware learning ($ML$), \sysNameSpace achieves only a slight correlation ($r$ and $\tau$) improvement, yet $ML$'s error values ($RMSE$ and $MAE$) are significantly higher. These higher error values demonstrate that process aware-learning, as applied by \sysName, is better in accurately predicting the probability of a decision in the history to be correct ($\hat{Pr}\{e_{t}\}$) and explains the superiority of \sysNameSpace in providing precise results (Table~\ref{tab:mainhp}).

\subsubsection{Ablation Study}
\label{sec:ablres}

After empirically validating that applying a process-aware learning (as in \sysName) is better than assuming unbiased matching ($raw$) and unordered learning ($ML$), we next analyze the various representations of matching decisions (Section~\ref{sec:features}), namely, 1) \textbf{conf}idence ($h_t.c$), 2) \textbf{cognitive} aspects ($\delta_{t}$ and $a_{e}$), or 3) \textbf{alg}orithmic input ($\tilde{M}_{e}$). Similar to Table~\ref{tab:mainhp}, Table~\ref{tab:ablhp} presents precision ($P$) and its computational components ($|\sigma\cap \sigma^{*}|$ and $|\sigma|$, see Eq.~\ref{eq:PandR}, with respect to a target measure (without recall$^{\ref{fn:recall}}$). Table~\ref{tab:ablhp} compares \sysNameSpace to: 1) using each decision representation element by itself (\emph{only}) and 2) removing it one at a time (\emph{w/o}). Boldface entries indicate the higher importance (for \emph{only} higher quality and for \emph{w/o} lower quality).

\begin{table*}[t!]
	\caption{Decision representation ablation study. {\em only} refers to training using only one decision representation element while {\em w/o} refers to the exclusion of a decision representation element at a time.}
	\label{tab:ablhp}
	\scalebox{.75}{\begin{tabular}{|l|l|c|c|c|c|c|c|c|c|c|c|c|c|}
			\hline
			\multicolumn{2}{|l|}{Target $\rightarrow$}  & \multicolumn{6}{c|}{$P$}  & \multicolumn{6}{c|}{$F$}   \\ 
			\multicolumn{2}{|l|}{(threshold)}  & \multicolumn{3}{c}{$1.0$}  & \multicolumn{3}{c|}{$\hat{P}(\sigma_{t-1})$} & \multicolumn{3}{c}{$0.5$} & \multicolumn{3}{c|}{$0.5\cdot\hat{F}(\sigma_{t-1})$}   \\\hline
			\multicolumn{2}{|l|}{$\downarrow$ Decision Rep.}    & $\mid\sigma\cap \sigma^{*}\mid$ & $\mid\sigma\mid$ & $P$                             & $\mid\sigma\cap \sigma^{*}\mid$ & $\mid\sigma\mid$ & $P$                             & $\mid\sigma\cap \sigma^{*}\mid$ & $\mid\sigma\mid$ & $P$                             & $\mid\sigma\cap \sigma^{*}\mid$ & $\mid\sigma\mid$ & $P$                             \\ \hline
			\multirow{3}{*}{only} & conf      & \textbf{18.78} & 19.94            & 0.94                           & \textbf{19.02} & 20.53            & 0.93                           & \textbf{18.72} & 20.03            & 0.93                           & \textbf{17.92} & 19.53            & 0.92                           \\ \cline{2-14} 
			& cognitive & 15.29                           & 15.80            & \textbf{0.96} & 15.35                           & 16.53            & \textbf{0.93} & 16.11                           & 16.91            & \textbf{0.95} & 16.27                           & 17.31            & \textbf{0.94} \\ \cline{2-14} 
			& alg       & 14.15                           & 16.93            & 0.82                           & 16.86                           & 21.17            & 0.75                           & 17.24                           & 21.71            & 0.76                           & 18.35                           & 24.48            & 0.70                           \\ \toprule[0.15em]
			\multirow{3}{*}{w/o}  & conf      & \textbf{13.36} & 15.76            & 0.85                           & \textbf{16.14} & 20.43            & 0.79                           & \textbf{16.39} & 20.57            & 0.797                           & \textbf{13.53} & 18.23            & 0.74                           \\ \cline{2-14} 
			& cognitive & 14.10                           & 16.86            & \textbf{0.81} & 16.79                           & 20.89            & \textbf{0.77} & 17.05                           & 21.18            & \textbf{0.77} & 18.18                           & 23.66            & \textbf{0.72} \\ \cline{2-14} 
			& alg       & 15.46                           & 15.66            & 0.98                           & 16.42                           & 16.54            & 0.98                           & 16.58                           & 16.66            & 0.98                           & 16.68                           & 17.28            & 0.95                           \\ \hline
	\end{tabular}}
\end{table*}

Examining Table~\ref{tab:ablhp}, we observe that two aspects of the decision representation are predominant, namely confidence and cognitive aspects. Using confidence features only yields the highest proportion of correct correspondences ($|\sigma\cap \sigma^{*}|$) and using only cognitive features offers the best precision values. 
The ablation study shows that even when self-reported confidence is absent from the decision representation, \sysNameSpace can provide precise results using other aspects of the decision. For example, by using only the decision time (the time it took for the matcher to make the decision) and consensus (the amount of matchers who assigned the current correspondence) to represent decisions (only cognitive aspects, second row of Table~\ref{tab:ablhp}), the process-aware learning of \sysNameSpace (targeting $F$) achieves a precision of $0.94$. 

Cognitive aspects and confidence together, without an algorithmic similarity (bottom row of Table~\ref{tab:ablhp}), achieves comparable results to the ones reported in Table~\ref{tab:mainhp}, while eliminating either confidence or cognitive features reduces performance. This observation may indicate that the algorithmic matcher is not as important to correspondences that are assigned by human matchers. 
% Finally, we note that regardless of the decision representation, 
Recalling that $HP$ is designed to consider only the set of correspondences that were originally assigned by the human matcher during the decision history, in the following section we show the importance of algorithmic results in complementing human decisions.% using the recall boosting component ($RB$).  

\subsection{Improving Matching Outcome}
\label{sec:rbres}

\begin{sloppypar}
	We now examine the overall performance of \sysNameSpace and the ability of $RB$ to boost recall. The $RB$ thresholds (see Section~\ref{sec:output}) for \sysName($HP$+$RB$) and $raw$-{\sf ADnEV} were set to the top performing thresholds \textbf{during training} ($0.9$ and $0.85$, respectively). Table~\ref{tab:mainrb} compares, for each target measure, results of \sysNameSpace with and without recall boosting (\sysName($HP$+$RB$) and \sysName($HP$), respectively) and $raw$-{\sf ADnEV} (see Section~\ref{sec:base}). In addition, the four last rows of Table~\ref{tab:mainrb} exhibit the results of algorithmic matchers, for which we present the threshold yielding the best performance in terms of $F$. Best results for each quality measure are marked in bold.
\end{sloppypar}

\begin{table}[t]
	%	\vskip-.1in
	\caption{Precision ($P$), recall ($R$), \fmspace ($F$) of \sysNameSpace by target measure compared to baselines (PO task)}
	\label{tab:mainrb}
	\scalebox{1}{\begin{tabular}{|l|l|l|ccc|}
			\hline
			Target                 & (threshold)     & Method  & $P$              & $R$              & $F$              \\ \hline
			\multirow{3}{*}{$R$}    & \multirow{3}{*}{$0.0$}    & \sysName($HP$)      & 0.549          & 0.293          & 0.380          \\ \cline{3-5} 
			&& \cellcolor{Gray}\sysName($HP$+$RB$)      & \cellcolor{Gray}0.776          & \cellcolor{Gray}0.844          & \cellcolor{Gray}0.797          \\ %\cline{2-6} 
			&& $raw$-{\sf ADnEV}      & 0.731          & 0.807          & 0.756          \\ \hline
			\multirow{6}{*}{$P$} & \multirow{3}{*}{$1.0$} & \sysName($HP$)      & \textbf{0.999} & 0.230          & 0.364          \\ \cline{3-6} 
			&& \cellcolor{Gray}\sysName($HP$+$RB$)        & \cellcolor{Gray}\textbf{0.999*} & \cellcolor{Gray}0.782          & \cellcolor{Gray}0.876*          \\ %\cline{2-6} 
			&& $raw$-{\sf ADnEV}      & 0.824          & 0.680          & 0.729          \\ \cline{2-6}
			& \multirow{3}{*}{$\hat{P}(\sigma_{t-1})$}& \sysName($HP$)      & 0.987          & 0.254          & 0.391          \\ \cline{3-6} 
			&& \cellcolor{Gray}\sysName($HP$+$RB$)        & \cellcolor{Gray}0.998*          & \cellcolor{Gray}0.805*         & \cellcolor{Gray}0.889*          \\ %\cline{2-6} 
			&& $raw$-{\sf ADnEV}    & 0.808          & 0.688          & 0.730          \\ \hline
			\multirow{6}{*}{$F$}  & \multirow{3}{*}{$0.5$} & \sysName($HP$)      & 0.993          & 0.256          & 0.393          \\ \cline{3-6} 
			&& \cellcolor{Gray}\sysName($HP$+$RB$)        & \cellcolor{Gray}0.998*          & \cellcolor{Gray}0.807          & \cellcolor{Gray}0.891*          \\ %\cline{2-6} 
			&& $raw$-{\sf ADnEV}     & 0.754          & 0.794          & 0.764          \\ \cline{2-6} 
			& \multirow{3}{*}{$0.5\cdot\hat{F}(\sigma_{t-1})$} & \sysName($HP$)    & 0.968          & 0.261          & 0.398          \\ \cline{3-6} 
			&& \cellcolor{Gray}\sysName($HP$+$RB$)        & \cellcolor{Gray}0.993*          & \cellcolor{Gray}0.812          & \cellcolor{Gray}\textbf{0.892*} \\ %\cline{2-6} 
			&& $raw$-{\sf ADnEV}     & 0.741          & 0.792          & 0.754          \\ \hline
			\multicolumn{2}{|c|}{} & {\sf ADnEV}~\cite{shraga2020}      & 0.810          & 0.692          & 0.730          \\ \cline{3-6} 
			\multicolumn{2}{|c|}{}& {\sf Term}~\cite{GAL2011}       & 0.471          & 0.738          & 0.575          \\ \cline{3-6} 
			\multicolumn{2}{|c|}{-}& {\sf Token Path}~\cite{PEUKERT2011}       & 0.479          & \textbf{0.862} & 0.615          \\ \cline{3-6} 
			\multicolumn{2}{|c|}{}& {\sf WordNet}~\cite{wordnet1}      & 0.453          & 0.815          & 0.582          \\ \hline
	\end{tabular}}
	%\vskip-.1in
\end{table}

Evidently, $RB$ improves (mostly in a statistical significance manner) recall and F1 measure over \sysName's $HP$. On average, the recall boosting phase improves recall by 214\% and the F1 measure by 125\%. Compared to the baselines, \sysNameSpace outperforms $raw$-{\sf ADnEV} by 23\%, 8\%, and 17\% in terms of $P$, $R$, and $F$ on average, respectively, and performs better than {\sf ADnEV}, {\sf Term}, {\sf Token Path}, {\sf WordNet} by 19\%, 51\%, 41\%, and 49\% in terms of $F$, on average.

Compared to the baselines, \sysNameSpace outperforms $raw$-{\sf ADnEV} by 23\%, 8\%, and 17\% in terms of $P$, $R$, and $F$ on average, respectively, and performs better than {\sf ADnEV}, {\sf Term}, {\sf Token Path}, {\sf WordNet} by 19\%, 51\%, 41\%, and 49\% in terms of $F$, on average. We next dive into more detailed analysis, namely skill-based performance (Section~\ref{sec:skills}) and precision-recall tradeoff (Section~\ref{sec:threshold}). 

\subsubsection{$RB$'s Effect via Skill-based Analysis}
\label{sec:skills}

We note again that poor recall is a feature of human matching and while raw human matching oftentimes also suffers from low precision, \sysNameSpace can boost the precision to obtain reliable human matching results (Section~\ref{sec:hpres}). We now analyze the $RB$'s ability to boost recall. 

Overall, \sysNameSpace aims to screen low quality \textbf{matching decisions} rather than low quality \textbf{matchers}, acknowledging that low quality matchers can sometimes make valuable decisions while high quality matcher may slip and provide erroneous decisions. To illustrate the effect of \sysNameSpace with respect to the varying abilities of human matchers to perform high quality matching, in addition to all human matchers (All), we also investigate high-quality (Top-10) and low-quality matchers (Bottom-10). 

\begin{wrapfigure}[8]{R}{0.6\textwidth}
	\vspace*{-1.1cm}
	\begin{minipage}{0.6\textwidth}
		\begin{table}[H]
			\caption{History sizes ($|H|$), match sizes ($|\sigma|$), true positive number ($|\sigma\cap \sigma^{*}|$), and Precision ($P$), recall ($R$), and \fmspace ($F$) improvement achieved by $RB$ by matchers subgroup}
			\label{tab:skillrb}
			\scalebox{.825}{\begin{tabular}{|l|c|c|c|c|c|c|}
					\hline
					& $|H|$ & \multicolumn{2}{|c|}{$|\sigma|$ ($|\sigma\cap \sigma^{*}|$)} & \multicolumn{3}{|c|}{$RB$ \% Improvement}\\\hline
					$\downarrow$ Group & & $HP$ & $RB$  & $P$ & $R$ & $F$ \\\hline
					All & 43.5 & 17.3 (17) & 36.2 (35.5) & 2\% & 211\% & 122\%\\\hline 
					Top-10 & 54.5 & 43 (43) & 6.9 (6.9) & 0\% & 16\% & 12\%\\\hline
					Bottom-10 & 26.2 &  2.7 (2.3) & 46.6 (43.2) & 4\% & 3,230\% & 1,900\%\\\hline
			\end{tabular}}
		\end{table}
	\end{minipage}
	\vspace*{.3cm}
\end{wrapfigure}

On average, using Eq.~\ref{eq:hutomat}, all human matchers yielded matches with $P$=$.55$, $R$=$.29$, $F$=$.38$, the Top-10 group matches with $P$=$.91$, $R$=$.74$, $F$=$.81$, and the Bottom-10 group matches with $P$=$.14$, $R$=$.06$, $F$=$.08$. Table~\ref{tab:skillrb} compares the three groups in terms of history size, \emph{i.e.,} average number of human decisions, match size (and number of true positive) of the $HP$ phase, the average number of (correct) correspondences added in the $RB$ phase, and the improvement in terms of $P$, $R$, and $F$ of the recall boosting using the $RB$ component.

$RB$ significantly improves, over all human matchers, recall ($211\%$ on average) and \fmspace ($122\%$ on average) and slightly improves precision. When it comes to low-quality matchers $RB$ has a considerable role (bottom row, Table~\ref{tab:skillrb}) while for high-quality matchers, RB only provides a slight recall boost (middle row, Table~\ref{tab:skillrb}).

\sysNameSpace is judicious even when calibrating the results of the high quality matchers. While on average, $49.8$ of the $54.5$ raw decisions of high-quality human matchers are correct, \sysNameSpace only uses an average of $43$ (correct) correspondences when processing history, omitting, on average, $6.8$ correct correspondences from the final match (recall that $RB$ considers only $M^{\partial}$, see Section~\ref{sec:output}). However, a state-of-the-art algorithmic matcher enables recall boosting, adding an average of $6.9$ (other) correct correspondences to the final match, improving both recall and \fm.

\subsubsection{\sysNameSpace Precision - Recall Tradeoff}
\label{sec:threshold}

\begin{figure}[h]
	\begin{subfigure}{.32\linewidth}
		\centering
		\caption{Target measure: recall ($R$)}
		\includegraphics[width=\linewidth]{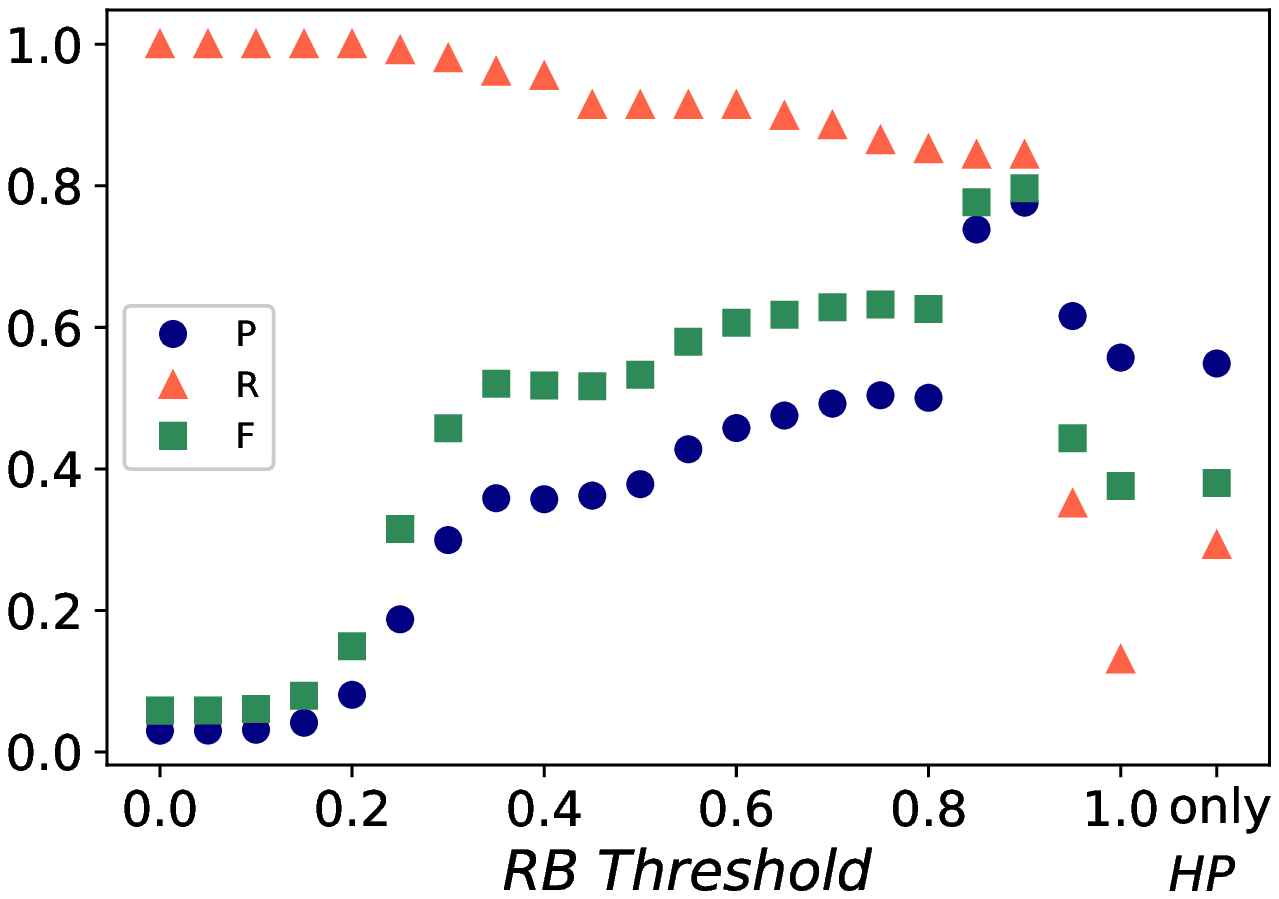}
		\label{fig:mainR}
	\end{subfigure}
	\begin{subfigure}{.32\linewidth}
		\centering
		\caption{Target measure: precision ($P$)}
		\includegraphics[width=\linewidth]{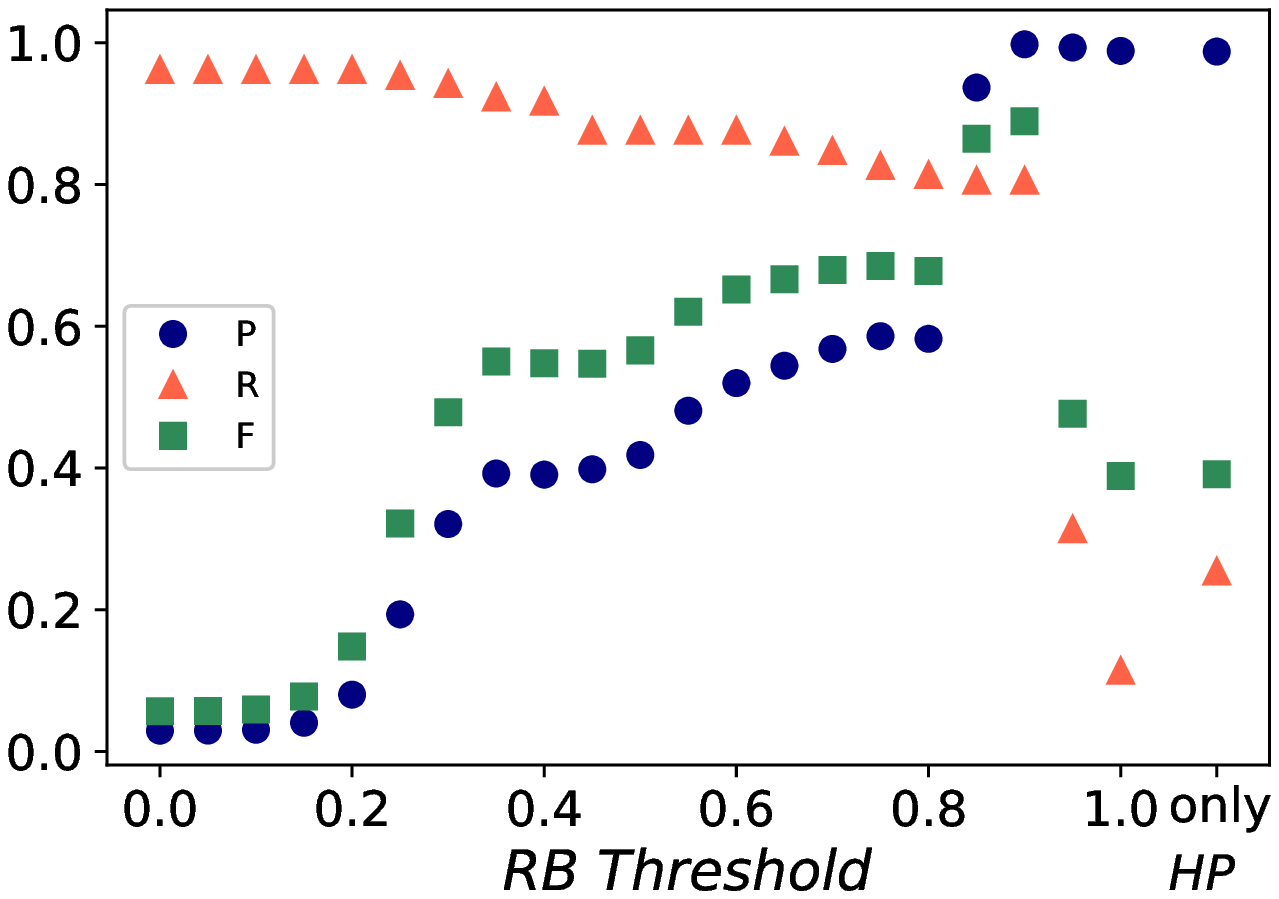}
		\label{fig:mainP}
	\end{subfigure}
	\begin{subfigure}{.32\linewidth}
		\centering
		\caption{Target measure: \fmspace ($F$)}
		\includegraphics[width=\linewidth]{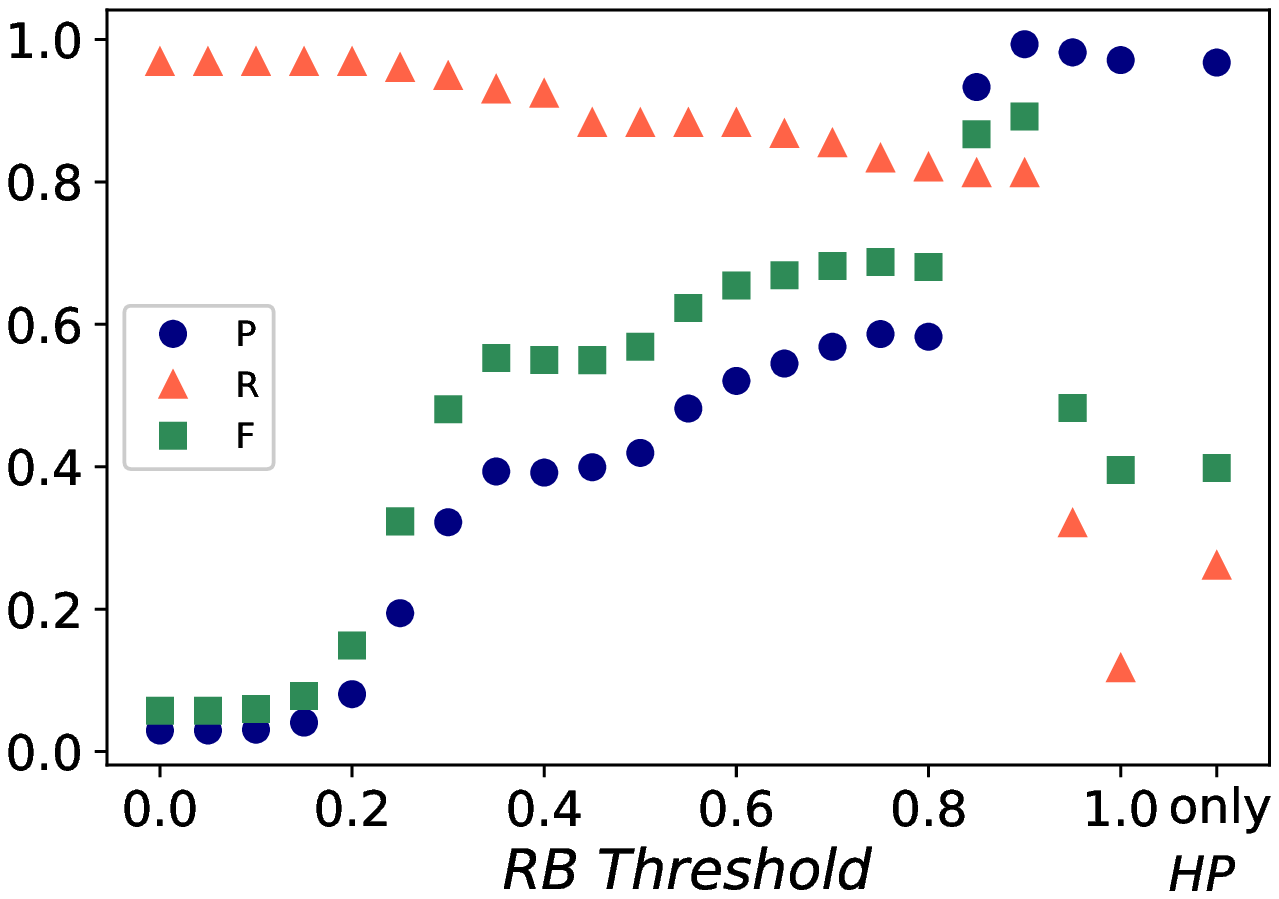}
		\label{fig:mainF}
	\end{subfigure}
	\centering
	%		\vspace{-.45cm}
	\caption{\sysNameSpace performance in terms of recall ($R$, red triangles), precision ($P$, blue dot), and \fmspace ($F$, green squares) as a function of $RB$ threshold by target measure (using dynamic thresholds).$^{\ref{fn:th}}$}
	%		\vspace{-.1cm}
	\label{fig:main}
\end{figure}

Our analysis thus far used an $RB$ threshold of $0.9$, which yielded the best performance during training. We next turn to examine how the tradeoff between precision and recall changes with the $RB$ threshold. Figure~\ref{fig:main} illustrates precision ($P$), recall ($R$), and \fmspace ($F$), targeting recall (Figure~\ref{fig:mainR}), precision (with dynamic thresholds,\footnote{\label{fn:th}Static thresholds yielded similar results, which can be found in an online repository~\cite{graphs}.} Figure~\ref{fig:mainP}), and \fmspace (with dynamic thresholds, Figure~\ref{fig:mainF}). The far right values in each graph represent using $HP$ only, allowing no algorithmic results into the output match $\hat{\sigma}$. Values at the far left, setting $RB$ threshold to $0$, includes all algorithmic results in  $\hat{\sigma}$.
Overall, the three graphs demonstrate a similar trend. Primarily, regardless of the target measure, a $0.9$ $RB$ threshold yields the best results in terms of \fmspace (as was set during training). Recall is at its peak when adding all correspondences human matchers did not assign ($RB$ threshold = 0). A conservative approach of adding only correspondences the algorithmic matcher is fully confident about ($RB$ threshold = 1) results in a very low recall.

\subsection{\sysNameSpace Generalizability}
\label{sec:mainresOA}

The core idea of this section is to provide empirical evidence to the applicability of \sysName. Sections~\ref{sec:hpres}-\ref{sec:rbres} demonstrate the effectiveness of \sysNameSpace to improve the matching quality of human schema matchers in the (standard) domain of purchase orders (PO)~\cite{DO2002a}. In such a setting, we trained \sysNameSpace over a given set of human matchers and show that it can be used effectively on a test set of \textbf{unseen} human matchers (in 5-fold cross validation fashion, see Section~\ref{sec:meth}), all in the PO domain. In the following analysis we use different (unseen) human matchers, performing a (slightly) different matching task (ontology alignment) from a different domain (bibliographic references~\cite{oaeiURL}), to show the power of \sysNameSpace in generalizing beyond schema matching. Please refer to Section~\ref{sec:meth} for additional details. Table~\ref{tab:mainOA} presents results on human matching dataset of OAEI (see Section~\ref{sec:dataset}) in a similar fashion to Table~\ref{tab:mainrb}. 

\begin{table}[h]
	%	\vspace{-.25cm}
	\caption{Precision ($P$), recall ($R$), \fmspace ($F$) of \sysNameSpace by target measure compared to baselines (OAEI task)}
	\label{tab:mainOA}
	\scalebox{1}{\begin{tabular}{|l|l|l|ccc|}
			\hline
			Target                 & (threshold)     & Method  & $P$              & $R$              & $F$              \\ \hline
			\multirow{3}{*}{$R$}    & \multirow{3}{*}{$0.0$}         & \sysName($HP$)      &   0.616      &  0.348      &    0.447      \\ \cline{3-6} 
			&& \cellcolor{Gray}\sysName($HP$+$RB$)      & \cellcolor{Gray} 0.554          & \cellcolor{Gray}0.896           & \cellcolor{Gray} 0.749*          \\ %\cline{2-6} 
			&& $raw$-{\sf ADnEV}      &  0.484        &  0.861        &   0.638      \\ \cline{1-6}
			\multirow{6}{*}{$P$} & \multirow{3}{*}{$1.0$}  & \sysName($HP$)      & 0.878  &  0.315    &  0.432         \\ \cline{3-6} 
			&& \cellcolor{Gray}\sysName($HP$+$RB$)        & \cellcolor{Gray}\textbf{0.912} & \cellcolor{Gray}0.746*          & \cellcolor{Gray}0.792*          \\ %\cline{2-6} 
			&& $raw$-{\sf ADnEV}      &    0.812       &   0.582       & 0.688          \\ \cline{2-6}
			& \multirow{3}{*}{$\hat{P}(\sigma_{t-1})$} & \sysName($HP$)      &  0.867 &  0.318      &    0.436         \\ \cline{3-6} 
			&& \cellcolor{Gray}\sysName($HP$+$RB$)        & \cellcolor{Gray} 0.889        & \cellcolor{Gray} 0.776*         & \cellcolor{Gray}0.810*          \\ %\cline{2-6} 
			&& $raw$-{\sf ADnEV}    & 0.785       &   0.567       & 0.674          \\ \cline{1-6}
			\multirow{6}{*}{$F$}  & \multirow{3}{*}{$0.5$}   & \sysName($HP$)      &   0.824       &   0.327     &   0.442         \\ \cline{3-6} 
			&& \cellcolor{Gray}\sysName($HP$+$RB$)        & \cellcolor{Gray} 0.896*         & \cellcolor{Gray}0.747*           & \cellcolor{Gray}0.809*           \\ %\cline{2-6} 
			&& $raw$-{\sf ADnEV}     &   0.745       &   0.582        &  0.681         \\ \cline{2-6}
			& \multirow{3}{*}{$0.5\cdot\hat{F}(\sigma_{t-1})$} & \sysName($HP$)    &    0.811       &   0.319     &   0.437       \\ \cline{3-6} 
			&& \cellcolor{Gray}\sysName($HP$+$RB$)        & \cellcolor{Gray} 0.892*        & \cellcolor{Gray}0.772*           & \cellcolor{Gray}\textbf{0.825*} \\\ %\cline{2-6} 
			&& $raw$-{\sf ADnEV}     &    0.743       &   0.586         &   0.682       \\ \hline
			\multicolumn{2}{|c|}{} & {\sf ADnEV}~\cite{shraga2020}     &  0.677        &  0.656       &  0.667  \\ \cline{3-6} 
			\multicolumn{2}{|c|}{} & {\sf Term}~\cite{GAL2011}       &  0.266        &   0.750       &  0.393        \\ \cline{3-6} 
			\multicolumn{2}{|c|}{-} & {\sf Token Path}~\cite{PEUKERT2011}       &  0.400        & 0.250 &  0.307         \\ \cline{3-6} 
			\multicolumn{2}{|c|}{} & {\sf WordNet}~\cite{wordnet1}       &    0.462      & \textbf{0.937}   &   0.618        \\ \hline
	\end{tabular}}
	%	\vspace{-.15cm}
\end{table}

Matching results are slightly lower than in the PO task. However, the tendency is the same, demonstrating that a \sysNameSpace trained on the domain of schema matching can align ontologies well. The main difference between the performance of \sysNameSpace on the PO task and the OAEI task is in terms of precision, where the results of the latter failed to reach the 0.999 precision value of the former (when targeting precision). This may not come as a surprise since the trained model affects only the $HP$ component, which is also in charge of providing high precision.

\section{Related work}
\label{sec:related}
Human-in-the-loop in schema matching typically uses either crowdsourcing~\cite{NoyMMA13,Crowdmap,Zhang2013,zhang2018reducing,fan2014hybrid} or pay-as-you-go~\cite{McCann2008,Hung2014,pinkel2013incmap} to reduce the demanding cognitive load of this task. The former slices the task into smaller sized tasks and spread the load over multiple matchers. The latter partitions the task over time, aiming at minimizing the matching task effort at each point in time. From a usability point-of-view, Noy, Lanbrix, and Falconer~\cite{Falconer2007,Lambrix2003,Noy2002} investigated ways to assist humans in validating results of computerized matching systems. In this work, we provide an alternative approach, offering an algorithmic solution that is shown to improve on human matching performance. Our approach takes a human matcher's input and boosts its performance by analyzing the process a human matcher followed and complementing it with an algorithmic matcher. 

Also using a crowdsourcing technique, Bozovic and Vassalos~\cite{Bozovic2015} proposed a combined human-algorithm matching system where limited user feedback is used to weigh the algorithmic matching. In our work we offer an opposite approach, according to which a human match is provided, evaluated and modified, and then extended with algorithmic solutions.  

Human matching performance was analyzed in both schema matching and the related field of ontology alignment~\cite{dragisic2016user,li2019user,zhang2018reducing}, acknowledging that humans can err while matching due to biases~\cite{ackerman2019cognitive}. Our work turns such bugs (biases in our case) into features, improving human matching performance by assessing the impact of biases on the quality of the match.

The use of deep learning for solving data integration problems becomes widespread~\cite{ebraheem2017deeper,mudgal2018deep,kolyvakis2018deepalignment,thirumuruganathan2020data,li2021deep}. Chen {\em et al.}~\cite{chen2018biggorilla} use instance data to apply supervised learning for schema matching. Fernandez \emph{et al.}~\cite{fernandez2018seeping} use embeddings to identify relationships between attributes, which was extended by Cappuzzo \emph{et al.}~\cite{cappuzzo2020creating} to consider instances in creating local embeddings. Shraga \emph{et al.}~\cite{shraga2020} use a neural network to improve an algorithmic schema matching result. In our work, we use an LSTM to capture the time-dependent decision making of human matching and complement it with a state-of-the-art deep learning-based algorithmic matching based on~\cite{shraga2020}.

\section{Conclusions and Future Work}
\label{sec:discussion}

This work offers a novel approach to address matching, analyzing it as a process and improving its quality using machine learning techniques. We recognize that human matching is basically a sequential process and define a matching sequential process using matching history (Definition~\ref{def:history}) and monotonic evaluation of the matching process (Section~\ref{sec:monotonicEvaluation}). We show conditions under which precision, recall and \fmspace are monotonic (Theorem~\ref{thm:MIEM}). Then, aiming to improve on the matching quality, we tie the monotonicity of these measures to the ability of a correspondence to improve on a match evaluation and characterize such correspondences in probabilistic terms (Theorem~\ref{thm:probLocalAnneal}). Realizing that human matching is biased (Section~\ref{sec:RWHMatching}) we offer \sysNameSpace to calibrate human matching decisions and compensate for correspondences that were left out by human matchers using algorithmic matching. Our empirical evaluation shows a clear benefit in treating matching as a process, confirming that \sysNameSpace improves on both human and algorithmic matching. \add{We also provide a proof-of-concept, showing that \sysNameSpace generalizes well to the closely domain of ontology alignment.} \add{An important insight of this work relates to the way training data should be obtained in future matching research. The observations of this paper can serve as a guideline for collecting (query user confidence, timing the decisions, \emph{etc.}), managing (using a decision history instead of similarity matrix), and using (calibrating decisions using \sysNameSpace or a derivative) data from human matchers.}

In future work, we aim to extend \sysNameSpace to additional platforms, \emph{e.g.,} crowdsourcing, where several additional aspects, such as crowd workers heterogeneity~\cite{Ross2010}, should be considered. Interesting research directions involve experimenting with additional matching tools and analyzing the merits of LSTM in terms of overfitting and sufficient training data.

\bibliographystyle{ACM-Reference-Format.bst}
\bibliography{ltsLong}
%\newpage
\appendix

\section{Monotonic Evaluation and Section~\ref{sec:matching_eval} Proofs}
\label{sec:proofs}
The appendix is devoted to the proofs of Section~\ref{sec:matching_eval}.

\setcounter{theorem}{0}
\begin{theorem}\label{thm:MIEMapp}
	Recall ($R$) is a MIEM over $\Sigma^{\subseteq}$, Precision ($P$) is a MIEM over $\Sigma^P$, and \fmspace ($F$) is a MIEM over $\Sigma^F$.
\end{theorem}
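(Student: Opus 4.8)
The plan is to prove the three monotonicity claims separately, all resting on a single additive decomposition. For any pair $(\sigma, \sigma')\in\Sigma^{\subseteq}$ with $\Delta = \sigma'\setminus\sigma$, the sets $\sigma$ and $\Delta$ are disjoint with $\sigma\cup\Delta = \sigma'$, so $|\sigma'| = |\sigma| + |\Delta|$ and, intersecting with $\sigma^*$, $|\sigma'\cap\sigma^*| = |\sigma\cap\sigma^*| + |\Delta\cap\sigma^*|$. I would record this decomposition first, since it drives every case. Recall is then immediate: $R(\sigma) = |\sigma\cap\sigma^*|/|\sigma^*|$ has a constant denominator and a numerator that is non-decreasing under set inclusion, so $R(\sigma)\le R(\sigma')$ for every pair in $\Sigma^{\subseteq}$, establishing that $R$ is a MIEM over all of $\Sigma^{\subseteq}$.

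For precision, I would use the decomposition to write $P(\sigma') = \frac{|\sigma\cap\sigma^*| + |\Delta\cap\sigma^*|}{|\sigma| + |\Delta|}$ and recognize this as the mediant of $P(\sigma)$ and $P(\Delta)$. The mediant of two fractions with positive denominators always lies between them, so the hypothesis $P(\sigma)\le P(\Delta)$ forces $P(\sigma)\le P(\sigma')$. Since $P(\sigma)\le P(\Delta)$ is exactly the condition defining $\Sigma^P$, precision is a MIEM over $\Sigma^P$.

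For the \fmspace I would first rewrite $F$ in the closed form $F(\sigma) = \frac{2|\sigma\cap\sigma^*|}{|\sigma| + |\sigma^*|}$, to which the harmonic mean of $P$ and $R$ collapses. Substituting the decomposition into $F(\sigma')$ and cross-multiplying the target inequality $F(\sigma)\le F(\sigma')$ (all denominators are positive), the common term $|\sigma\cap\sigma^*|\cdot(|\sigma|+|\sigma^*|)$ cancels and the inequality reduces to $\frac{|\sigma\cap\sigma^*|}{|\sigma|+|\sigma^*|}\le\frac{|\Delta\cap\sigma^*|}{|\Delta|}$, i.e.\ $0.5\cdot F(\sigma)\le P(\Delta)$, which is precisely the condition defining $\Sigma^F$. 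Hence $F$ is a MIEM over $\Sigma^F$.

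I expect no deep obstacle; the content is purely algebraic. The one place to be careful is the \fmspace reduction, where the cross-multiplication must be organized so the cancellation is transparent and the residual condition is identified exactly with the $\Sigma^F$ definition; the degenerate cases $|\Delta|=0$ (where $\sigma'=\sigma$) and $|\sigma|=0$ should be noted as trivially satisfying the claim. Proving the precision and \fmspace statements as full iff characterizations, rather than only the needed ``if'' direction, costs nothing extra and additionally yields the deterministic helper lemma underlying the probabilistic Theorem~\ref{thm:probLocalAnneal}.
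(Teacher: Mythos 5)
Your proposal is correct and follows essentially the same route as the paper: the additive decomposition $|\sigma'|=|\sigma|+|\Delta|$, $|\sigma'\cap\sigma^*|=|\sigma\cap\sigma^*|+|\Delta\cap\sigma^*|$, immediate monotonicity for recall, and reduction of the precision and \fmspace claims to the iff characterizations $P(\sigma)\le P(\Delta)$ and $0.5\cdot F(\sigma)\le P(\Delta)$ (the paper's Lemma~\ref{lemma:p_f_ineq}), including the degenerate cases. Your mediant observation is just a compact repackaging of the paper's explicit cross-multiplication for the precision case, not a different argument.
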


For proving Theorem~\ref{thm:MIEM}, we use two lemmas stating that recall is a MIEM over all match pairs in $\Sigma^{\subseteq}$ and, since precision and \fmspace are not monotonic over the full set of pairs in $\Sigma^{\subseteq}$, the conditions under which monotonicity can be guaranteed for both measures. 

\begin{lemma}\label{lemma:r_miem}
	Recall ($R$) is a MIEM over $\Sigma^{\subseteq}$.
\end{lemma}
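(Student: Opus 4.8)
The plan is to unwind the definition of a monotonically increasing evaluation measure (Definition~\ref{def:monotonicity}) and reduce the claim to an elementary monotonicity property of set cardinality. Fix an arbitrary match pair $(\sigma,\sigma^{\prime})\in\Sigma^{\subseteq}$; by the definition of $\Sigma^{\subseteq}$ this means $\sigma\subseteq\sigma^{\prime}$. The goal is then simply to establish $R(\sigma)\leq R(\sigma^{\prime})$, after which universally quantifying over $\Sigma^{\subseteq}$ gives the lemma.

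The central observation I would exploit is that the denominator of recall, $|\sigma^{*}|$, depends only on the fixed reference match and is therefore \emph{constant} across all arguments of $R$; only the numerator varies with the match. First I would note that intersection with a fixed set preserves inclusion, so $\sigma\subseteq\sigma^{\prime}$ yields $\sigma\cap\sigma^{*}\subseteq\sigma^{\prime}\cap\sigma^{*}$. Since cardinality is monotone under set inclusion, this gives $|\sigma\cap\sigma^{*}|\leq|\sigma^{\prime}\cap\sigma^{*}|$. Dividing both sides by the common positive constant $|\sigma^{*}|$ (using Eq.~\ref{eq:PandR}) then delivers
\begin{equation*}
	R(\sigma)=\frac{|\sigma\cap\sigma^{*}|}{|\sigma^{*}|}\leq\frac{|\sigma^{\prime}\cap\sigma^{*}|}{|\sigma^{*}|}=R(\sigma^{\prime}),
\end{equation*}
which is exactly the MIEM condition over $\Sigma^{\subseteq}$.

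There is no genuine obstacle here; the result is immediate once one isolates the fact that recall normalizes by a quantity independent of the candidate match. The only point requiring any care is the degenerate case $|\sigma^{*}|=0$, where recall is undefined; I would dispatch this by assuming, as is implicit throughout the model of Section~\ref{sec:smodel}, a non-empty reference match $\sigma^{*}$, so that division by $|\sigma^{*}|$ is well defined and inequality direction is preserved. This contrasts sharply with precision and \fm, whose denominators ($|\sigma|$ and $|\sigma|+|\sigma^{*}|$ respectively) themselves grow as correspondences are added, which is precisely why those measures require the restricted subspaces $\Sigma^{P}$ and $\Sigma^{F}$ in the companion lemmas.
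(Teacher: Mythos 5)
Your proof is correct and follows essentially the same route as the paper's: both observe that $\sigma\subseteq\sigma^{\prime}$ implies $\sigma\cap\sigma^{*}\subseteq\sigma^{\prime}\cap\sigma^{*}$, hence the numerator of recall can only grow while the denominator $|\sigma^{*}|$ is fixed. Your extra remark on the degenerate case $|\sigma^{*}|=0$ and the contrast with precision and \fmspace is a harmless (and reasonable) addition, but the core argument is identical.
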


\begin{proof}[Proof of Lemma \ref{lemma:r_miem}]
	Let $(\sigma, \sigma^{\prime})\in\Sigma^{\subseteq}$ be a match pair in $\Sigma^{\subseteq}$. Using Eq.~\ref{eq:PandR}, one can compute recall of $\sigma$ and $\sigma^{\prime}$, as follows.
	$$R(\sigma)=\frac{\mid\sigma\cap \sigma^{*}\mid}{\mid \sigma^{*}\mid}, R(\sigma^{\prime})=\frac{\mid\sigma^{\prime}\cap \sigma^{*}\mid}{\mid \sigma^{*}\mid}$$
	
	$\sigma\subseteq\sigma^{\prime}$ and thus $(\sigma\cap \sigma^{*})\subseteq(\sigma^{\prime}\cap \sigma^{*})$ and $|\sigma\cap \sigma^{*}|\leq|\sigma^{\prime}\cap \sigma^{*}|$. Noting that the denominator is not affected by the addition, we obtain:
	$$R(\sigma) = \frac{\mid\sigma\cap \sigma^{*}\mid}{\mid \sigma^{*}\mid}\leq \frac{\mid\sigma^{\prime}\cap \sigma^{*}\mid}{\mid \sigma^{*}\mid} = R(\sigma^{\prime})$$ and thus $R(\sigma) \leq R(\sigma^{\prime})$.
\end{proof}
\begin{lemma}\label{lemma:p_f_ineq}
	For $(\sigma, \sigma^{\prime})\in\Sigma^{\subseteq}$:
	\begin{itemize}
		\item $P(\sigma) \leq P(\sigma^{\prime})$ iff $P(\sigma)\leq P(\deltacorr)$
		\item $F(\sigma) \leq F(\sigma^{\prime})$ iff $0.5\cdot F(\sigma)\leq P(\deltacorr)$
	\end{itemize}
\end{lemma}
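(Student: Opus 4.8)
The plan is to reduce both biconditionals to a single inequality between integer counts, exploiting the disjoint decomposition $\sigma^{\prime}=\sigma\,\sqcup\,\deltacorr$ that $\sigma\subseteq\sigma^{\prime}$ and $\deltacorr=\sigma^{\prime}\setminus\sigma$ guarantee. First I would fix the shorthand $a=|\sigma\cap\sigma^{*}|$, $s=|\sigma|$, $d=|\deltacorr|$, $d^{+}=|\deltacorr\cap\sigma^{*}|$, and $g=|\sigma^{*}|$. Because $\sigma$ and $\deltacorr$ are disjoint, we have $|\sigma^{\prime}|=s+d$ and $|\sigma^{\prime}\cap\sigma^{*}|=a+d^{+}$. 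Consequently $P(\sigma)=a/s$, $P(\deltacorr)=d^{+}/d$, and $P(\sigma^{\prime})=(a+d^{+})/(s+d)$ is precisely the mediant of the first two fractions, which already suggests why the precision claim should hold.

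For the precision part, I would cross-multiply $P(\sigma)\leq P(\sigma^{\prime})$, i.e.\ $\frac{a}{s}\leq\frac{a+d^{+}}{s+d}$. Since every denominator is positive, this is equivalent to $a(s+d)\leq s(a+d^{+})$, and after cancelling $as$ it collapses to $ad\leq sd^{+}$, which is exactly $P(\sigma)\leq P(\deltacorr)$. This is just the standard mediant inequality and needs only one round of clearing denominators.

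For the \fmspace part, the key step is to rewrite the \fmspace as the ratio $F(\sigma)=\frac{2a}{s+g}$ (equivalently $\frac{2\,|\sigma\cap\sigma^{*}|}{|\sigma|+|\sigma^{*}|}$), obtained by substituting $P(\sigma)$ and $R(\sigma)$ into the harmonic mean and simplifying. Then $F(\sigma^{\prime})=\frac{2(a+d^{+})}{s+d+g}$, and cross-multiplying $F(\sigma)\leq F(\sigma^{\prime})$ yields $a(s+d+g)\leq(a+d^{+})(s+g)$; expanding and cancelling the common term $a(s+g)$ reduces this to $ad\leq d^{+}(s+g)$. Finally I would note that $0.5\cdot F(\sigma)=\frac{a}{s+g}$, so the condition $0.5\cdot F(\sigma)\leq P(\deltacorr)$ reads $\frac{a}{s+g}\leq\frac{d^{+}}{d}$, which cross-multiplies to the very same inequality $ad\leq d^{+}(s+g)$, closing the biconditional.

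The main obstacle is not the algebra but the bookkeeping: one must keep the reference-match size $g=|\sigma^{*}|$ fixed while only $|\sigma|$ grows, and ensure the denominators $s$, $d$, and $s+g$ are nonzero so that cross-multiplication preserves the inequality direction. Degenerate cases (an empty $\sigma$ or an empty $\deltacorr$) should be flagged explicitly and dispatched by the convention that the precision and \fmspace of an empty match are taken to be $0$, matching the boundary behaviour already used in Example~\ref{ex:process}.
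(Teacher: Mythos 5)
Your proof takes essentially the same route as the paper's: both exploit the disjoint decomposition $|\sigma^{\prime}|=|\sigma|+|\deltacorr|$ and $|\sigma^{\prime}\cap\sigma^{*}|=|\sigma\cap\sigma^{*}|+|\deltacorr\cap\sigma^{*}|$, rewrite the \fmspace as $F(\sigma)=\tfrac{2|\sigma\cap\sigma^{*}|}{|\sigma|+|\sigma^{*}|}$, and reduce each biconditional to a single cross-multiplied inequality, so the core argument is correct and matches the paper's. The one point to repair is the degenerate case $\deltacorr=\emptyset$ (i.e.\ $\sigma^{\prime}=\sigma$): under your convention that the precision of an empty set is $0$, the right-hand side $P(\sigma)\leq P(\deltacorr)$ fails whenever $P(\sigma)>0$ while the left-hand side holds with equality, breaking the biconditional; the paper instead adopts the convention $P(\deltacorr):=P(\sigma)$ when $\deltacorr=\emptyset$ (and symmetrically $P(\sigma):=P(\deltacorr)$ when $\sigma=\emptyset$), which preserves the equivalence.
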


\begin{proof}[Proof of Lemma \ref{lemma:p_f_ineq}]\label{proof:p_f_ineq}
	Let $(\sigma, \sigma^{\prime})\in\Sigma^{\subseteq}$ be a match pair in $\Sigma^{\subseteq}$. 
	%For readability, we use the following notations: $tp = |\sigma\cap \sigma^{*}|$ ($tp^{\prime} = |\sigma^{\prime}\cap \sigma^{*}|$), $m = |\sigma|$ ($m^{\prime} = |\sigma^{\prime}|$), and $r = |\sigma^{*}|$. 
	%	\rs{also address $\sigma = \emptyset$}\\
	We shall begin with two extreme cases in which the denominator of a precision calculation is zero. First, in case $\sigma = \emptyset$, $P(\sigma)$ is undefined. Accordingly, for this work, we shall define $P(\sigma) = P(\deltacorr)$, \emph{i.e.,} the prior precision value does not change. Second, in the case $\sigma^{\prime} = \sigma$, we obtain that $\sigma^{\prime}\setminus\sigma = \emptyset$ resulting in an undefined $P(\deltacorr)$. Similarly, we define $P(\deltacorr) = P(\sigma)$, \emph{i.e.,} expanding a match with nothing does not ``harm'' its precision. In both cases we obtain $P(\sigma) = P(\deltacorr) = P(\sigma^{\prime})$ and the first statement of the lemma holds.
	
	Next, we refer to the general case where $\sigma\neq\emptyset$ and $\deltacorr\neq \emptyset$ and note that $|\sigma^{\prime}| = |\sigma\cup\deltacorr|$. Assuming $\sigma\neq\sigma^{\prime}$ and recalling that $\sigma\subseteq\sigma^{\prime}$ ($(\sigma, \sigma^{\prime})\in\Sigma^{\subseteq}$), we obtain that $\sigma\cap\deltacorr = \emptyset$ and 
	\begin{equation}\label{eq:m}
		|\sigma^{\prime}| = |\sigma\cup\deltacorr| = |\sigma| + |\deltacorr|
	\end{equation}
	%Otherwise, for $\sigma=\sigma^{\prime}$, we obtain that $|\sigma^{\prime}| = |\sigma|$, \emph{i.e.,} $|\deltacorr| = 0$. 
	Similarly, we obtain 
	\begin{equation}
		|\sigma^{\prime}\cap \sigma^{*}| = |(\sigma\cap \sigma^{*})\cup(\deltacorr\cap \sigma^{*})|
	\end{equation}
	Recalling that $\sigma\subseteq\sigma^{\prime}$, we have $(\sigma\cap \sigma^{*})\cap(\deltacorr\cap \sigma^{*}) = \emptyset$ and
	\begin{equation}\label{eq:tp}
		|\sigma^{\prime}\cap \sigma^{*}| = |\sigma\cap \sigma^{*}| + |\deltacorr\cap \sigma^{*}|
	\end{equation}
	%	if $\sigma=\sigma^{\prime}$, we obtain that $\sigma^{\prime}\cap \sigma^{*} = |\sigma\cap \sigma^{*}|$, \emph{i.e.,} $|\deltacorr\cap \sigma^{*}| = 0$.
	
	\begin{itemize}
		\item $\mathbf{P(\sigma) \leq P(\sigma^{\prime})}$ \textbf{iff} $\mathbf{P(\sigma)\leq P(\deltacorr)}$:\\
		Using Eq.~\ref{eq:PandR}, the precision values of $\sigma$, $\sigma^{\prime}$, and $\deltacorr$ are computed as follows:
		\begin{equation}\label{eq:p_for_proof}
			P(\sigma) = \frac{|\sigma\cap \sigma^{*}|}{|\sigma|},  P(\sigma^{\prime}) = \frac{|\sigma^{\prime}\cap \sigma^{*}|}{|\sigma^{\prime}|}, P(\deltacorr) = \frac{|\deltacorr\cap \sigma^{*}|}{|\deltacorr|}
		\end{equation}
		
		$\Rightarrow$: Let $P(\sigma) \leq P(\sigma^{\prime})$. Using Eq.~\ref{eq:p_for_proof}, we obtain that 
		$$\frac{|\sigma\cap \sigma^{*}|}{|\sigma|} \leq \frac{|\sigma^{\prime}\cap \sigma^{*}|}{|\sigma^{\prime}|}$$
		and accordingly (Eq.~\ref{eq:tp} and Eq.~\ref{eq:m}), 
		$$\frac{|\sigma\cap \sigma^{*}|}{|\sigma|}  \leq \frac{|\sigma\cap \sigma^{*}| + |\deltacorr\cap \sigma^{*}|}{|\sigma| + |\deltacorr|}$$
		Then, we obtain 
		$$|\sigma\cap \sigma^{*}|(|\sigma| + |\deltacorr|) \leq |\sigma|(|\sigma\cap \sigma^{*}| + |\deltacorr\cap \sigma^{*}|)$$ 
		and following, 
		$$|\sigma\cap \sigma^{*}|\cdot |\sigma| + |\sigma\cap \sigma^{*}|\cdot|\deltacorr| \leq |\sigma|\cdot |\sigma\cap \sigma^{*}| + |\sigma|\cdot|\deltacorr\cap \sigma^{*}|$$ 
		Finally, we get 
		$$|\sigma\cap \sigma^{*}|\cdot|\deltacorr| \leq |\sigma|\cdot|\deltacorr\cap \sigma^{*}|$$ 
		and conclude
		$$P(\sigma) = \frac{|\sigma\cap \sigma^{*}|}{|\sigma|} \leq \frac{|\deltacorr\cap \sigma^{*}|}{|\deltacorr|} = P(\deltacorr)$$ 
		
		$\Leftarrow$: Let $P(\sigma) \leq P(\deltacorr)$. Using Eq.~\ref{eq:p_for_proof}, we obtain that 
		$$\frac{|\sigma\cap \sigma^{*}|}{|\sigma|} \leq \frac{|\deltacorr\cap \sigma^{*}|}{|\deltacorr|} $$ 
		and following 
		$$|\sigma\cap \sigma^{*}|\cdot |\deltacorr| \leq |\sigma|\cdot |\deltacorr\cap \sigma^{*}|$$
		By adding $|\sigma\cap \sigma^{*}|\cdot |\sigma|$ on both sides we get 
		$$|\sigma\cap \sigma^{*}|\cdot |\deltacorr| + |\sigma\cap \sigma^{*}|\cdot |\sigma|\leq |\sigma|\cdot |\deltacorr\cap \sigma^{*}| + |\sigma\cap \sigma^{*}|\cdot |\sigma|$$
		After rewriting, we obtain 
		$$|\sigma\cap \sigma^{*}|\cdot(|\deltacorr| + |\sigma|)\leq |\sigma|\cdot (|\deltacorr\cap \sigma^{*}| + |\sigma\cap \sigma^{*}|)$$ 
		and in what follows (Eq.~\ref{eq:tp} and Eq.~\ref{eq:m}):
		$$P(\sigma) = \frac{|\sigma\cap \sigma^{*}|}{|\sigma|} \leq \frac{|\sigma^{\prime}\cap \sigma^{*}|}{|\sigma^{\prime}|} = P(\sigma^{\prime})$$
		
		\item $\mathbf{F(\sigma) \leq F(\sigma^{\prime})}$ \textbf{iff} $\mathbf{0.5F(\sigma)\leq P(\deltacorr)}$:\\
		The F1 measure value of $\sigma$ is given by 
		$$F(\sigma) = 2\cdot\frac{P(\sigma)\cdot R(\sigma)}{P(\sigma) + R(\sigma)}$$
		Then, using Eq.~\ref{eq:PandR}, we have 
		$$F(\sigma) = 2\cdot\frac{\frac{|\sigma\cap \sigma^{*}|}{|\sigma|}\cdot \frac{|\sigma\cap \sigma^{*}|}{|\sigma^{*}|}}{\frac{|\sigma\cap \sigma^{*}|}{|\sigma|} + \frac{|\sigma\cap \sigma^{*}|}{|\sigma^{*}|}}$$ 
		and after rewriting we obtain 
		\begin{equation}\label{eq:f_sigma}
			F(\sigma) = \frac{2\cdot |\sigma\cap \sigma^{*}|}{|\sigma| + |\sigma^{*}|}
		\end{equation}
		Similarly, we can compute 
		$$F(\sigma^{\prime}) = \frac{2\cdot |\sigma^{\prime}\cap \sigma^{*}|}{|\sigma^{\prime}| + |\sigma^{*}|}$$
		Using Eq.~\ref{eq:tp} and Eq.~\ref{eq:m} , we further obtain 
		\begin{equation}\label{eq:f_sigma_prime}
			F(\sigma^{\prime}) = \frac{2\cdot (|\sigma\cap \sigma^{*}| + |\deltacorr\cap \sigma^{*}|)}{(|\sigma| + |\deltacorr|) + |\sigma^{*}|}
		\end{equation}
		
		$\Rightarrow$: Let $F(\sigma) \leq F(\sigma^{\prime})$. Using Eq.~\ref{eq:f_sigma} and Eq.~\ref{eq:f_sigma_prime}, we obtain
		$$\frac{2\cdot |\sigma\cap \sigma^{*}|}{|\sigma| + |\sigma^{*}|} \leq \frac{2\cdot (|\sigma\cap \sigma^{*}| + |\deltacorr\cap \sigma^{*}|)}{(|\sigma| + |\deltacorr|) + |\sigma^{*}|}$$
		after rewriting we get 
		$$|\sigma\cap \sigma^{*}|\cdot (|\sigma|+|\sigma^{*}|+|\deltacorr|) \leq (|\sigma\cap \sigma^{*}| + |\deltacorr\cap \sigma^{*}|)\cdot(|\sigma| + |\sigma^{*}|)$$
		and following
		$$|\sigma\cap \sigma^{*}|\cdot|\deltacorr| \leq |\deltacorr\cap \sigma^{*}|\cdot(|\sigma| + |\sigma^{*}|) \rightarrow \frac{|\sigma\cap \sigma^{*}|}{|\sigma| + |\sigma^{*}|}\leq\frac{|\deltacorr\cap \sigma^{*}|}{|\deltacorr|}$$
		Recalling that $P(\deltacorr) = \frac{|\deltacorr\cap \sigma^{*}|}{|\deltacorr|}$ (Eq.~\ref{eq:p_for_proof}) we conclude 
		$$0.5F(\sigma) =\frac{|\sigma\cap \sigma^{*}|}{|\sigma| + |\sigma^{*}|}\leq\frac{|\deltacorr\cap \sigma^{*}|}{|\deltacorr|} = P(\deltacorr)$$
		
		$\Leftarrow$: Let $0.5F(\sigma)\leq P(\deltacorr)$. Using Eq.~\ref{eq:f_sigma} and Eq.~\ref{eq:p_for_proof}, we obtain
		$$0.5\frac{2|\sigma\cap \sigma^{*}|}{|\sigma| + |\sigma^{*}|}\leq\frac{|\deltacorr\cap \sigma^{*}|}{|\deltacorr|}$$
		and accordingly,
		$$|\sigma\cap \sigma^{*}|\cdot |\deltacorr| \leq |\deltacorr\cap \sigma^{*}|\cdot(|\sigma| + |\sigma^{*}|)$$
		By adding $|\sigma\cap \sigma^{*}|\cdot (|\sigma| + |\sigma^{*}|)$ on both sides we get
		
		\begin{footnotesize}
			$$|\sigma\cap \sigma^{*}|\cdot |\deltacorr| + |\sigma\cap \sigma^{*}|\cdot (|\sigma| + |\sigma^{*}|)\leq |\deltacorr\cap \sigma^{*}|\cdot(|\sigma| + |\sigma^{*}|) + |\sigma\cap \sigma^{*}|\cdot (|\sigma| + |\sigma^{*}|)$$
		\end{footnotesize} 
		After rewriting, we obtain 
		$$|\sigma\cap \sigma^{*}|\cdot (|\sigma| + |\sigma^{*}| + |\deltacorr|) \leq (|\sigma\cap \sigma^{*}| + |\deltacorr\cap \sigma^{*}|)\cdot(|\sigma| + |\sigma^{*}|)$$
		Then, dividing both sides by $0.5\cdot(|\sigma| + |\sigma^{*}| + |\deltacorr|)\cdot(|\sigma| + |\sigma^{*}|)$, we get
		$$F(\sigma) = \frac{2\cdot |\sigma\cap \sigma^{*}|}{|\sigma| + |\sigma^{*}|} \leq \frac{2\cdot (|\sigma\cap \sigma^{*}| + |\deltacorr\cap \sigma^{*}|)}{(|\sigma| + |\deltacorr|) + |\sigma^{*}|} = F(\sigma^{\prime})$$
	\end{itemize}
	which concludes the proof.		
\end{proof}

\setcounter{prop}{0}
\begin{prop}\label{prop:annealer}
	Let $G$ be an evaluation measure. If $G$ is a MIEM over $\Sigma^2\subseteq\Sigma^{\subseteq_1}$, then $\forall(\sigma,\sigma^{\prime})\in\Sigma^2$~: $\deltacorr=\sigma^{\prime}\setminus\sigma$ is a local annealer with respect to $G$ over $\Sigma^2\subseteq\Sigma^{\subseteq_1}$.
\end{prop}

%Let $G$ be an evaluation measure, and $\sigma^{\prime}$ be a match. If $G$ is a MIEM over $\Sigma^2\subseteq\Sigma^{\subseteq_1}$, then $\forall\sigma^{\prime}\in\Sigma$ : $\sigma^{\prime}$ is a local annealer with respect to $G$ over $\Sigma^2\subseteq\Sigma^{\subseteq_1}$.

\begin{proof}[Proof of Proposition \ref{prop:annealer}]
	Let $G$ be an MIEM over $\Sigma^2\subseteq\Sigma^{\subseteq_1}$. By Definition~\ref{def:monotonicity}, 
	$G(\sigma) \leq G(\sigma^{\prime})$
	holds for every match pair $(\sigma, \sigma^{\prime})\in\Sigma^2$.
	
	Assume, by way of contradiction, that exists some $\deltacorr$ that is not a local annealer with respect to $G$ over $\Sigma^2$. Thus, there exists some match pair $(\sigma, \sigma^{\prime})\in\Sigma^2$ such that $\deltacorr=\deltacorr_{(\sigma,\sigma^{\prime})}$ and $G(\sigma) > G(\sigma^{\prime})$, in contradiction to the fact that $G$ is a MIEM over $\Sigma^2$.
\end{proof}

\setcounter{corol}{0}
\begin{corol}\label{corol:annealer}
	Any singleton correspondence set $\deltacorr$ ($|\deltacorr|=1$) is a local annealer with respect to 1) $R$ over $\Sigma^{\subseteq_1}$, 2) $P$ over $\Sigma^P\cap\Sigma^{\subseteq_1}$, and 3) $F$ over $\Sigma^F\cap\Sigma^{\subseteq_1}$.
\end{corol}
\noindent where $\Sigma^P$ and $\Sigma^F$ are the subspaces for which precision and \fmspace are monotonic as defined in Section~\ref{sec:monotonicEvaluation}.
%Any $\sigma^{\prime}\in\Sigma$ is a local annealer with respect to 1) $R$ over $\Sigma^{\subseteq_1}$, 2) $P$ over $\Sigma^P\cap\Sigma^{\subseteq_1}$, and 3) $F$ over $\Sigma^F\cap\Sigma^{\subseteq_1}$.

\begin{proof}[Proof of Corollary \ref{corol:annealer}]
	Note that $\Sigma^{\subseteq_1}\subseteq\Sigma^{\subseteq}$, and accordingly also $\Sigma^P\cap\Sigma^{\subseteq_1}\subseteq\Sigma^P$ and $\Sigma^F\cap\Sigma^{\subseteq_1}\subseteq\Sigma^F$. Using Theorem~\ref{thm:MIEM} we can therefore say that recall ($R$) is a MIEM over $\Sigma^{\subseteq_1}$, precision ($P$) is a MIEM over $\Sigma^P\cap\Sigma^{\subseteq_1}$, and F1 measure ($F$) is a MIEM over $\Sigma^F\cap\Sigma^{\subseteq_1}$.
	
	Then using Proposition~\ref{prop:annealer}, we can conclude that for all $\deltacorr_{\sigma,\sigma^{\prime}}$ s.t. $(\sigma,\sigma^{\prime})\in\Sigma^{\subseteq_1}/\Sigma^P\cap\Sigma^{\subseteq_1}/\Sigma^F\cap\Sigma^{\subseteq_1}$, $\deltacorr_{\sigma,\sigma^{\prime}}$ is a local annealer with respect to $R$ over $\Sigma^{\subseteq_1}$/$P$ over $\Sigma^P\cap\Sigma^{\subseteq_1}$/$F$ over $\Sigma^F\cap\Sigma^{\subseteq_1}$. For any other singleton $\deltacorr$, the claim is vacuously satisfied.
\end{proof}

\setcounter{lemma}{2}
\begin{lemma}\label{lemma:p_f_ineqProb}
	For $(\sigma, \sigma^{\prime})\in\Sigma^{\subseteq_1}$:
	\begin{itemize}
		\item $E(P(\sigma)) \leq E(P(\sigma^{\prime}))$ iff $E(P(\sigma))\leq Pr\{\deltacorr\in\sigma^*\}$
		\item $E(F(\sigma)) \leq E(F(\sigma^{\prime}))$ iff $0.5\cdot E(F(\sigma))\leq Pr\{\deltacorr\in\sigma^*\}$
	\end{itemize}
\end{lemma}

\begin{proof}[Proof of Lemma \ref{lemma:p_f_ineqProb}]
	
	Let $(\sigma, \sigma^{\prime})\in\Sigma^{\subseteq_1}$ be a match pair in $\Sigma^{\subseteq_1}$. 
	
	We first address an extreme case, where the denominator of a precision calculation is zero. Here, this case occurs only when $\sigma = \sigma^{\prime} = \emptyset$. For this work, we shall define $E(P(\sigma)) = E(P(\sigma^{\prime})) = -1$, ensuring the validity of the first statement of the lemma.
	
	Next, we analyze the expected values of the evaluation measures. We shall assume that the size of the current match $\sigma$ is deterministically known and therefore $E(|\sigma|) = |\sigma|$. Let $|\sigma^{*}|$ and $|\sigma\cap \sigma^{*}|$ be random variables with expected values of $E(|\sigma^{*}|)$ and $E(|\sigma\cap \sigma^{*}|)$, respectively. 
	\begin{itemize}
		\item $\mathbf{E(P(\sigma)) \leq E(P(\sigma^{\prime}))}$ \textbf{iff} $\mathbf{E(P(\sigma))\leq Pr\{\deltacorr\in\sigma^*\}}$:
		Similar to Eq.~\ref{eq:PandR}, we compute the expected precision value of $\sigma$ as follows:
		\begin{equation}\label{eq:EP}
			E(P(\sigma)) = \frac{E(|\sigma\cap \sigma^{*}|)}{|\sigma|}
		\end{equation}
		Now, we are ready to compute the expected precision value of $\sigma^{\prime}$. Note that the value of the denominator is deterministic, $E(|\sigma^{\prime}|) = |\sigma| + 1$ since $(\sigma, \sigma^{\prime})\in\Sigma^{\subseteq_1}$. Then, using $Pr\{\deltacorr\in\sigma^*\}$ and Eq.~\ref{eq:EP}, we obtain
		$${\small E(P(\sigma^{\prime})) = Pr\{\deltacorr\in\sigma^*\}\cdot \frac{E(|\sigma\cap \sigma^{*}|) + 1}{|\sigma| + 1} + (1-Pr\{\deltacorr\in\sigma^*\})\cdot \frac{E(|\sigma\cap \sigma^{*}|)}{|\sigma| + 1}}$$
		While the denominator remains unchanged, the numerator increases by one only if $\deltacorr$ is part of the reference match. After rewriting we obtain 
		$${\small  E(P(\sigma^{\prime})) = \frac{Pr\{\deltacorr\in\sigma^*\}\cdot E(|\sigma\cap \sigma^{*}|) + Pr\{\deltacorr\in\sigma^*\} + E(|\sigma\cap \sigma^{*}|) - Pr\{\deltacorr\in\sigma^*\}\cdot E(|\sigma\cap \sigma^{*}|)}{|\sigma| + 1}}$$
		and conclude that
		\begin{equation}\label{eq:EP_prime}
			E(P(\sigma^{\prime})) = \frac{E(|\sigma\cap \sigma^{*}|) + Pr\{\deltacorr\in\sigma^*\}}{|\sigma| + 1}
		\end{equation} 
		
		$\Rightarrow$: Let $E(P(\sigma)) \leq E(P(\sigma^{\prime}))$. Using Eq.~\ref{eq:EP} and Eq.~\ref{eq:EP_prime} we obtain
		$$\frac{E(|\sigma\cap \sigma^{*}|)}{|\sigma|}\leq \frac{E(|\sigma\cap \sigma^{*}|) + Pr\{\deltacorr\in\sigma^*\}}{|\sigma| + 1}$$
		Multiplying by $|\sigma|\cdot(|\sigma| + 1)$, we get
		$$E(|\sigma\cap \sigma^{*}|)\cdot (|\sigma| + 1) \leq (E(|\sigma\cap \sigma^{*}|) + Pr\{\deltacorr\in\sigma^*\})\cdot|\sigma|$$
		Using rewriting we obtain 
		$$E(|\sigma\cap \sigma^{*}|)\cdot|\sigma| + E(|\sigma\cap \sigma^{*}|) \leq E(|\sigma\cap \sigma^{*}|)\cdot|\sigma| + Pr\{\deltacorr\in\sigma^*\}\cdot|\sigma|\rightarrow$$
		$$E(|\sigma\cap \sigma^{*}|)\leq Pr\{\deltacorr\in\sigma^*\}\cdot|\sigma|$$
		and conclude that
		$$E(P(\sigma)) = \frac{E(|\sigma\cap \sigma^{*}|)}{|\sigma|}\leq Pr\{\deltacorr\in\sigma^*\}$$
		
		$\Leftarrow$: Let $E(P(\sigma)) \leq Pr\{\deltacorr\in\sigma^*\}$. Using Eq.~\ref{eq:EP} we obtain
		$$\frac{E(|\sigma\cap \sigma^{*}|)}{|\sigma|}\leq Pr\{\deltacorr\in\sigma^*\}\rightarrow E(|\sigma\cap \sigma^{*}|)\leq Pr\{\deltacorr\in\sigma^*\}\cdot |\sigma|$$
		by adding $E(|\sigma\cap \sigma^{*}|)\cdot|\sigma|$ on both sides and rewriting we conclude that
		$$E(P(\sigma)) = \frac{E(|\sigma\cap \sigma^{*}|)}{|\sigma|}\leq \frac{E(|\sigma\cap \sigma^{*}|) + Pr\{\deltacorr\in\sigma^*\}}{|\sigma| + 1} = E(P(\sigma^{\prime}))$$
		
		\item $\mathbf{E(F(\sigma)) \leq E(F(\sigma^{\prime}))}$ \textbf{iff} $\mathbf{0.5\cdot E(F(\sigma))\leq Pr\{\deltacorr\in\sigma^*\}}$:
		
		The expected F1 measure value of $\sigma$ is given by
		
		\begin{equation}\label{eq:expected_f_sigma}
			E(F(\sigma)) = \frac{2\cdot E(|\sigma\cap \sigma^{*}|)}{|\sigma| + E(|\sigma^{*}|)}
		\end{equation}
		Similar to the computation of the precision value, since the size of the match is deterministic, $E(|\sigma^{\prime}|) = |\sigma| + 1$ and 
		$$E(F(\sigma^{\prime})) = \frac{2\cdot E(|\sigma^{\prime}\cap \sigma^{*}|)}{E(|\sigma^{\prime}|) + E(|\sigma^{*}|)} = \frac{2\cdot E(|\sigma^{\prime}\cap \sigma^{*}|)}{|\sigma| + 1 + E(|\sigma^{*}|)}$$
		The denominator value is deterministically affected by the addition of $\deltacorr$ (and increased by one). However, the nominator depends on $Pr\{\deltacorr\in\sigma^*\}$ and accordingly, we can rewrite it as
		$$Pr\{\deltacorr\in\sigma^*\}\cdot\frac{2\cdot (E(|\sigma\cap \sigma^{*}|) + 1)}{|\sigma| + 1 + E(|\sigma^{*}|)} + (1 - Pr\{\deltacorr\in\sigma^*\})\cdot\frac{2\cdot E(|\sigma\cap \sigma^{*}|)}{|\sigma| + 1 + E(|\sigma^{*}|)}$$
		after rewriting we obtain 
		\begin{equation}\label{eq:expected_f_sigma_prime}
			E(F(\sigma^{\prime})) = \frac{2\cdot(E(|\sigma\cap \sigma^{*}|) + Pr\{\deltacorr\in\sigma^*\})}{|\sigma| + 1 + E(|\sigma^{*}|}
		\end{equation}
		
		$\Rightarrow$: Let $E(F(\sigma)) \leq E(F(\sigma^{\prime}))$. Using Eq.~\ref{eq:expected_f_sigma} and Eq.~\ref{eq:expected_f_sigma_prime} we obtain
		$$\frac{2\cdot E(|\sigma\cap \sigma^{*}|)}{|\sigma| + E(|\sigma^{*}|)} \leq \frac{2\cdot(E(|\sigma\cap \sigma^{*}|) + Pr\{\deltacorr\in\sigma^*\})}{|\sigma| + 1 + E(|\sigma^{*}|)}$$
		multiplying by $(|\sigma| + E(|\sigma^{*}|))\cdot (|\sigma| + E(|\sigma^{*}|) + 1)$ yields
		$${\small 2\cdot E(|\sigma\cap \sigma^{*}|)\cdot (|\sigma| + E(|\sigma^{*}|) + 1) \leq 2\cdot(E(|\sigma\cap \sigma^{*}|) + Pr\{\deltacorr\in\sigma^*\})\cdot (|\sigma| + E(|\sigma^{*}|))}$$
		and by rewriting we get
		$$E(|\sigma\cap \sigma^{*}|) \leq Pr\{\deltacorr\in\sigma^*\})\cdot |\sigma| + Pr\{\deltacorr\in\sigma^*\})\cdot E(|\sigma^{*}|)$$
		and conclude
		$$0.5\cdot E(F(\sigma)) = 0.5\cdot\frac{2\cdot E(|\sigma\cap \sigma^{*}|)}{|\sigma| + E(|\sigma^{*}|)} \leq Pr\{\deltacorr\in\sigma^*\}$$
		
		$\Leftarrow$: Let $0.5\cdot E(F(\sigma))\leq Pr\{\deltacorr\in\sigma^*\}$. Using Eq.~\ref{eq:expected_f_sigma} we obtain
		$$0.5\cdot\frac{2\cdot E(|\sigma\cap \sigma^{*}|)}{|\sigma| + E(|\sigma^{*}|)}\leq Pr\{\deltacorr\in\sigma^*\}$$
		by multiplying by $|\sigma| + E(|\sigma^{*}|)$ we get
		$$E(|\sigma\cap \sigma^{*}|)\leq Pr\{\deltacorr\in\sigma^*\}\cdot |\sigma| + Pr\{\deltacorr\in\sigma^*\}\cdot E(|\sigma^{*}|)$$
		Similar to above, we add $E(|\sigma\cap \sigma^{*}|)\cdot (|\sigma| + E(|\sigma^{*}|))$ on both sides and get
		%		\begin{equation*}
		%		\begin{split}
		$$E(|\sigma\cap \sigma^{*}|) + E(|\sigma\cap \sigma^{*}|)\cdot (|\sigma| + E(|\sigma^{*}|))\leq $$
		$$Pr\{\deltacorr\in\sigma^*\}\cdot |\sigma| + Pr\{\deltacorr\in\sigma^*\}\cdot E(|\sigma^{*}|) + E(|\sigma\cap \sigma^{*}|)\cdot (|\sigma| + E(|\sigma^{*}|))$$
		%		\end{split}		
		%		\end{equation*} 
		by rewriting and multiplying by $\frac{2}{(|\sigma| + E(|\sigma^{*}|)\cdot (|\sigma| + E(|\sigma^{*}|) + 1)}$ we get
		$$\frac{2\cdot E(|\sigma\cap \sigma^{*}|)}{|\sigma| + E(|\sigma^{*}|)} \leq \frac{2\cdot(E(|\sigma\cap \sigma^{*}|) + Pr\{\deltacorr\in\sigma^*\})}{|\sigma| + E(|\sigma^{*}|) + 1}$$
		and conclude that
		$$E(F(\sigma)) \leq E(F(\sigma^{\prime}))$$
	\end{itemize}
	which finalizes the proof.	
\end{proof}

\begin{proof}[Proof of Theorem~\ref{thm:MIEM}]
	\begin{sloppypar}		
		The first part of the theorem follows directly from Lemma~\ref{lemma:r_miem}. For the remainder of the proof we rely on Lemma~\ref{lemma:p_f_ineq}.
	\end{sloppypar}
	Let $(\sigma, \sigma^{\prime})\in\Sigma^{P}$ be a match pair in $\Sigma^{P}$. By definition, since $(\sigma, \sigma^{\prime})\in\Sigma^{P}$, 
	then $P(\sigma)\leq P(\deltacorr)$ and by Lemma~\ref{lemma:p_f_ineq}, we can conclude that
	$P(\sigma) \leq P(\sigma^{\prime})$.
	
	Similarly, let $(\sigma, \sigma^{\prime})\in\Sigma^{F}$ be a match pair in $\Sigma^{F}$. By definition, since $(\sigma, \sigma^{\prime})\in\Sigma^{F}$, then $0.5\cdot F(\sigma)\leq P(\deltacorr)$ and by Lemma~\ref{lemma:p_f_ineq}, we can infer that
	$F(\sigma) \leq F(\sigma^{\prime})$, which concludes the proof.
\end{proof}

\begin{theorem}\label{thm:probLocalAnnealapp}
	Let $R/P/F$ be a random variable, whose values are taken from the domain of $[0,1]$, and $\deltacorr$ be a singleton correspondence set ($|\deltacorr|=1$). $\deltacorr$ is a \emph{probabilistic local annealer with respect to $R/P/F$ over $\Sigma^{\subseteq_1}/\Sigma^{E(P)}/\Sigma^{E(F)}$}.
\end{theorem}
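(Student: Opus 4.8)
The plan is to treat the three measures separately, since the statement is really a conjunction of three independent claims, and in each case to reduce the probabilistic annealing condition $E(G(\sigma)) \le E(G(\sigma^{\prime}))$ of Definition~\ref{def:probAnnealer} to a condition that is guaranteed by membership in the relevant subspace. For recall I would argue directly, with no probabilistic machinery at all. Recall is deterministically a MIEM over $\Sigma^{\subseteq}$ (Lemma~\ref{lemma:r_miem}), so for every realization of the random variables the pointwise inequality $R(\sigma) \le R(\sigma^{\prime})$ holds. Monotonicity of expectation then yields $E(R(\sigma)) \le E(R(\sigma^{\prime}))$ for every $(\sigma,\sigma^{\prime}) \in \Sigma^{\subseteq_1} \subseteq \Sigma^{\subseteq}$, so $\deltacorr$ is a probabilistic local annealer with respect to $R$ over all of $\Sigma^{\subseteq_1}$, with no restriction needed.

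For precision and for the \fmspace I would lean entirely on Lemma~\ref{lemma:p_f_ineqProb}, whose two bullets already supply the exact equivalences required. Fix any $(\sigma,\sigma^{\prime}) \in \Sigma^{E(P)}$. By the definition of $\Sigma^{E(P)}$ this means $E(P(\sigma)) \le Pr\{\deltacorr\in\sigma^*\}$, and the first bullet of Lemma~\ref{lemma:p_f_ineqProb} states that this is equivalent to $E(P(\sigma)) \le E(P(\sigma^{\prime}))$, which is precisely the annealing condition. The \fmspace case is identical: for $(\sigma,\sigma^{\prime}) \in \Sigma^{E(F)}$ membership unfolds to $0.5\cdot E(F(\sigma)) \le Pr\{\deltacorr\in\sigma^*\}$, which by the second bullet is equivalent to $E(F(\sigma)) \le E(F(\sigma^{\prime}))$. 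In both cases the subspace was defined exactly so that this unfolding works, so the theorem at this level is pure bookkeeping: match each subspace's defining inequality to the corresponding iff in Lemma~\ref{lemma:p_f_ineqProb}.

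The genuine obstacle therefore lies one level down, inside Lemma~\ref{lemma:p_f_ineqProb} itself, where one must derive closed forms for $E(P(\sigma^{\prime}))$ and $E(F(\sigma^{\prime}))$. The hard part there is handling $|\sigma\cap\sigma^*|$ (and, for the \fmspace, also $|\sigma^*|$) as random variables and conditioning on the event $\{\deltacorr\in\sigma^*\}$: adding the single correspondence increments the numerator by one with probability $Pr\{\deltacorr\in\sigma^*\}$ and leaves it unchanged otherwise, while the denominator $|\sigma^{\prime}| = |\sigma| + 1$ is deterministic because $(\sigma,\sigma^{\prime}) \in \Sigma^{\subseteq_1}$. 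After simplifying the resulting expectation to $E(P(\sigma^{\prime})) = (E(|\sigma\cap\sigma^*|) + Pr\{\deltacorr\in\sigma^*\})/(|\sigma|+1)$ and the analogous \fmspace expression, the two iff statements follow by the same cross-multiplication used in the deterministic Lemma~\ref{lemma:p_f_ineq}. I would also flag the degenerate empty-match case $\sigma = \sigma^{\prime} = \emptyset$, where precision is undefined and must be fixed by convention so that the equivalence holds vacuously, mirroring the boundary handling in the deterministic argument.
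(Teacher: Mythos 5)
Your proposal is correct and follows essentially the same route as the paper: pointwise monotonicity of recall (via Lemma~\ref{lemma:r_miem}/Corollary~\ref{corol:annealer}) lifted to expectations, and for $P$ and $F$ a direct unfolding of the definitions of $\Sigma^{E(P)}$ and $\Sigma^{E(F)}$ into the two equivalences of Lemma~\ref{lemma:p_f_ineqProb}. Your discussion of the expectation computation and the empty-match convention matches the paper's proof of Lemma~\ref{lemma:p_f_ineqProb}, and you even state the $\Sigma^{E(F)}$ condition in the correct form $0.5\cdot E(F(\sigma))\leq Pr\{\deltacorr\in\sigma^*\}$, where the paper's proof text transposes the factor of $0.5$.
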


\begin{proof}[Proof of Theorem~\ref{thm:probLocalAnneal}]
	Let $\deltacorr$ be a singleton correspondence set ($|\deltacorr|=1$).
	%\ag{needs to be revised. Corollary~\ref{corol:annealer} is not about expected values} $\sigma^{\prime}$ is a probabilistic local annealer with respect to $R$ over over $\Sigma^{\subseteq_1}$ according to Corollary~\ref{corol:annealer}.
	Let $R$ be a random variable and let $\deltacorr=\deltacorr_{(\sigma,\sigma^{\prime})}$ such that 	$(\sigma,\sigma^{\prime})\in\Sigma^{\subseteq_1}$. 
	
	According to Corollary~\ref{corol:annealer}, $\deltacorr$ is a local annealer with respect to $R$ over $\Sigma^{\subseteq_1}$ and therefore $R(\sigma)\leq R(\sigma^{\prime})$, regardless of $Pr\{\deltacorr\in\sigma^*\}$. Therefore, for any $p=Pr\{\deltacorr\in\sigma^*\}, p\cdot R(\sigma)\leq p\cdot R(\sigma^{\prime})$ and by definition of expectation, $E(R(\sigma))\leq E(R(\sigma^{\prime}))$. 
	
	Let $P$ be a random variable and let $\deltacorr=\deltacorr_{(\sigma,\sigma^{\prime})}$ such that $(\sigma,\sigma^{\prime})\in\Sigma^{E(P)}$. By definition of $\Sigma^{E(P)}$, $E(P(\sigma))\leq Pr\{\deltacorr\in\sigma^*\}$ and using Lemma~\ref{lemma:p_f_ineqProb} we obtain $E(P(\sigma)) \leq E(P(\sigma^{\prime}))$. 
	
	Let $F$ be a random variable and let $\deltacorr=\deltacorr_{(\sigma,\sigma^{\prime})}$ such that $(\sigma,\sigma^{\prime})\in\Sigma^{E(F)}$. By definition of $\Sigma^{E(F)}$, $E(F(\sigma))\leq 0.5\cdot Pr\{\deltacorr\in\sigma^*\}$ and using Lemma~\ref{lemma:p_f_ineqProb} we obtain $E(F(\sigma)) \leq E(F(\sigma^{\prime}))$. 
	
	%	\begin{sloppypar}
	We can therefore conclude, by Definition~\ref{def:probAnnealer}, that $\deltacorr$ is a probabilistic local annealer with respect to	$R/P/F$ over $\Sigma^{\subseteq_1}/\Sigma^{E(P)}/\Sigma^{E(F)}$.
	%	From the above we can therefore conclude, by Definition~\ref{def:probAnnealer}, that $\deltacorr$ is a probabilistic local annealer with respect to	$R/P/F$ over $\Sigma^{\subseteq_1}/\Sigma^{E(P)}/\Sigma^{E(F)}$.
	%	\end{sloppypar}
\end{proof}

\setcounter{equation}{6}
\begin{equation}\label{eq:unbiasedMatchingEstim}
	Pr\{M_{ij} \in\sigma^*\} = M_{ij}, \text{ }E(P(\sigma)) = \frac{\sum_{M_{ij}\in\sigma}M_{ij}}{|\sigma|}, \text{ }E(F(\sigma)) =\frac{2\cdot\sum_{M_{ij}\in\sigma}M_{ij}}{|\sigma| + |\sigma^{*}|}
\end{equation}

The details of the computation of Eq.~\ref{eq:unbiasedMatchingEstim} are as follows:
\begin{calc}[Computation of Eq.~\ref{eq:unbiasedMatchingEstim}]
	We first look into the main component in both expressions $E(|\sigma\cap\sigma^{*}|)$, that is, the expected number of correct correspondences in a match $\sigma$.
	
	By rewriting we get   
	$$E(|\sigma\cap\sigma^{*}|) = E\left(\sum_{M_{ij}\in\sigma} {\rm I\!I}_{\{M_{ij}\in\sigma^*\}}\right)$$
	and based on the linearity of expectation, we obtain
	$$E(|\sigma\cap\sigma^{*}|) = \sum_{M_{ij}\in\sigma} E({\rm I\!I}_{\{M_{ij}\in\sigma^*\}})$$
	The expected value of an indicator equals the probability of an event and thus,  
	$$\sum_{M_{ij}\in\sigma} E({\rm I\!I}_{\{M_{ij}\in\sigma^*\}}) = \sum_{M_{ij}\in\sigma} Pr\{M_{ij}\in\sigma^*\}$$
	Assuming unbiased matching (Definition~\ref{def:unbiasedMatching}), we conclude
	$$E(|\sigma\cap\sigma^{*}|) = \sum_{M_{ij}\in\sigma} M_{ij}$$
	
	Using Eq.\ref{eq:PandR} and recalling that $|\sigma|$ is deterministic, we obtain
	$$E(P(\sigma)) = \frac{\sum_{M_{ij}\in\sigma} M_{ij}}{|\sigma|}$$
	Similarly, using Eq.~\ref{eq:f_sigma} and assuming that $|\sigma^{*}|$ is also deterministic, we obtain
	$$E(F(\sigma))=\frac{2\cdot\sum_{M_{ij}\in\sigma}M_{ij}}{|\sigma| + |\sigma^{*}|}$$
\end{calc}

\section{Human Matching Biases}
\label{app:biases}
We now provide a of possible human biases based on~\cite{ackerman2019cognitive}.

\subsection{Temporal Bias}
\label{sec:dcm}

The temporal bias is rooted in the \emph{Diminishing Criterion Model (DCM)}, that models a common bias in human confidence judgment. DCM stipulates that the stopping criterion of human decisions is relaxed over time. Thus, a human matcher is more willing to accept a low confidence level after investing some time and effort on finding a correspondence. 

\begin{wrapfigure}{r}{0.5\textwidth}
	\centering 
	\includegraphics[width=\linewidth]{./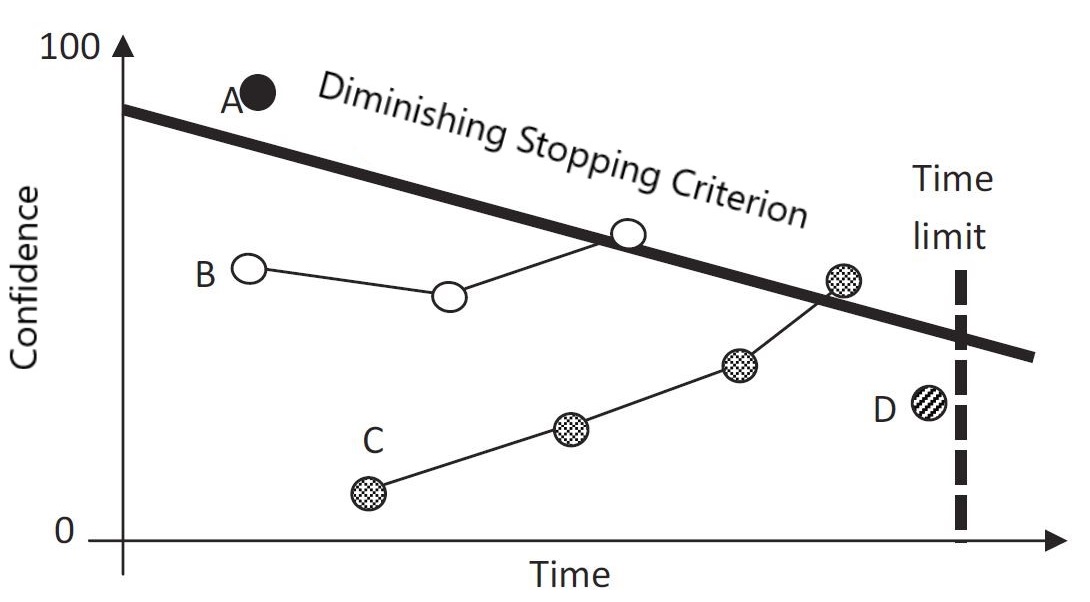}
	\caption[The Diminishing Criterion Model]{DCM with hypothetical confidence ratings for four items and a self-imposed time limit}
	\label{fig:DCM}
\end{wrapfigure}

To demonstrate DCM, consider Figure~\ref{fig:DCM}, which presents hypothetical confidence ratings while performing a schema matching task. Each dot in the figure represents a possible solution to a matching decision ({\em e.g.}, attributes $a_i$ and $b_j$ correspond), and its associated confidence, which changes over time. The search for a solution starts at time $t=0$ and the first dot for each of the four examples represents the first solution a human matcher reaches. As time passes, human matchers continuously evaluate their confidence. In case A, the matcher has a sufficiently high confidence after a short investigation, thus decides to accept it right away. In case B, a candidate correspondence is found quickly but fails to meet the sufficient confidence level. As time passes, together with more comprehensive examination, the confidence level (for the same or a different solution) becomes satisfactory (although the confidence value itself does not change much) and thus it is accepted. In Case C, no immediate candidate stands out, and even when found its confidence is too low to pass the confidence threshold. Therefore, a slow exploration is preformed until the confidence level is sufficiently high. In Case D, an unsatisfactory correspondence is found after a long search process, which fails to meet the stopping criterion before the individual deadline  passes. Thus, the human matcher decides to reject the correspondence.

\subsection{Consensuality Dimension}
\label{sec:consensualDimension}
While the temporal bias relates to the compromises an individual makes over time, the consensuality bias is concerned with multiple matchers and their agreement. Metacognitive studies suggest that a strong predictor for people's confidence is the frequency in which a particular answer is given by a group of people. 

Consensuality serves as a strong motivation to use crowd sourcing for matching and examines whether the number of people who chose a particular match can be used as a predictor of the chance of a match to be correct. Although consensuality in the cognitive literature does not ensure accuracy, the study in~\cite{ackerman2019cognitive} shows strong corrlation between the ampunt of agreement among matchers regarding a correspondence and its correctness. This can also support the use of majority vote to determine correctness. 

\subsection{Control Dimension}
\label{sec:controlDimension}
The third bias, which we ruled out in this work, analyzes the consistency of human matchers when provided the result of an algorithmic solution. 

Metacognitive control is the use of metacognitive monitoring (\emph{i.e.,} judgment of success) to alter behavior. Specifically, control decisions are the decisions pepole make based of their subjective self-assessment of their chance for success. In the context of the present study, the use of algorithmic output for helping the matcher in her task is taken as a control decision, as the human matcher chooses to reassure her decision based on her judgment. Variability in this dimension may be attributed to the predicted tendency of participants who do not use system suggestions to be more engaged in the task and recruit more mental effort than those who use suggestions as a way to ease their cognitive load.

\end{document}